\newcommand{\horeq}{~\hat{=}~} 
\newtheorem{theorem}{Theorem}
\newtheorem{definition}{Definition}
\newtheorem{lemma}{Lemma}
\newcommand{\be}{\begin{equation}}
\newcommand{\ee}{\end{equation}}
\theoremstyle{remark}
\newtheorem*{remark}{Remark}
\begin{document}

\preprint{APS/123-QED}

\title{The Entropy of Dynamical Black Holes}

\author{Stefan Hollands}
\email{stefan.hollands@uni-leipzig.de}
 
\affiliation{
Institute for Theoretical Physics\\
Leipzig University \\
Br\" uderstrasse 16, D-04103 Leipzig
}%
\affiliation{
Max-Planck-Institute MiS,\\
Inselstrasse 22, D-04103 Leipzig
}%

\author{Robert M. Wald}
\email{rmwa@uchicago.edu}
\author{Victor G. Zhang}%
 \email{victorzhang@uchicago.edu}
\affiliation{%
 Enrico Fermi Institute and Department of Physics\\
 University of Chicago \\
 933 E. 56th St., Chicago, IL 60637
}%

\date{\today}

\begin{abstract}
We propose a new formula for the entropy of a dynamical black hole---valid to leading order for perturbations off of a stationary black hole background---in an arbitrary classical diffeomorphism covariant Lagrangian theory of gravity in $n$ dimensions. In stationary eras, this formula agrees with the usual Noether charge formula, but in nonstationary eras, we obtain a nontrivial correction term. In particular, in general relativity, our formula for the entropy of a dynamical black hole differs from the standard Bekenstein--Hawking formula $A/4$ by a term involving the integral of the expansion of the null generators of the horizon. We show that, to leading perturbative order, our dynamical entropy in general relativity is equal to $1/4$ of the area of the apparent horizon. Our formula for entropy in a general theory of gravity is obtained from the requirement that a ``local physical process version'' of the first law of black hole thermodynamics hold for perturbations of a stationary black hole. It follows immediately that for first order perturbations sourced by external matter that satisfies the null energy condition, our entropy obeys the second law of black hole thermodynamics. For vacuum perturbations, the leading order change in entropy occurs at second order in perturbation theory, and the second law is obeyed at leading order if and only if the ``modified canonical energy flux'' is positive (as is the case in general relativity but presumably would not hold in more general theories of gravity). Our formula for the entropy of a dynamical black hole differs from a formula proposed independently by Dong and by Wall. We obtain the general relationship between their formula and ours. We then consider the generalized second law in semiclassical gravity for first order perturbations of a stationary black hole. We show that the validity of the quantum null energy condition (QNEC) on a Killing horizon is equivalent to the generalized second law using our notion of black hole entropy but using a modified notion of von Neumann entropy for matter. On the other hand, the generalized second law for the Dong--Wall entropy is equivalent to an integrated version of QNEC, using the unmodified von Neumann entropy for the entropy of matter.

\end{abstract}

\maketitle


\section{\label{sec:intro}Introduction}

The discoveries that black holes obey laws of thermodynamics \cite{Bardeen_1973}, radiate thermally \cite{Hawking_1975}, and should be assigned an entropy \cite{Bekenstein_1973} provide some of the deepest insights we presently have on the fundamental nature of black holes within a quantum theory of gravity. A formula for the entropy of a stationary black hole in an arbitrary theory of gravity obtained from a diffeomorphism covariant Lagrangian was obtained in \cite{IyerWald1994}. However, the derivation of \cite{IyerWald1994} required evaluation of the entropy on the bifurcation surface, $\mathcal B$, of the black hole, thus restricting the validity of the formula for entropy to stationary black holes and their linear perturbations, evaluated at the ``time'' represented by $\mathcal B$. It is of considerable interest to obtain an expression for the entropy of a non-stationary black hole in a general theory of gravity at a ``time'' represented by an arbitrary cross-section $\mathcal C$, since this would allow one to investigate whether, classically, black hole entropy satisfies a second law (i.e., whether it is non-decreasing with time) and whether, semiclassically, black hole entropy satisfies a generalized second law (i.e., whether the sum of black hole entropy and a matter contribution to entropy is non-decreasing).
A formula for dynamical black hole entropy in a general theory of gravity on an arbitrary cross-section $\mathcal C$ was proposed in \cite{IyerWald1994}. However, that proposed dynamical entropy formula is not field redefinition invariant, and the authors retracted their proposal in a ``note in proof'' in the published version of \cite{IyerWald1994}. Subsequently, proposed prescriptions for dynamical black hole entropy were given independently by Dong \cite{Dong2013} and Wall \cite{Wall2015}. The Dong and Wall entropy formulas agree in the cases where both have been evaluated and are expected to correspond generally,
although this has not been proven. We will refer to their proposal for dynamical black hole entropy as the Dong--Wall entropy. 

The main aim of this paper is to apply a new strategy to the definition of dynamical black hole entropy, based upon the validity of a local, ``physical process version'' of the first law of black hole mechanics. Since this law is valid only at leading nontrivial order in perturbation theory about a stationary black hole, our resulting definition of dynamical black hole entropy, $S[\mathcal C]$, on an arbitrary cross-section, $\mathcal C$, of the black hole horizon is intended to be applied only at leading nontrivial order in perturbation theory about a stationary black hole. Our formula does not agree with the Dong--Wall entropy, but, as we will elucidate in section \ref{dwent}, there is a close relationship between our formula and theirs. 
Furthermore, in the case of general relativity, we obtain a nontrivial dynamical correction to the Bekenstein--Hawking entropy formula; namely we obtain 
\be
S[\mathcal C] = \frac{A[\mathcal C]}{4} - \frac{1}{4} \int_{\mathcal C} V \vartheta
\label{degr}
\ee
where $A[\mathcal C]$ is the area of the cross-section $\mathcal C$ (so $A[\mathcal C]/4$ is the usual Bekenstein--Hawking entropy), $V$ is an affine parameter of the null generators of the horizon (with $V=0$ corresponding to the bifurcation surface $\mathcal B$), and $\vartheta$ is the expansion of these generators with respect to this affine parametrization.

In the remainder of this introductory section, we shall give an outline of our strategy that leads to our new dynamical entropy formula and explain some of the key features of our formula. To begin, we recall that the derivation of \cite{IyerWald1994} of the entropy of a stationary black hole follows from a ``fundamental identity'' (see \eqref{eq:fund_id} below) that holds in an arbitrary theory of gravity obtained from a diffeomorphism covariant Lagrangian. For source-free perturbations of a stationary black hole whose event horizon is a Killing horizon, the fundamental identity reduces to
\be
d[\delta \textbf{Q}[\xi] - \xi \cdot \boldsymbol{\theta}(\phi,\delta\phi)] = 0.
\label{fundidbh}
\ee
Here $\xi^a$ denotes the horizon Killing field, $\textbf{Q}[\xi]$ is the Noether charge associated with $\xi$, $\boldsymbol{\theta}$ denotes the symplectic potential, and $\phi$ collectively denotes all of the dynamical fields (i.e., the metric together with any matter fields appearing in the Lagrangian). Integration of this equation over a spacelike hypersurface $\Sigma$ that extends from the bifurcation surface $\mathcal B$ of the black hole to infinity yields
\be
\int_{\mathcal B} \delta \textbf{Q}[\xi]  = \delta M - \Omega_{\mathcal H} \delta J.
\ee
Here the right side is the boundary term from infinity resulting from integration of (\ref{fundidbh}) and it involves the mass, $M$, and angular momentum, $J$, at spatial infinity (with $\Omega_{\mathcal H}$ being angular velocity of the horizon, which enters the expression when one expresses $\xi^a$ in terms of asymptotic time translations and rotations). The left side is the boundary term at $\mathcal B$; very importantly, there is no contribution to this term from $\xi \cdot \boldsymbol{\theta}$ because $\xi^a = 0$ on $\mathcal B$. We define
\be
S \equiv \frac{2 \pi}{\kappa} \int_{\mathcal B} \textbf{Q}[\xi]
\label{bhent}
\ee
where $\kappa$ denotes the surface gravity of the black hole (so that $\kappa/2 \pi$ is the Hawking temperature of the black hole). It can be shown that $S$ is a local, geometrical quantity \cite{IyerWald1994}. We thereby obtain the first law of black hole mechanics for arbitrary source-free perturbations of a stationary black hole
\be
\frac{\kappa}{2 \pi} \delta S = \delta M - \Omega_{\mathcal H} \delta J
\label{1stlaw}.
\ee
This formula gives $S$ the interpretation of representing the entropy of the black hole, valid to first order for perturbations of a stationary black hole.

However, as already mentioned above, \eqref{bhent} has the deficiency that the horizon boundary term is evaluated at the bifurcation surface, $\mathcal B$, and, thus, $S$ has the interpretation of representing the entropy, $S[\mathcal B]$, of the black hole only at the ``time'' corresponding to $\mathcal B$. This deficiency is not of importance for a stationary black hole, since the entropy should be ``time independent,'' i.e., independent of the cross-section of the horizon of the black hole at which it is evaluated. However, for a non-stationary black hole, the entropy should evolve with time, and one would like to have an expression for the entropy, $S[\mathcal C]$, on an arbitrary cross-section, $\mathcal C$.

An expression for $S[\mathcal C]$ that would satisfy the first law of black hole mechanics would be obtained if we could define $S[\mathcal C]$ so that
\be
\delta S[\mathcal C] = \frac{2 \pi}{\kappa} \int_{\mathcal C} \delta \textbf{Q}[\xi] - \xi \cdot \boldsymbol{\theta}(\phi,\delta\phi). 
\ee 
If, on the event horizon $\mathcal H$, the pullback, $\underline{\boldsymbol{\theta}}$, of the symplectic potential $\boldsymbol{\theta}$ were of the form $\underline{\boldsymbol{\theta}} = \delta \textbf{B}_{\mathcal H}$
for some quantity $\textbf{B}_{\mathcal H}$ defined on $\mathcal H$, then the desired quantity $S[\mathcal C]$ could be defined by
\be
S[\mathcal C] = \frac{2 \pi}{\kappa} \int_{\mathcal C} \left(\textbf{Q}[\xi] - \xi \cdot \textbf{B}_{\mathcal H} \right) .
\label{dynent}
\ee
However, it is not possible for $\underline{\boldsymbol{\theta}}$ to be of the form of a total variation $ \delta \textbf{B}_{\mathcal H}$ on $\mathcal H$ in general because such a form would imply that the symplectic current flux through the horizon would vanish identically (since the symplectic current is the antisymmetrized second variation of $\boldsymbol{\theta}$), which is not the case. Indeed, it was for this reason that \cite{IyerWald1994} defined the entropy via \eqref{bhent} only on the bifurcation surface $\mathcal B$.

A similar issue arises for the definition of ADM mass, $M$, at spatial infinity with respect to an asymptotic time translation $t^a$. One wishes to define $M$ so that it is the Hamiltonian conjugate to $t^a$, i.e., so that it generates asymptotic time translations on phase space. This leads to the requirement that under asymptotically flat perturbations of any asymptotically flat background solution, we have \cite{IyerWald1994}
\be
\delta M = \int_{\infty} \delta \textbf{Q}[t] - t \cdot \boldsymbol{\theta}(\phi,\delta\phi) 
\label{deltaM}
\ee
where ``$\infty$'' denotes that the integral is taken over a sphere that approaches spatial infinity. However, in this case, as one approaches spatial infinity, the symplectic current goes to zero sufficiently rapidly that the symplectic current flux vanishes. Thus, there is no obstruction to finding a quantity $\textbf{B}_\infty$ such that to leading asymptotic order
\be
\boldsymbol{\theta} = \delta \textbf{B}_\infty
\label{Bspinf}
\ee
and such a $\textbf{B}_\infty$ can be explicitly found \cite{IyerWald1994}.
One may then define the ADM mass as
\be
M = \int_{\infty} \left(\textbf{Q}[t] - t \cdot \textbf{B}_\infty \right) 
\ee
and, thus, there is no difficulty in defining a quantity $M$ that satisfies the desired relation \eqref{deltaM}.

The problem of defining dynamical black hole entropy is much more analogous to the problem of defining the Bondi mass, $M_{\rm B}$, at null infinity $\mathcal I^+$. Here, the symplectic current flux does not, in general, vanish at null infinity, so an analog of (\ref{Bspinf}) cannot hold in general. Nevertheless, it was found in \cite{Wald2000} that one could find another symplectic potential 
$\boldsymbol{\theta}'(\phi,\delta\phi)$ for the pullback of the symplectic current to $\mathcal I^+$ with the property that $\boldsymbol{\theta}' = 0$ in a stationary background $\phi$. The authors of \cite{Wald2000} then defined $\textbf{B}_{\mathcal I^+}$ by 
\be
\underline{\boldsymbol{\theta}} - \boldsymbol{\theta}' = \delta \textbf{B}_{\mathcal I^+}
\label{wzreq}
\ee
where $\underline{\boldsymbol{\theta}}$ denotes the pullback of $\boldsymbol{\theta}$ to $\mathcal I^+$.
They then defined the Bondi mass relative to an asymptotic time translation $t^a$ on a cross-section, $\mathcal C$, of ${\mathcal I^+}$ by
\be
M_{\rm B} [\mathcal C] = \int_{\mathcal C} \left(\textbf{Q}[t] - t \cdot \textbf{B}_{\mathcal I^+} \right). 
\ee
In this case, $M_{\rm B}$ does not satisfy (\ref{deltaM}) in general---since, in general, there is no solution to this equation---but it does satisfy (\ref{deltaM}) for perturbations off of a stationary background spacetime.

In this paper, we will apply the strategy of \cite{Wald2000} to the definition of dynamical black hole entropy. We will prove that for first order perturbations of a stationary black hole, there exists a quantity $\textbf{B}_{\mathcal H}$ defined on the black hole horizon that satisfies\footnote{Note that this is equivalent to the Wald--Zoupas requirement for the existence of a $\boldsymbol{\theta}'$ that satisfies (\ref{wzreq}) and vanishes for a stationary background, since for a non-stationary background, one could define $\boldsymbol{\theta}' = \underline{\boldsymbol{\theta}} - \delta \textbf{B}_{\mathcal H}$.}
\be
\underline{\boldsymbol{\theta}} \horeq \delta \textbf{B}_{\mathcal H}
\label{thetaB}
\ee
where $\underline{\boldsymbol{\theta}}$ denotes the pullback of $\boldsymbol{\theta}$ to ${\mathcal H}$ and ``$\horeq$'' denotes that the equality holds only on the horizon $\mathcal H$.
This enables us to define dynamical black hole entropy via (\ref{dynent}). For first order vacuum perturbations of stationary black hole, it follows immediately from integration of (\ref{fundidbh}) over a hypersurface that extends from an arbitrary cross-section $\mathcal C$ of the horizon to spatial infinity that
\be
\frac{\kappa}{2 \pi} \delta S[\mathcal C] = \delta M - \Omega_{{\mathcal H}} \delta J, 
\label{1stlaw2}
\ee
i.e., the first law of black hole mechanics holds for an arbitrary cross-section. However, it also follows that $\delta S[\mathcal C] = \delta S[\mathcal B]$, i.e., for vacuum perturbations of a stationary black hole, the entropy is ``time independent'' to first order and is equal to the entropy of \cite{IyerWald1994}. In order to obtain a nontrivial time variation of black hole entropy, we must either work to second order in perturbation theory in the vacuum case or allow an external stress-energy, $\delta T_{ab}$, to be present in the first order perturbation. As will be discussed in subsection \ref{vacpert}, the results in the vacuum case at second order are closely analogous to the first order results with an external stress-energy, with the perturbed matter stress-energy $\delta T_{ab}$ replaced by the second order ``modified canonical energy'' \cite{Hollands_Wald_2013} of the gravitational field. For simplicity, we describe only the first order results with an external stress-energy in the discussion below.

When an external stress-energy $\delta T_{ab}$ is present, the fundamental identity (\ref{fundidbh}) on the black hole horizon becomes
\be
d[\delta \textbf{Q}[\xi] - \xi \cdot \boldsymbol{\theta}(\phi,\delta\phi)] = -\xi^a \delta \boldsymbol{\textbf{C}}_a
\label{fundidbh2}
\ee
where, restoring the form indices, $\delta \boldsymbol{\textbf{C}}_a$ is given by
\be
\delta {{C}}_{aa_1 \cdots a_{n-1}} = \delta T_{ae} {{\epsilon}^e}_{a_1 \cdots a_{n-1}}
\ee
where ${\epsilon}_{a_1 \cdots a_{n}}$ is the spacetime volume $n$-form. The change in the dynamical entropy with ``time'' (i.e., its dependence on the cross-section $\mathcal C$) can be seen by integrating (\ref{fundidbh2}) over the portion of the horizon, $\mathcal{H}_{12}$, bounded by cross-sections $ \mathcal C_1$ and $ \mathcal C_2$, where we take $\mathcal C_2$ to lie to the future of $\mathcal C_1$. Using $\underline{\boldsymbol{\theta}} = \delta \textbf{B}_{\mathcal H}$, we obtain
\be
\frac{\kappa}{2 \pi} \left(\delta S[\mathcal C_2] - \delta S[\mathcal C_1] \right)= \int_{\mathcal{H}_{12}} - \xi^a \delta \boldsymbol{\textbf{C}}_a = \int_{\mathcal{H}_{12}} \delta T_{ab} \xi^a k^b \sqrt{h} dV d^{n-2} x
\ee
where $k^a$ is the tangent to the affinely parametrized generators of the horizon. Thus, we obtain
\be
 \frac{\kappa}{2\pi} \Delta \delta S = \Delta \delta {E} 
\label{pp1st}
\ee
where $\Delta \delta S$ is the first order entropy difference between ``times'' $\mathcal C_2$ and $\mathcal C_1$, and $\Delta \delta {E}$ is the first order matter energy flux (relative to the horizon Killing field $\xi^a$) through the horizon between these ``times.'' Since $\kappa/2 \pi$ is the temperature of the black hole, this has exactly the form of the ``physical process'' version of the first law of black hole mechanics \cite{Gao_Wald_2001}. Equation \eqref{pp1st} is valid for first order perturbations with matter sources off of a stationary black hole in an arbitrary theory of gravity obtained from a diffeomorphism covariant Lagrangian. 

If the matter stress-energy satisfies the null energy condition, then $\delta T_{ab} \xi^a k^b \geq 0$. It then follows immediately from (\ref{pp1st}) that our definition of entropy satisfies the classical second law $\Delta \delta S \geq 0$ for first order perturbations. For source free first order perturbations, we have $\Delta \delta S = 0$, so the leading order change in entropy occurs at second order. As previously mentioned, we will show in subsection \ref{vacpert} that the entropy is non-decreasing with time at second order if and only if the modified canonical energy is non-negative at second order. This is the case in general relativity but would not be expected to hold in more general theories of gravity.

For the case of general relativity, the term $- \xi \cdot \textbf{B}_{\mathcal H}$ appearing in (\ref{dynent}) gives rise to the ``dynamical correction term'' to the Bekenstein--Hawking entropy given in (\ref{degr}). This correction term makes $S[\mathcal C]$ smaller than the Bekenstein--Hawking entropy. It follows immediately from (\ref{pp1st}) that, to first order, our entropy increases only when matter crosses the horizon. This contrasts with the Bekenstein--Hawking entropy, which increases before matter is thrown into the black hole since the event horizon is teleologically defined and moves outward in anticipation of matter being thrown in at a later time. We show in appendix \ref{sec:area_app_horizon} that, to first order, our entropy (\ref{degr}) is, in fact, the area of the apparent horizon corresponding to ``time'' $\mathcal C$, and thus can be determined locally, without knowledge of the future behavior of the spacetime.

As previously mentioned, our formula for dynamical black hole entropy differs from the Dong--Wall formula. However, we show in section \ref{dwent} that there is a close relationship between our notion of entropy and the Dong--Wall entropy $S_{\rm DW}$. If, for simplicity, we consider the entropy $S[V]$ evaluated on a cross-section of constant affine time $V$, then we have
\be
S[V] = S_{\rm DW}[V] - V \frac{d}{d V} S_{\rm DW}[V] .
\ee
In particular, we have
\be
\frac{d}{d V} S = - V\frac{d^2}{d V^2} S_{\rm DW}.
\ee
As we shall see in section \ref{gslq}, the fact that the first derivative of our entropy is related to the second derivative of the Dong--Wall entropy will play an important role in the analysis of the generalized second law and its relationship to the quantum null energy condition (QNEC). We will show that with our notion of entropy, the generalized second law is equivalent to QNEC, but with a dynamical correction to the von Neumann entropy. On the other hand, with the Dong--Wall notion of entropy, the generalized second law is equivalent to an integrated version of QNEC, using the unmodified notion of von Neumann entropy of matter.

In section \ref{lagkh}, we give the necessary background material on the Lagrangian formalism that we will be using throughout the paper and we also define our tetrad choice and specify our gauge conditions on the horizon. In section \ref{Bconstruction}, we prove that for perturbations of a stationary black hole whose event horizon is a bifurcate Killing horizon, there exists a quantity $\textbf{B}_{\mathcal H}$ such that $\underline{\boldsymbol{\theta}} \horeq \delta \textbf{B}_{\mathcal H}$. This enables us to give a general definition of dynamical black hole entropy and elucidate its properties in section \ref{dbesec}. We derive the physical process version of the first law of black hole mechanics in section \ref{pp12} for both external matter perturbations and vacuum perturbations and discuss its implications for the second law. The relationship of our notion of entropy with the Dong--Wall entropy is analyzed in section \ref{dwent}. The generalized second law and its relationship to QNEC is analyzed in section \ref{gslq} for both our notion of entropy and the Dong--Wall notion.

We will generally follow the conventions of \cite{Wald1984}. As we have already done above, we will use boldface to denote differential forms when the form indices are not explicitly written. Pullbacks of differential forms to the event horizon $\mathcal{H}$ will be denoted by underlining them as in \eqref{thetaB} above. The covariant spacetime volume element is written as $\epsilon_{a_1 \dots a_n}$, and our orientation conventions for the event horizon and horizon cross-sections are given by \eqref{eq:or_conv_pt_1} and \eqref{eq:or_conv_pt_2} below. We will use the notation ``$\horeq$'' to denote that an equality that holds only when both sides are restricted (but not necessarily pulled back) to $\mathcal{H}$, although we will often simply use ``$=$'' for equations involving quantities that are only defined on $\mathcal{H}$, where no confusion can arise. We also set $\hbar = 1 = G$.

\section{Lagrangian Formalism, Killing Horizons, Tetrad Choice, and Gauge Conditions}
\label{lagkh}

In this section, we review some basic definitions and constructions in the Lagrangian formalism, we specify a tetrad choice on the horizon, and we impose some gauge conditions on the horizon.

\subsection{\label{sec:iyer_wald_review} Lagrangians, Noether Charge, and the Fundamental Identity}

We consider an arbitrary diffeomorphism covariant theory of gravity in $n$-dimensions derived from a Lagrangian $n$-form $\textbf{L}$ of the form\footnote{
Due to the Bianchi-identities, not all of the derivatives
$\nabla_{(a_1} \cdots \nabla_{a_m)} R_{bcde}$ are algebraically independent at each point. An independent set 
is given e.g., by $\nabla_{(a_1} \cdots \nabla_{a_m} R^b{}_{cd)}{}^e$ \cite{muller1997closed}.}
\begin{align}
    \textbf{L}={}\textbf{L}(g_{ab}, R_{abcd}, \nabla_{a_1} R_{bcde}, \dots, \nabla_{(a_1} \cdots \nabla_{a_m)} R_{bcde}, \psi, \dots, \nabla_{(b_1} \cdots \nabla_{a_p)} \psi) \label{eq:form_of_Lag}
\end{align} where $g_{ab}$ is the spacetime metric, $\nabla$ denotes the derivative operator associated with $g_{ab}$, $R_{abcd}$ denotes the curvature of $g_{ab}$, and $\psi$ denotes any matter fields included in the Lagrangian. We shall assume that $\psi$ is a tensor field or fields; spinor fields can be treated via the approach developed in \cite{Prabhu_2017}. We collectively refer to the dynamical fields as $\phi = (g_{ab},\psi)$. It was shown in \cite{IyerWald1994} that any diffeomorphism covariant Lagrangian can be put in the form \eqref{eq:form_of_Lag}. Beginning in section \ref{Bconstruction}, we will restrict consideration to the case where matter fields, $\psi$, are not present in the Lagrangian, but we allow their presence for now.

The first variation of the Lagrangian can always be expressed in the form 
\begin{align}
    \delta \textbf{L} ={}& \textbf{E} \delta \phi + d \boldsymbol{\theta} \label{eq:varied_Lagrangian_form}
\end{align} where $\textbf{E}$ is locally constructed out of $\phi$ and its derivatives and $\boldsymbol{\theta}(\phi,\delta\phi)$ is locally constructed out of $\phi$, $\delta \phi$, and their derivatives and is linear in $\delta \phi$. In \eqref{eq:varied_Lagrangian_form}, contraction of the tensor indices of $\phi$ with the corresponding dual tensor indices of $\textbf{E}$ is understood. The Euler--Lagrange equations of motion obtained from $\textbf{L}$ are $\textbf{E} = 0$. The $(n-1)$-form $\boldsymbol{\theta}$ plays the role of a \textit{symplectic potential} \cite{IyerWald1994} in that the symplectic current $(n-1)$-form is obtained from it via\footnote{Here we assume that the field variations $\delta_1\phi$ and $\delta_2\phi$ arise from a two-parameter variation $\phi(\lambda_1, \lambda_2)$ and thus commute. Note that \eqref{symcur} uses the opposite sign convention of \cite{IyerWald1994}, but the same sign convention as \cite{Hollands_Wald_2013}.} 
\begin{align}
    \boldsymbol{\omega}(\phi,\delta_1 \phi,\delta_2 \phi) ={}& \delta_1 \boldsymbol{\theta}(\phi,\delta_2\phi) - \delta_2 \boldsymbol{\theta}(\phi,\delta_1 \phi).
    \label{symcur}
\end{align}  
The symplectic form $W$ is obtained by integrating the symplectic current over a Cauchy surface, $\Sigma$, of the spacetime
\begin{align}
    W_\Sigma(\phi, \delta_1 \phi, \delta_2\phi) ={}& \int_\Sigma \boldsymbol{\omega}(\phi,\delta_1 \phi, \delta_2 \phi) \, .
\end{align}

Let $\chi^a$ be an arbitrary vector field on the spacetime $\mathcal M$. Since $\chi^a$ is the infinitesimal generator of a diffeomorphism, i.e., a local symmetry of $\textbf{L}$, there is an associated Noether current $(n-1)$-form $\textbf{J}$ defined by
\begin{align}
    \textbf{J}(\phi)={}& \boldsymbol{\theta}(\phi,\mathscr{L}_\chi \phi) - \chi \cdot \textbf{L}(\phi).
    \label{noecur}
\end{align} 
where $\mathscr{L}_\chi $ denotes the Lie derivative with respect to $\chi^a$ and we use the notation ``$\cdot$'' to denote the contraction of a vector field into the first index of a differential form. This definition of $\textbf{J}$ holds for any field configuration $\phi$, i.e., $\phi$ need not be a solution to the field equations, $\textbf{E} = 0$. It can be shown that the Noether current can be written in the form \cite{IyerWald1995}
\begin{align}
    \textbf{J} ={}& d \textbf{Q}[\chi] + \chi^a \boldsymbol{\textbf{C}}_a \, . \label{eq:def_noe_curr_charge}
\end{align} 
Here, the $(n-2)$-form \textbf{Q} is referred to as the ``Noether charge'' and the dual vector valued $(n-1)$-form $\boldsymbol{\textbf{C}}_a$ vanishes when the equations of motion are satisfied and is referred to as the ``constraints'' of the theory associated with diffeomorphism invariance. The Noether charge takes the general form \cite{IyerWald1994}
\begin{align}
    \textbf{Q}[\chi]={}& \textbf{W}_c(\phi) \chi^c + \textbf{X}^{cd}(\phi) \nabla_{[c} \chi_{d]} + \textbf{Y}(\phi,\mathscr{L}_\chi \phi) + d\textbf{Z}(\phi, \chi) \label{eq:noether_charge_decomp}
\end{align} 
where $\textbf{W}_c$, $\textbf{X}^{cd}$, $\textbf{Y}$, and $\textbf{Z}$ are covariant quantities, with $\textbf{W}_c$ and $\textbf{X}^{cd}$ locally constructed from $\phi$, and $\textbf{Y}$ and $\textbf{Z}$ locally constructed from the indicated fields, with $\textbf{Y}$ linear in $\mathscr{L}_\chi \phi$ and $\textbf{Z}$ linear in $\chi$.
For the case where the matter field $\psi$ is a single tensor field $\tensor{A}{^{a_1 \cdots a_k}_{b_1 \cdots b_l}}$ of type $(k,l)$, the constraints take the explicit form \cite{SeifertWald2007}
\begin{align}
    {}& {{C}}_{a a_1 \cdots a_{n-1}} \nonumber \\
    ={}&  {\epsilon}_{c a_1 \cdots a_{n-1}}  \left[2 \tensor{( {E}_G)}{^c_a} - \sum \tensor{A}{^{{d_1} \cdots {d_k}}_{b_1 \cdots a \cdots b_l}} \tensor{( {E}_M)}{_{d_1 \cdots d_k}^{b_1 \cdots c \cdots b_l}} + \sum \tensor{A}{^{d_1 \cdots c \cdots d_k}_{b_1 \cdots b_l}} \tensor{({E}_M)}{_{d_1 \cdots a \cdots d_s}^{b_1 \cdots b_l}} \right], 
    \label{eq:constraints_eom}
\end{align}
where ${\epsilon}_{a_1 \cdots a_n}$ is the spacetime volume form, $({E}_G)_{ab}=0$ are the equations of motion for $g^{ab}$ (obtained by varying the Lagrangian with respect to $g^{ab}$) and $\tensor{({E}_M)}{_{a_1 \cdots a_k}^{b_1 \cdots b_l}}$ are the equations of motion for the matter field $\tensor{A}{^{a_1 \cdots a_k}_{b_1 \cdots b_l}}$. The summations in \eqref{eq:constraints_eom} run over the possible index positions of $a$ and $c$. The generalization of \eqref{eq:constraints_eom} to more than one matter field is straightforward.

Taking the first variation of \eqref{noecur} (taking the vector field $\chi^a$ to be fixed) and using \eqref{eq:varied_Lagrangian_form} and \eqref{symcur}, we obtain \cite{IyerWald1994} 
\begin{align}
    \delta \textbf{J}(\phi) ={}& -\chi \cdot (\textbf{E}(\phi) \delta \phi) + \boldsymbol{\omega}(\phi; \delta \phi, \mathscr{L}_\chi \phi) + d [\chi \cdot \boldsymbol{\theta}(\phi,\delta \phi)], \label{eq:first_var_noe_curr}
\end{align} 
Equating this to the first variation of \eqref{eq:def_noe_curr_charge}, we obtain
\begin{align}
    \boldsymbol{\omega}(\phi, \delta \phi, \mathscr{L}_\chi \phi) ={}& \chi \cdot (\textbf{E}(\phi)\delta \phi) + \chi^a \delta \boldsymbol{\textbf{C}}_a(\phi) + d(\delta \textbf{Q}[\chi] - \chi \cdot \boldsymbol{\theta}(\phi, \delta \phi)]). 
    \label{eq:fund_id}
\end{align}
 Following Hollands and Wald \cite{Hollands_Wald_2013}, we shall refer to this equation as the \textit{fundamental identity}. The fundamental identity holds for arbitrary $\phi$ and $\delta \phi$, i.e., they need not satisfy the equations of motion or linearized equations of motion. For the case where $\chi^a$ is a Killing field of the background $\phi$ so that $\mathscr{L}_\chi \phi = 0$, a second variation of \eqref{eq:fund_id} yields
\begin{align}
    \begin{split}
    \boldsymbol{\omega}(\phi,\delta\phi, \mathscr{L}_\chi \delta \phi) ={}& \chi \cdot (\delta\textbf{E}(\phi) \delta\phi) + \chi \cdot (\textbf{E}(\phi) \delta^2 \phi) + \chi^a \delta^2 \boldsymbol{\textbf{C}}_a(\phi) + d(\delta^2 \textbf{Q}[\chi] - \chi \cdot \delta \boldsymbol{\theta}(\phi,\delta\phi)). \label{eq:fund_id_var}
    \end{split}
\end{align}

 \subsection{Tetrad Choice on a Bifurcate Killing Horizon}
 \label{sec:aux_vf_n}

We are interested in this paper in first and second order perturbations of spacetimes that contain a black hole whose event horizon is a bifurcate Killing horizon. By definition, a Killing horizon $\mathcal H$ is a null surface to which a Killing field $\xi^a$ is normal. Thus, $\xi^a$ is tangent to the null geodesic generators of $\mathcal H$. The surface gravity, $\kappa$, of $\mathcal H$ is defined by 
\be
\xi^b \nabla_b \xi^a = \kappa \xi^a
\label{surgra}
\ee
so $\kappa$ measures the failure of Killing parametrization of the null generators to be an affine parametrization. It is necessary for $\kappa$ to be constant in order that there not be a parallelly propagated curvature singularity on $\mathcal H$ \cite{racz1992extensions}. We shall consider here only the case where $\kappa$ is constant and $\kappa \neq 0$, in which case $\mathcal H$ corresponds to a bifurcate Killing horizon \cite{Racz_Wald_1996}. 

We wish to use a tetrad basis on $\mathcal H$ to the future of the bifurcation surface $\mathcal B$ that is Lie transported by the Killing field. One way of doing this would be to introduce Gaussian null coordinates, using the Killing parameter $v$ as one of the coordinates (see, e.g., \cite{RahmanWald2020}). However, the introduction of Gaussian null coordinates requires an arbitrary choice of a cross-section, $\mathcal C$, of $\mathcal H$, and this would raise questions in our subsequent analysis as to the extent to which our constructions depend upon our choice of $\mathcal C$. Instead, we will introduce a tetrad that is constructed only from $\xi^a$---since our constructions will, in any case, depend upon $\xi^a$. 

We do so by introducing a vector field $N^a$ on $\mathcal H$ that satisfies
\begin{eqnarray}
\nabla_{a} \xi_{b} &\horeq& 2\kappa N_{[a} \xi_{b]} \label{Nprop1} \\
N^a N_a &\horeq& 0 \label{Nprop2}
\end{eqnarray}
where $\horeq $ denotes equality on $\mathcal H$. That there exists a unique $N^a$ satisfying these conditions can be seen as follows: From the hypersurface orthogonality of $\xi^a$ on $\mathcal H$ and Frobenius's theorem, there exists a vector field $w^a$ on $\mathcal{H}$ such that 
\be
\nabla_{[a} \xi_{b]} \horeq  2 w_{[a} \xi_{b]}
\label{frob}
\ee 
Since $\xi^a$ is a Killing field, we have $\nabla_{[a} \xi_{b]}= \nabla_{a} \xi_{b}$ and it follows from \eqref{surgra} that 
\be
\xi^a w_a \horeq  \kappa
\ee 
so, in particular, $w_a \neq 0$.
We define
\be
N^a \horeq (\xi^b w_b)^{-1} w^a + c \xi^a = \frac{1}{ \kappa}  w^a + c \xi^a
\label{Nform}
\ee
with
\be
c = -\frac{1}{2 (\xi^b w_b)^2} w^a w_a = -\frac{1}{2 \kappa^2} w^a w_a
\label{cform}
\ee
Then it is easily seen that $N^a$ satisfies \eqref{Nprop1} and \eqref{Nprop2}. Uniqueness follows from the fact that \eqref{Nprop1} uniquely determines $N^a$ up to addition of a multiple of $\xi^a$ at each point and \eqref{Nprop2} then fixes that multiple. Note that if we contract \eqref{Nprop1} with $\xi^a$ and use \eqref{surgra}, we find that
$N^a \xi_a \horeq 1$. Thus, $N^a$ is a past-directed null vector. Note also that $N^a$ must be invariant under the action of the isometries generated by $\xi^a$, since this action preserves \eqref{Nprop1} and \eqref{Nprop2} but $N^a$ is unique. Thus, we have $\mathscr{L}_\xi N^a \horeq 0$. 

It is useful to complete $\xi^a$ and $N^a$ to a basis of the tangent space on $\mathcal H$ by introducing vectors $s_i^a$ ($i=1,\cdots, n-2$) on $\mathcal H$ such that $s_i^a \xi_a = 0 = s_i^a N_a$, $s_i^a s_{ja} = \delta_{ij}$ and $\mathscr{L}_\xi s_i^a \Big|_{\mathcal{H}}= 0$. This can be done by choosing $s_i^a$ on a cross-section\footnote{Note that there is no reason why there need exist a cross-section to which $N^a$ is orthogonal, so, unlike a Gaussian null coordinate basis, $s_i^a$ is not assumed to be tangent to the cross-section. It may not be possible to make global choices of $s_i^a$ on a cross-section, in which case our construction would have to be done in patches. Different choices of $s_i^a$ will merely differ by rotations in the $(n-2)$-plane orthogonal to $\xi^a$ and $N^a$, and it will be manifest that our results below will not depend upon the choice of $s_i^a$.} so as to satisfy these properties and then Lie-transporting $s_i^a$ along $\xi^a$.

It is important to understand the behavior of our tetrad $\{\xi^a, N^a, s_i^a \}$ as one approaches the bifurcation surface $\mathcal B$. This is most conveniently analyzed by introducing Gaussian null coordinates based on a choice of affine parameter $V$ of the null generators of $\mathcal H$ (see, e.g., \cite{Hollands_2022})---with $V=0$ corresponding to the bifurcation surface $\mathcal B$---since such coordinates are smooth at $\mathcal B$. Since we require $V=0$ at $\mathcal B$, the choice of $V$ is unique up to $V \to fV$, where $f$ may vary from generator to generator but is constant on each generator. The results below hold for any choice of $V$. By \eqref{surgra}, the relationship between Killing parameter $v$ and affine parameter $V$ is
\be
V = e^{\kappa v}
\label{kilpar}
\ee
where the (generator dependent) scaling freedom of $V$ corresponds to the (generator dependent) freedom in the choice of origin of $v$. Thus, we have
\be
\xi^a = \kappa e^{\kappa v} k^a = \kappa V k^a
\label{afftang}
\ee
where $k^a$ is tangent to the affinely parametrized geodesics and thus is smooth and nonvanishing on $\mathcal B$. Thus, we see that ${\xi^a = O(V) }$ at $V=0$. Similarly, $N^a = O(V^{-1})$ at $V=0$, i.e., the components of $N^a$ in Gaussian null coordinates blow up as $1/V$ as $V \to 0$. Finally, the tetrad vectors $s_i^a$ smoothly extend to $\mathcal B$. To see this, we note that 
\be
\mathscr{L}_k s_i^a = \frac{1}{\kappa} e^{-\kappa v} \mathscr{L}_\xi s_i^a + e^{-\kappa v} \xi^a s_i^b \nabla_b v =  \kappa k^a s_i^b \nabla_b v
\ee
Since ${\mathscr L}_\xi [\kappa s_i^b \nabla_b v] = 0$, the right side is smooth, which implies that $s_i^a$ smoothly extends to $\mathcal B$. 

The following is an important consequence of the results of the previous paragraph: Suppose that in a spacetime with a bifurcate Killing horizon we have a tensor field ${S^{a_1 \cdots a_k}}_{b_1 \cdots b_l}$ that is Lie derived by $\xi^a$ and is smooth on ${\mathcal H} \setminus {\mathcal B}$. We can expand ${S^{a_1 \cdots a_k}}_{b_1 \cdots b_l}$ on ${\mathcal H} \setminus {\mathcal B}$ in the basis $\{\xi^a, N^a, s_i^a \}$. Since both ${S^{a_1 \cdots a_k}}_{b_1 \cdots b_l}$ and the basis elements are Lie derived by $\xi^a$, the coefficients appearing in the basis expansion must also be Lie derived by $\xi^a$. It follows immediately that if any nonvanishing term in the basis expansion has strictly more $N^a$'s than $\xi^a$'s, then ${S^{a_1 \cdots a_k}}_{b_1 \cdots b_l}$ cannot be extended to $\mathcal B$. Indeed, in smooth coordinates covering $\mathcal B$, some components of ${S^{a_1 \cdots a_k}}_{b_1 \cdots b_l}$ will blow up as $1/V^p$, where $p$ is the maximum of the number of $N^a$'s minus the number of $\xi^a$'s appearing in any individual nonvanishing term in the basis expansion. On the other hand, if all of the nonvanishing terms in the basis expansion of ${S^{a_1 \cdots a_k}}_{b_1 \cdots b_l}$ have at least as many $\xi^a$'s as $N^a$'s, then ${S^{a_1 \cdots a_k}}_{b_1 \cdots b_l}$ can be smoothly extended to $\mathcal B$.

\subsection{Gauge Conditions on Perturbations}
\label{sec:gauge_orient}

In the remainder of this paper, we will be concerned with perturbations, $\delta g_{ab}$, of a black hole with a bifurcate Killing horizon. Without loss of generality, we will assume that the true event horizon of the perturbed spacetime coincides with $\mathcal H$. We take $\xi^a$ to be fixed under the variation, i.e., we take $\delta \xi^a = 0$. We will impose the following two gauge conditions on $\delta g_{ab}$ at the event horizon: 
\begin{eqnarray}
\xi^a \delta g_{ab} &\horeq& 0 \label{gaugecon1} \\
\nabla_a (\xi^b \xi^c \delta g_{bc}) &\horeq& 0 .
\label{gaugecon2}
\end{eqnarray}
Both of the above gauge conditions can be imposed without any loss of generality. This is most easily seen from the fact that they automatically hold for any perturbation in Gaussian null coordinates based on the Killing parameter $v$ (see \eqref{gnck} below). However, we will use only the above two gauge conditions in our analysis below, e.g., we need not assume that the full Gaussian null gauge conditions are imposed.

Equation \eqref{gaugecon1} states that $\xi^a$ remains the null normal to $\mathcal H$ under perturbations, since the condition implies that $\xi^a$ remains null and orthogonal to any vector tangent to $\mathcal H$. Equation \eqref{gaugecon2} implies that 
\be
\delta \left(\xi^b \nabla_b \xi^a \right) = \xi^b \delta {\Gamma^a}_{bc} \xi^c \horeq - \frac{1}{2} \xi^b \xi^c \nabla^a \delta g_{bc} \horeq - \frac{1}{2} \nabla^a \left(\xi^b \xi^c \delta g_{bc} \right) \horeq 0
\ee
where \eqref{gaugecon1} was used to get the second and third equalities. Comparing with \eqref{surgra}, we see that the second condition is equivalent to 
\be
\delta \kappa = 0
\ee
i.e., the surface gravity of $\xi^a$ on $\mathcal H$ does not change under the perturbation.

If matter fields $\psi$ are present in the Lagrangian, then the results we will obtain below can be straightforwardly generalized if additional conditions are imposed on 
their perturbations, $\delta \psi$. In particular, if an electromagnetic field is present, our results can be straightforwardly generalized if the condition $\xi^a \delta A_a \horeq 0$ is imposed, which can always be achieved without loss of generality by an electromagnetic gauge transformation. 
However, very recently, Wall and Yan \cite{Wall_Yan_2024} have extended the definition of the Dong--Wall entropy to Proca fields without imposing additional restrictions on the Proca field. Furthermore, even more recently, 
Visser and Yan \cite{Visser:2024pwz} have shown that the results of the next section can be generalized to the case where general matter fields are present, without imposing additional restrictions on the perturbations. Thus, we believe that it should be possible to extend our analysis to the case where additional matter fields are present. Nevertheless, for the remainder of this paper, we shall restrict consideration to the case where the metric is the only dynamical field in the Lagrangian, i.e., no additional dynamical fields $\psi$ are present.

\section{The Pullback of the Symplectic Potential is a Total Variation for Perturbations of a Stationary Black Hole}
\label{Bconstruction}

In this section, we consider perturbations of a stationary black hole whose event horizon is a bifurcate Killing horizon. We assume that the perturbations are everywhere smooth and satisfy the gauge conditions \eqref{gaugecon1} and \eqref{gaugecon2}. However, we need not assume that the background spacetime metric $g_{ab}$, is a solution to the field equations, nor do we need to assume that the perturbed metric, $\delta g_{ab}$ is a solution to the linearized field equations. We will prove that there exists a quantity $\textbf{B}_{\mathcal H}$ defined on the black hole horizon that is locally and covariantly constructed from the metric and\footnote{The vector field $N^a$ on $\mathcal H$ will also appear in our expression for $\textbf{B}_{\mathcal H}$. However, $N^a$ on $\mathcal H$ is locally and covariantly determined from the metric and $\xi^a$ via \eqref{Nprop1} and \eqref{Nprop2}.} $\xi^a$
such that the pullback, $\underline{\boldsymbol{\theta}}$, to $\mathcal H$ of the symplectic potential $\boldsymbol{\theta}(g_{ab}, \delta g_{ab})$ is given by
\be
\underline{\boldsymbol{\theta}} \horeq \delta \textbf{B}_{\mathcal H}.
\label{sympotvar}
\ee
As explained in the Introduction, \eqref{sympotvar} will enable us to define dynamical black hole entropy, and we shall then do so in the next section. As stated at the end of the previous section, we restrict consideration here to the vacuum case, but very recently, Visser and Yan \cite{Visser:2024pwz} have shown that \eqref{sympotvar} holds in the non-vacuum case as well. Their expression for $\textbf{B}_{\mathcal H}$ is not covariant in the metric and $\xi^a$, but a covariant expression can be obtained using methods similar to those used in the proof of Theorem \ref{prop:sym_pot_vanish_stat_bgd_pert} below.

The horizon $\mathcal H$ is an $(n-1)$-dimensional null surface. Since the pullback of the spacetime metric to $\mathcal H$ is degenerate, it does not induce a unique volume form on $\mathcal H$. However, we can define a volume form $\boldsymbol{\epsilon}^{(n-1)}_{a_1 \cdots a_{n-1}}$ on $\mathcal H$ that is Lie derived by $\xi^a$ in the background spacetime by
\be
-n \xi_{[a_1}{\epsilon}^{(n-1)}_{a_2 \cdots a_n]} = {\epsilon}_{a_1 a_2 \cdots a_n}^{}
\label{eq:or_conv_pt_1}
\ee
where ${\epsilon}_{a_1 a_2 \cdots a_n}$ is the spacetime volume form. It is also convenient to define an $(n-2)$-form ${\epsilon}^{(n-2)}_{a_1 \cdots a_{n-2}}$ on $\mathcal H$ by
\be
{\epsilon}^{(n-2)}_{a_1 \cdots a_{n-2}} = \xi^c {\epsilon}^{(n-1)}_{ca_1 \cdots a_{n-2}} 
\label{eq:or_conv_pt_2}
\ee
The pullback of $\boldsymbol{\epsilon}^{(n-2)}$ to any cross-section, $\mathcal C$, of $\mathcal H$ will then agree with the volume element on $\mathcal C$ obtained from the nondegenerate metric on $\mathcal C$ obtained from the pullback of the spacetime metric. 

The symplectic potential $\boldsymbol{\theta}$ is an $(n-1)$-form on spacetime, so its pullback $\underline{\boldsymbol{\theta}}$ to $\mathcal H$ must be proportional to ${\epsilon}^{(n-1)}_{a_1 \cdots a_{n-1}}$. It follows from \eqref{eq:or_conv_pt_1} that we have
\be
\underline{\boldsymbol{\theta}} \horeq  \alpha^a  \xi_a \boldsymbol{\epsilon}^{(n-1)}
\label{sympotpull}
\ee
where we have defined 
\be
\alpha^a \equiv \frac{1}{(n-1)!} {\theta}_{a_1 \cdots a_{n-1}}  {\epsilon}^{a a_1 \cdots a_{n-1}}  .
\label{sympotpull2}
\ee
Our results on the form of $\underline{\boldsymbol{\theta}}$ will be a direct consequence of the following theorem.

\begin{theorem}
\label{prop:sym_pot_vanish_stat_bgd_pert}
Let $(\mathcal M, g_{ab})$ be an n-dimensional spacetime with a bifurcate Killing horizon $\mathcal H$ with horizon Killing field $\xi^a$. Let $\delta g_{ab}$ be an arbitrary smooth perturbation of $g_{ab}$ satisfying the gauge conditions \eqref{gaugecon1} and \eqref{gaugecon2}. Let $\alpha^a$ be a vector field on $\mathcal M$ of the form
\begin{align}
    \alpha^a ={}& \sum_{i=0}^{k} T_{(i)}^{a b_1 \cdots b_i cd}\nabla_{(b_1} \cdots \nabla_{b_i)} \delta g_{cd}, \label{eq:alpha_gen_form}
\end{align}
where the the tensors $T_{(i)}^{a b_1 \cdots b_i cd} = T_{(i)}^{a (b_1 \cdots b_i) (cd)} $ are smooth and are locally and covariantly constructed from the metric, curvature, and covariant derivatives of the curvature. Then on the horizon, the scalar function $\alpha_c \xi^c$ takes the form
\begin{align}
    \alpha_c \xi^c \horeq{}& \sum_{i=0}^{k-1} \tilde{T}_{(i)}^{b_1 \cdots b_i cd} \nabla_{(b_1} \cdots \nabla_{b_i)} \mathscr{L}_\xi \delta g_{cd} \label{eq:alpha_xi_final_form}
\end{align} 
where the tensors $\tilde{T}_{(i)}^{b_1 \cdots b_i cd}$ are smooth on $\mathcal H$ and are locally and covariantly constructed from the metric, curvature, and its derivatives as well as from $\xi^a$ and $N^a$ (see \eqref{Nprop1} and \eqref{Nprop2}), with $\xi^a$ and $N^a$ appearing only algebraically.
\end{theorem}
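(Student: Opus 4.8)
The plan is to prove the statement by contracting the free index of $\alpha^a$ with $\xi_a$, decomposing everything in the horizon tetrad $\{\xi^a,N^a,s_i^a\}$ of Section~\ref{sec:aux_vf_n}, and then running a descent argument driven by a boost-weight count that is powered by the smoothness of the background tensors at the bifurcation surface $\mathcal{B}$. First I would absorb the contraction into the coefficients, writing $\alpha_c\xi^c \horeq \sum_{i=0}^{k} U_{(i)}^{b_1\cdots b_i cd}\nabla_{(b_1}\cdots\nabla_{b_i)}\delta g_{cd}$ with $U_{(i)}^{b_1\cdots b_i cd} := \xi_a T_{(i)}^{ab_1\cdots b_i cd}$. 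The crucial first observation is that each $T_{(i)}$ is built solely from the background metric and its curvature, hence is invariant under the isometry generated by $\xi^a$ and is smooth on $\mathcal{H}$ up to and including $\mathcal{B}$. By the smoothness criterion established at the end of Section~\ref{sec:aux_vf_n} (using $\xi^a = O(V)$ and $N^a = O(V^{-1})$), every term in the tetrad expansion of $T_{(i)}$ must contain at least as many contravariant factors of $\xi$ as of $N$. Since contracting with $\xi_a$ annihilates the $\xi^a$ and $s_i^a$ parts of the $a$-slot and converts an $N^a$ into unity (because $N^a\xi_a \horeq 1$), each term of $U_{(i)}$ carries \emph{strictly} more contravariant $\xi$'s than $N$'s; equivalently, there is always at least one ``unpaired'' factor $\xi^{b_j}$, $\xi^c$, or $\xi^d$.

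The heart of the proof is then a descent that tracks this unpaired $\xi$. If it occupies a derivative slot, I would use $\nabla_a\xi_b \horeq \kappa(N_a\xi_b - N_b\xi_a)$ together with the fact that $\mathscr{L}_\xi$ commutes with $\nabla$ (since $\xi^a$ is Killing) to rewrite $\xi^{b_j}\nabla_{b_j}(\cdots)$ as $\mathscr{L}_\xi(\cdots)$ plus connection terms carrying explicit (algebraic) $\xi$ and $N$; pulling $\mathscr{L}_\xi$ out past the remaining derivatives then produces exactly a term of the target form $\tilde T_{(i)}\,\nabla_{(b_1}\cdots\nabla_{b_i)}\mathscr{L}_\xi\delta g_{cd}$ with $i\le k-1$, the derivative order having dropped by one. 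If instead the unpaired $\xi$ sits on a $\delta g$-index, I would commute it outward through the derivative string: the fully commuted ``boundary'' term is $\xi^c\delta g_{cd}\horeq 0$ by \eqref{gaugecon1} (and, when two $\xi$'s reach $\delta g$ and are struck by a transverse derivative, the offending term is annihilated by \eqref{gaugecon2}), while each connection term either reinstalls a $\xi$ on a derivative slot or trades the $\xi$-index for an $N$-index while simultaneously freeing up a derivative index. Curvature terms generated by the commutators $[\nabla,\nabla]$ are background tensors of reduced derivative order and get absorbed into the lower-$i$ coefficients. Iterating — equivalently, inducting on $k$ — drives every contribution into the claimed form, with $\xi$ and $N$ entering the $\tilde T_{(i)}$ only algebraically.

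I expect the main obstacle to be the bookkeeping that guarantees the descent never terminates on a ``bad'' residue, i.e.\ on a transverse ($N$- or $s$-directed) derivative of $\delta g$ that is not shielded by a Lie derivative, or on a purely algebraic term in $\delta g$ that is not of $\mathscr{L}_\xi\delta g$ form. The key point to nail down is that the invariant ``every surviving term carries strictly more contravariant $\xi$'s than $N$'s'' — which holds immediately after the $\xi_a$-contraction precisely because of smoothness at $\mathcal{B}$ — is \emph{preserved} at each step of pushing $\xi$ through the symmetrized derivatives and of absorbing curvature commutators. This invariant is what forces every would-be bad residue to remain reducible: its leftover $\xi$ is eventually pushed onto an undifferentiated $\delta g$-index and killed by the gauge conditions, so no transverse derivative of bare $\delta g$ can survive. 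Organizing the symmetrization of the higher derivatives and the systematic reclassification of the $[\nabla,\nabla]$-curvature terms into lower-order $\tilde T_{(i)}$ is the technical core of the argument.
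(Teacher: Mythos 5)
Your proposal is correct and follows essentially the same route as the paper's proof: the tetrad expansion $\{\xi^a,N^a,s_i^a\}$ with the smoothness-at-$\mathcal B$ counting of $\xi$'s versus $N$'s, the case split between an unpaired $\xi$ on a derivative slot (converted to $\mathscr{L}_\xi$ using the Killing property) and on a $\delta g$ slot (killed by the gauge conditions), and a descent in derivative order closed by induction. The invariant you single out — strictly more contravariant $\xi$'s than $N$'s, preserved under each reduction — is exactly the bookkeeping device the paper relies on, e.g.\ to guarantee that at most one transverse derivative can strike $\xi^c\xi^d\delta g_{cd}$, so that \eqref{gaugecon2} suffices to annihilate that term.
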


\begin{proof} 
To begin, we focus on the tensor field $T_{(k)}^{a b_1 \cdots b_i cd}$ on $\mathcal H$ appearing in the highest derivative term in the expansion \eqref{eq:alpha_gen_form} of $\alpha^a$. Our aim is to show that $\xi_a T_{(k)}^{a b_1 \cdots b_i cd}\nabla_{(b_1} \cdots \nabla_{b_k)} \delta g_{cd}$ can be written in the form $\tilde{T}_{(k-1)}^{b_1 \cdots b_{k-1} cd}\nabla_{(b_1} \cdots \nabla_{b_{k-1})} {\mathscr L}_ \xi \delta g_{cd}$ plus lower derivative terms of essentially the same character as appeared in the original expression for  $\xi_a \alpha^a$. This will enable us to make an inductive argument to prove the theorem.

By the hypothesis of the theorem, the tensor field $T_{(k)}^{a b_1 \cdots b_i cd}$ is smooth on $\mathcal H$, including at the bifurcation surface $\mathcal B$. Since $\xi^a$ is a Killing field of the background spacetime, $T_{(k)}^{a b_1 \cdots b_k cd}$ is Lie derived by $\xi^a$. Therefore, as shown in the last paragraph of subsection \ref{sec:aux_vf_n}, if we expand it in the basis $\{\xi^a, N^a, s_i^a \}$, the only nonvanishing terms in the basis expansion will have at least as many factors of $\xi^a$ as $N^a$. If we then contract $\alpha^a$ with $\xi^a$, we will eliminate one factor of $N^a$ from each term in the basis expansion, so each nonvanishing term will have at least one more factor of $\xi^a$ than $N^a$. More explicitly, the basis expansion of $\xi_a T_{(k)}^{a b_1 \cdots b_i cd}$ takes the form
\begin{align}
\begin{split}
    \xi_a T_{(k)}^{ab_1 \cdots b_k cd} \horeq{}& c_1 \xi^{b_1}\xi^{b_2} \cdots \xi^{b_k}\xi^c \xi^d \\
    {}& + c_2 N^{(b_1} \xi^{b_2} \cdots \xi^{b_k)} \xi^c \xi^d + c_3 \xi^{b_1} \xi^{b_2} \cdots \xi^{b_k} N^{(c} \xi^{d)}\\
    {}& + c_4 s_{i_{1}}^{(b_1} \xi^{b_2} \cdots \xi^{b_k)} \xi^c \xi^d + c_5 \xi^{b_1} \xi^{b_2} \cdots \xi^{b_k} s_i^{(c} \xi^{d)}\\
    {}&+ \cdots \\
    {}&+ c_6 s_{i_{1}}^{(b_1} s_{i_{2}}^{b_2} \cdots s_{i_{{k-1}}}^{b_{k-1}} \xi^{b_k)} s_{j_1}^{(c} s_{j_2}^{d)} + c_7 s_{i_{1}}^{b_1} s_{i_{2}}^{b_2} \cdots s_{i_{k}}^{b_k} \xi^{(c} s_{j}^{d)}, \label{eq:T_j_basis_exp}
\end{split}
\end{align}
where $c_1, c_2, \dots, c_7$ are Lie derived by $\xi^a$.
Here, the first term in the basis expansion is the term with all $\xi^a$'s. In the second line, we replace one of the $\xi^a$'s with an $N^a$. Since $T_{(k)}^{a b_1 \cdots b_i cd}$ is symmetric in $b_1 \dots b_k$ as well as in $c d$, there are two distinct terms that we can get via such a replacement, namely, one term where the $N^a$ replaces a $\xi^a$ on a $b_i$ index and one term where it replaces a $\xi^a$ on a $c$ or $d$ index. In the third line, we similarly replace a $\xi^a$ by an $s_i^a$. (Since $i$ ranges from $1$ to $n-2$, there are $n-2$ distinct expressions of the form appearing in the third line.) The ``$\dots$'' in the fourth line represents all of the terms we get by continuing to replace $\xi^a$'s by $N^a$'s and $s_i^a$'s, thereby continually decreasing the number of factors of $\xi^a$. The key point is that each term must have at least one more $\xi^a$ than $N^a$. Thus, the final two terms in the basis expansion---shown on the fifth line---have $(k+1)$ factors of $s_i^a$ and one factor of $\xi^a$. 

We now consider each term in the basis expansion separately. Since at least one $\xi^a$ appears in each term in the basis expansion, there are three possible cases, which we will consider separately:

\begin{itemize}

\item {\it Case (1):} There are no $\xi^{b_i}$ factors, i.e., there are no factors of $\xi^a$ that contract into $\nabla_{(b_1} \cdots \nabla_{b_k)}$ in \eqref{eq:alpha_gen_form}. Furthermore, there is only one $\xi^c$ or $\xi^d$ factor, i.e., only one index of $\delta g_{cd}$ contracts into factor of $\xi^a$.

\item {\it Case (2):} As in case (1), there are no $\xi^{b_i}$ factors. However, there are $\xi^c$ and $\xi^d$ factors, i.e., both indices of $\delta g_{cd}$ contract into factors of $\xi^a$.

\item {\it Case (3):} There is at least one $\xi^{b_i}$ factor, i.e., there is at least one contraction of $\xi^a$ with a derivative index in \eqref{eq:alpha_gen_form}.

\end{itemize}

Case (1) is the most straightforward to analyze. Since only one factor of $\xi^a$ occurs in this case, the only terms in the basis expansion \eqref{eq:T_j_basis_exp} that meet the criterion of case (1) take the form of the very last term in \eqref{eq:T_j_basis_exp}. Thus, to analyze this case, we need to consider the quantity
\be
s_{i_{1}}^{b_1} s_{i_{2}}^{b_2} \cdots s_{i_{k}}^{b_k} \xi^{c} s_{j}^{d} \nabla_{(b_1} \cdots \nabla_{b_k)} \delta g_{cd}
\ee
where we dropped the symmetrization over $cd$ in the basis term since $\delta g_{cd}$ is symmetric in $c$ and $d$. We now move $\xi^c$, through all of the derivatives $\nabla_{(b_1} \cdots \nabla_{b_k)}$ to write it as
\be
s_{i_{1}}^{b_1} s_{i_{2}}^{b_2} \cdots s_{i_{k}}^{b_k} \xi^{c} s_{j}^{d} \nabla_{(b_1} \cdots \nabla_{b_k)} \delta g_{cd} = s_{i_{1}}^{b_1} s_{i_{2}}^{b_2} \cdots s_{i_{k}}^{b_k} s_{j}^{d} \nabla_{(b_1} \cdots \nabla_{b_k)}( \delta g_{cd} \xi^{c}) + {\rm lower \, derivative \, terms}
\ee
where the ``lower derivative terms'' have one or more derivatives $\nabla_{b_i}$ acting only on $\xi^c$ and thus have no more than $k-1$ derivatives remaining to act on $\delta g_{cd}$. However, by our gauge condition \eqref{gaugecon1}, we have $\delta g_{cd} \xi^c \horeq 0$. Since derivatives of this quantity in directions tangential to the horizon must also vanish, we see that the terms occurring in case (1) are equivalent to an expression with strictly fewer than $k$ derivatives acting on $\delta g_{cd}$.

The analysis of case (2) is similar to case (1). We move both $\xi^c$ and  $\xi^d$ through $\nabla_{(b_1} \cdots \nabla_{b_k)}$ to obtain a term where all $k$ derivatives act on $\delta g_{cd} \xi^c \xi^d$ plus ``lower derivative terms,'' i.e., terms where at least one derivative acts on $\xi^c$ or $\xi^d$ and thus no more than $k-1$ derivatives act on $\delta g_{cd}$. However, the term $\nabla_{(b_1} \cdots \nabla_{b_k)}(\delta g_{cd} \xi^c \xi^d)$ can have at most one factor $N^{b_i}$ contracting into it, since there are precisely two factors of $\xi^a$ in the terms that arise in case (2) and we must have strictly more factors of $\xi^a$ than $N^a$. Therefore, in the term $\nabla_{(b_1} \cdots \nabla_{b_k)}(\delta g_{cd} \xi^c \xi^d)$, there is at most one derivative of $\delta g_{cd} \xi^c \xi^d$ that is taken in a direction that is not tangential to the horizon. Consequently, this term vanishes by our gauge condition $\nabla_b (\delta g_{cd} \xi^c \xi^d) \horeq 0$. Therefore, we are left only with the ``lower derivative terms,'' i.e., terms with $k-1$ or fewer derivatives acting on $\delta g_{cd}$.

Finally, to treat case (3), we must analyze terms of the form
\be
\xi^{b_i} \nabla_{b_1} \cdots \nabla_{b_i} \cdots \nabla_{b_k} \delta g_{cd} .
\ee
To do so, we first bring $\xi^{b_i}$ through $\nabla_{b_1} \cdots \nabla_{b_{i-1}}$ to rewrite this term as 
\be
\xi^{b_i} \nabla_{b_1} \cdots \nabla_{b_i} \cdots \nabla_{b_k} \delta g_{cd} =  \nabla_{b_1} \cdots \nabla_{b_{i-1}} \left[ \xi^{b_i} \nabla_{b_i} \left(\nabla_{b_{i +1}} \cdots \nabla_{b_k} \delta g_{cd} \right) \right] + {\rm lower \, derivative \, terms}.
\ee 
We then write
\be
 \xi^{b_i} \nabla_{b_i} \left(\nabla_{b_{i +1}} \cdots \nabla_{b_k} \delta g_{cd} \right) = {\mathscr L}_ \xi \left(\nabla_{b_{i +1}} \cdots \nabla_{b_k} \delta g_{cd} \right) + {\rm lower \, derivative \, terms}.
\ee
Finally, we use the fact that since $\xi^a$ is a Killing field of the background metric, ${\mathscr L}_ \xi$ commutes with the background derivative operator $\nabla_b$, so we have
\be
 {\mathscr L}_ \xi \left(\nabla_{b_{i +1}} \cdots \nabla_{b_k} \delta g_{cd} \right) = \nabla_{b_{i +1}} \cdots \nabla_{b_k}  {\mathscr L}_ \xi \delta g_{cd}.
\ee
Putting this all together, we have shown that
\be
\xi^{b_i} \nabla_{b_1} \cdots \nabla_{b_i} \cdots \nabla_{b_k} \delta g_{cd} = \nabla_{b_1} \cdots \nabla_{b_{i-1}} \nabla_{b_{i +1}} \cdots \nabla_{b_k} {\mathscr L}_ \xi \delta g_{cd} + {\rm lower \, derivative \, terms}.
\ee

Combining all of the results of the analyses of cases (1), (2), and (3) above, we see that the highest derivative term in $\xi_a \alpha^a$ can be written in the form
\be
\xi_a T_{(k)}^{ab_1 \cdots b_k cd}\nabla_{(b_1} \cdots \nabla_{b_k)}\delta g_{cd}  \horeq  \tilde{T}_{(k-1)}^{b_1 \cdots b_{k-1} cd}\nabla_{(b_1} \cdots \nabla_{b_{k-1})} {\mathscr L}_ \xi \delta g_{cd} + \sum_{i=0}^{k-1} {T'}_{(i)}^{b_1 \cdots b_i cd}\nabla_{(b_1} \cdots \nabla_{b_i)} \delta g_{cd}
\label{kderform}
\ee
Here $\tilde{T}_{(k-1)}^{b_1 \cdots b_{k-1} cd}$ arises entirely from case (3), wherein one $\xi^b$ in terms in the basis expansion of $\xi_a T_{(k)}^{ab_1 \cdots b_k cd}$ has been combined with a $\nabla_b$ and converted to a ${\mathscr L}_\xi$. It follows that each term in basis expansion of $\tilde{T}_{(k-1)}^{b_1 \cdots b_{k-1} cd}$ has at least as many $\xi^a$'s as $N^a$'s and thus is smooth on $\mathcal H$. The components of $\tilde{T}_{(k-1)}^{b_1 \cdots b_{k-1} cd}$ are obtained algebraically from $\xi_a T_{(k)}^{ab_1 \cdots b_k cd}$ together with the basis elements $N^a$, $\xi^a$ and $\{s_i^a\}$. However, it is clear that $\tilde{T}_{(k-1)}^{b_1 \cdots b_{k-1} cd}$ does not depend on the choice of $\{s_i^a\}$, as can be seen from the fact that we could have worked with the projector
\be
q^{ab} = g^{ab} - 2 \xi^{(a} N^{b)}
\label{qdef}
\ee
rather than choosing the basis $\{s_i^a\}$. Since $\xi_a T_{(k)}^{ab_1 \cdots b_k cd}$ is locally and covariantly constructed from the metric, curvature, and its derivatives, we see that $\tilde{T}_{(k-1)}^{b_1 \cdots b_{k-1} cd}$ is locally and covariantly constructed from the metric, curvature, and its derivatives together with $\xi^a$ and $N^a$, with only algebraic dependence on $\xi^a$ and $N^a$. 

On the other hand ${T'}_{(i)}^{b_1 \cdots b_i cd}$ is algebraically constructed from not only $\xi_a T_{(k)}^{ab_1 \cdots b_k cd}$, $\xi^a$, and $N^a$ but also from derivatives of $\xi^a$. However, we can use the Killing field identity
\be
\label{KVF}
\nabla_a \nabla_b \xi_c = R_{cbad}\xi^d
\ee
to eliminate second derivatives of $\xi^a$ in terms of $\xi^a$ and curvature. By successively using this identity, we can rewrite all factors involving an arbitrary number of derivatives of $\xi^a$ in terms of curvature, derivatives of curvature, $\xi^a$, and $\nabla_b \xi^a$. We may then use 
\be
\nabla_{b} \xi_{a} \horeq 2\kappa N_{[b} \xi_{a]} 
\ee
to eliminate the first derivative of $\xi^a$ in terms of $\xi^a$ and $N^a$. In this way, we see that ${T'}_{(i)}^{b_1 \cdots b_i cd}$ is locally and covariantly constructed from the metric, curvature, derivatives of curvature, $\xi^a$ and $N^a$, with only algebraic dependence on $\xi^a$ and $N^a$.

Next, we note that since $\xi^a$ is a Killing field of the background spacetime, we have
\begin{align}
\tilde{T}_{(k-1)}^{b_1 \cdots b_{k-1} cd}\nabla_{(b_1} \cdots \nabla_{b_{k-1})} {\mathscr L}_ \xi \delta g_{cd} ={}&  {\mathscr L}_ \xi  \left[ \tilde{T}_{(k-1)}^{b_1 \cdots b_{k-1} cd}\nabla_{(b_1} \cdots \nabla_{b_{k-1})}\delta g_{cd} \right] \\
={}& \xi^a \nabla_a \left[ \tilde{T}_{(k-1)}^{b_1 \cdots b_{k-1} cd}\nabla_{(b_1} \cdots \nabla_{b_{k-1})}\delta g_{cd} \right] .
\end{align}
Thus, we may rewrite \eqref{kderform} as 
\be
\sum_{i=0}^{k-1} {T'}_{(i)}^{b_1 \cdots b_i cd}\nabla_{(b_1} \cdots \nabla_{b_i)} \delta g_{cd} = \xi_a T_{(k)}^{ab_1 \cdots b_k cd}\nabla_{(b_1} \cdots \nabla_{b_k)}\delta g_{cd} - \xi^a \nabla_a \left[ \tilde{T}_{(k-1)}^{b_1 \cdots b_{k-1} cd}\nabla_{(b_1} \cdots \nabla_{b_{k-1})}\delta g_{cd} \right] .
\label{kderform22}
\ee
The right side of \eqref{kderform22} is smooth and vanishes on the bifurcation surface on account of the explicit factor of $\xi^a$ in both terms. Thus, the left side of \eqref{kderform22} has exactly the same properties as $\xi_a \alpha^a$ except that (i) there now is additional algebraic dependence on $\xi^a$ and $N^a$ and (ii) the highest derivative term has only $k-1$ derivatives of $\delta g_{cd}$. The first point is of no consequence for any of the arguments that led from \eqref{eq:T_j_basis_exp} to \eqref{kderform}. The second point enables us to prove the theorem by the following inductive argument.

We take as an inductive hypothesis in $k$ that a slight strengthening of the statement of the theorem holds, namely, we allow the tensors appearing in \eqref{eq:alpha_gen_form} to depend algebraically on $\xi^a$ and $N^a$ as well as on the metric, curvature, and covariant derivatives of the curvature. We assume that the inductive hypothesis holds for $k=p-1$. For $k=p$, the above arguments show that the highest derivative term in $\xi_a \alpha^a$ takes the form \eqref{kderform} and that the lower derivative terms are such that the $k=p-1$ hypothesis can be applied to them. Therefore, the inductive hypothesis for $k= p-1$ implies the inductive hypothesis for $k=p$, and all that remains to show is that the inductive hypothesis holds for $k=0$. However, this is just a simple special case of cases (1) and (2) in the above proof of the proposition, and it follows immediately from our gauge conditions \eqref{gaugecon1} and \eqref{gaugecon2} that $\xi_a \alpha^a = 0$ when $k=0$. Thus, our inductive hypothesis holds for $k=0$ and the proof is completed.

\end{proof}

The main result of this section can now be obtained as follows. For $\alpha^a$ defined by \eqref{sympotpull2}, the general form of $\boldsymbol{\theta}$ given by eq.~(23) of \cite{IyerWald1994} shows that \eqref{eq:alpha_gen_form} holds with $k = m+1$, where $m$ is the maximum number of derivatives of the curvature appearing in the Lagrangian \eqref{eq:form_of_Lag}. Therefore, 
by \eqref{sympotpull} and the above theorem, we have
\be
\underline{\boldsymbol{\theta}} \horeq  \boldsymbol{\epsilon}^{(n-1)}  \sum_{i=0}^{m} \tilde{T}_{(i)}^{b_1 \cdots b_i cd} \nabla_{(b_1} \cdots \nabla_{b_i)} \mathscr{L}_\xi \delta g_{cd}.
\ee 
Define $\textbf{B}_{\mathcal H}$ on $\mathcal H$ by
\be
\textbf{B}_{\mathcal H} = \boldsymbol{\epsilon}^{(n-1)}  \sum_{i=0}^{m} \tilde{T}_{(i)}^{b_1 \cdots b_i cd} \nabla_{(b_1} \cdots \nabla_{b_i)} \mathscr{L}_\xi g_{cd} .
\label{BHform}
\ee
Then we have 
\be
\underline{\boldsymbol{\theta}} \horeq  \delta \textbf{B}_{\mathcal H}.
\label{eq:sym_pot_tot_var}
\ee
Namely, when we take a first variation of $\textbf{B}_{\mathcal H}$, we will get a factor of $\mathscr{L}_\xi g_{cd} = 0$ appearing in all terms except the terms where the variation acts on this factor. However, these terms manifestly yield $\underline{\boldsymbol{\theta}}$. 

The quantities $\tilde{T}_{(i)}^{b_1 \cdots b_i cd}$ needed to obtain $\textbf{B}_{\mathcal H}$ can be computed by following the steps used in the proof of the theorem. We will compute $\textbf{B}_{\mathcal H}$ explicitly for general relativity and for theories with Lagrangians that are quadratic in the curvature tensor in the next section. We will also discuss the ambiguities in the definition of $\textbf{B}_{\mathcal H}$ in the next section.

\section{Dynamical Black Hole Entropy}
\label{dbesec}

\subsection{General Definition of Dynamical Black Hole Entropy}
\label{dbedef}

In the previous section, we have proven that for a stationary black hole whose event horizon, ${\mathcal H}$, is a bifurcate Killing horizon there exists an $(n-1)$-form $\textbf{B}_{\mathcal H}$ on ${\mathcal H}$ with the property that
\eqref{eq:sym_pot_tot_var} holds for arbitrary perturbations. Following the motivation given in the Introduction, we define the {\em entropy $(n-2)$-form} $\textbf{S}$ on ${\mathcal H}$ by 
    \begin{align}
        \textbf{S} \equiv{}& \frac{2\pi}{\kappa} ( \textbf{Q}[\xi] - \xi \cdot \textbf{B}_{\mathcal H} ).
        \label{entform}
    \end{align} 
We define the entropy, $S[\mathcal C]$, of the black hole at a ``time'' represented by an arbitrary cross-section $\mathcal C$ by
    \begin{align}
        S[\mathcal C] \equiv{}& \int_{\mathcal C} \textbf{S} =  \frac{2\pi}{\kappa}  \int_{\mathcal C} ( \textbf{Q}[\xi] - \xi \cdot \textbf{B}_{\mathcal H} ). 
        \label{def:entropy_S}
    \end{align}  
Since the definition of $\textbf{S}$ does not depend on a choice of cross-section and $\textbf{S}$ is smooth on ${\mathcal H}$, it follows immediately that $S[\mathcal C]$ varies continuously with the cross-section $\mathcal C$ in the sense defined in \cite{Chen2022}. We now consider the ambiguities in our definition of $S[\mathcal C]$ for stationary black holes, for first order perturbations of stationary black holes, and for second and higher order deviations from a stationary black hole.

For the unperturbed stationary black hole, we have $\textbf{B}_{\mathcal H} = 0$ by \eqref{BHform}, so we have 
\be
S[\mathcal C] =  \frac{2\pi}{\kappa} \int_{\mathcal C} \textbf{Q}[\xi]. 
\label{Sstat}
\ee
Thus, the only possible ambiguities in $S[\mathcal C]$ for stationary black holes arise from ambiguities in the definition of $\textbf{Q}$, which are given by \cite{IyerWald1994}
\begin{align}
    \textbf{Q}[\xi] \rightarrow{}& \textbf{Q}[\xi] + \xi \cdot \boldsymbol{\mu}(\phi) + \textbf{Y}(\phi,\mathscr{L}_\xi \phi) + d\textbf{Z}(\phi, \xi).
    \label{Qamb}
\end{align}
Here the first two terms result from the ambiguity in $\boldsymbol{\theta}$
\begin{align}
    \boldsymbol{\theta}(\phi,\delta \phi) \rightarrow{}& \boldsymbol{\theta}(\phi,\delta \phi) + \delta \boldsymbol{\mu}(\phi) + d\textbf{Y}(\phi,\delta\phi) 
    \label{thamb}
\end{align}
arising from adding an exact term $d\boldsymbol{\mu}$ to the Lagrangian and an exact term $d\textbf{Y}$ to $\boldsymbol{\theta}$ (see \cite{IyerWald1994}). Since the $(n-1)$-form $\boldsymbol{\mu}$ is locally and covariantly constructed from the metric, we have $\mathscr{L}_\xi \boldsymbol{\mu} = 0$ in the stationary background. Since $\boldsymbol{\mu}$ smoothly extends to $\mathcal B$, it then follows from the type of argument given at the end of subsection \ref{sec:aux_vf_n} that the pullback of $\boldsymbol{\mu}$ to ${\mathcal H}$ vanishes in the stationary background. The $\textbf{Y}$ term in \eqref{Qamb} vanishes in the stationary background since $\mathscr{L}_\xi \phi = 0$. Finally, $d\textbf{Z}$ integrates to zero over any cross-section $\mathcal C$. Thus, $S[\mathcal C]$ is completely unambiguous in the stationary background. Furthermore, if the equations of motion hold in the stationary background, then $d \textbf{Q} = \textbf{J}$ by \eqref{eq:def_noe_curr_charge}. However, \eqref{noecur} implies that the pullback of $\textbf{J}$ to ${\mathcal H} $ vanishes, because $\boldsymbol{\theta}(\phi,\mathscr{L}_\xi \phi) = 0$ (since $\xi^a$ is a Killing field) and the pullback of $\xi \cdot \textbf{L}$ to ${\mathcal H} $ vanishes (since $\xi^a$ is tangent to ${\mathcal H}$). Thus, we find that the pullback of $d \textbf{Q}$ to ${\mathcal H}$ vanishes. This implies that $S[\mathcal C]$ does not depend on $\mathcal C$ for a stationary black hole, i.e., the entropy is ``time independent.'' In particular, we may evaluate $S[\mathcal C]$ on the bifurcation surface $\mathcal B$, in which case it reduces to the definition given in \cite{Wald1993,IyerWald1994}. Thus, our definition \eqref{def:entropy_S} reduces to the standard expression for entropy for stationary black holes on any cross-section $\mathcal C$.

For first order perturbations of a stationary black hole, we have
\be
\label{mainSformula}
\delta S[\mathcal C] =  \frac{2\pi}{\kappa} \int_{\mathcal C} (\delta \textbf{Q}[\xi] - \xi \cdot \boldsymbol{\theta})
\ee
where \eqref{eq:sym_pot_tot_var} was used to replace $\delta \textbf{B}_{\mathcal H}$ by $\boldsymbol{\theta}$. The only possible ambiguities in this expression result from the ambiguities \eqref{Qamb} and \eqref{thamb}.
Thus, the ambiguity in $\delta \textbf{S}$ is 
\begin{align}
    \delta\textbf{S} \rightarrow{}& \delta\textbf{S} +\frac{2 \pi}{\kappa} \left[ \xi \cdot \delta \boldsymbol{\mu} + \textbf{Y}(\phi,\mathscr{L}_\xi\delta\phi) + d\delta \textbf{Z}(\phi,\xi) -  \xi \cdot \delta \boldsymbol{\mu}  - \xi \cdot d\textbf{Y}(\phi,\delta\phi) \right] \\
    ={}& \delta\textbf{S} + \frac{2 \pi}{\kappa} \left[\mathscr{L}_\xi \textbf{Y}(\phi,\delta\phi) + d\delta \textbf{Z}(\phi,\xi) - \xi \cdot d\textbf{Y}(\phi,\delta\phi) \right] \\
   ={}& \delta\textbf{S} + \frac{2 \pi}{\kappa} \left[ d (\xi \cdot \textbf{Y})(\phi,\delta\phi) + d\delta \textbf{Z}(\phi,\xi) \right] 
\label{dsamb}
\end{align}
where the general identity
\begin{align}
    \mathscr{L}_\xi \boldsymbol{\alpha} = \xi \cdot d\boldsymbol{\alpha} + d(\xi \cdot \boldsymbol{\alpha}) \label{eq:Lie_d_id}
\end{align} 
for any form $\boldsymbol{\alpha}$ was used in the last line. The exact terms in \eqref{dsamb} integrate to zero on any cross-section. Thus, for arbitrary first order perturbations, $\delta S[\mathcal C]$ is entirely unambiguous. As we will see explicitly below, in the case of general relativity our formula for $\delta S[\mathcal C]$ does not agree with the Bekenstein--Hawking entropy formula in non-stationary eras.

For second order perturbations, there will, in general, be substantial additional ambiguities resulting from ambiguities in defining $\textbf{B}_{\mathcal H}$ at second order. As we shall now explain, there are two sources of these ambiguities. 

The first source has to do with the essential presence of $\xi^a$ in the formula for $\textbf{B}_{\mathcal H}$. At first order, the perturbed metric enters the formula for $\textbf{B}_{\mathcal H}$ only via the factor $\mathscr{L}_\xi \delta g_{ab}$, so all of the other quantities in that formula take their background values. Since we require $\delta \xi^a = 0$, $\xi^a$ is rigidly fixed on spacetime to be the horizon Killing field of the background metric $g_{ab}$. Similarly, $N^a$ on $\mathcal H$ is rigidly fixed by \eqref{Nprop1} and \eqref{Nprop2}. However, the varied metric will not, in general, admit any Killing fields at all, and, at second and higher orders in perturbation theory, the formula for $\textbf{B}_{\mathcal H}$ will depend upon how $\xi^a$ is chosen. We can (and will) require as a gauge choice that $\xi^a$ is a fixed vector field on spacetime, but unless $g_{ab}$ is rigidly tied to $\xi^a$, the formula for $\textbf{B}_{\mathcal H}$ will have a gauge dependence with regard to the metric. In order to avoid this, we must provide a prescription to rigidly fix $\xi^a$ in terms of the metric in a situation where the metric no longer possesses a horizon Killing field.
Without loss of generality, we may assume that the true event horizon of the perturbed spacetime coincides with $\mathcal H$, that $\xi^a$ remains normal to ${\mathcal H}$, and that $\xi^b \nabla_b \xi^a = \kappa \xi^a$ on ${\mathcal H}$. This rigidly fixes $\xi^a$ on $\mathcal H$ up to a choice of cross-section $\mathcal B$ of $\mathcal H$ in the varied spacetime corresponding to the bifurcation surface. However, since as many as $m+1$ derivatives of $\xi^a$ occur in \eqref{BHform}, we must also give a construction of $\xi^a$ off of $\mathcal H$ at least up to this order for the case of a metric that does not possess a horizon Killing field. Once we have determined $\xi^a$, we may define $N^a$ on $\mathcal H$ by \eqref{frob} and the first equalities of \eqref{Nform} and \eqref{cform}.

A simple and natural way to rigidly tie $g_{ab}$ to $\xi^a$ in a neighborhood of $\mathcal H$ would be to introduce Gaussian null coordinates (GNCs) on $\mathcal H$ 
adapted to a choice of affine parameter $V$ on the null geodesics of $\mathcal H$, with $V=0$ corresponding to $\mathcal B$. In these coordinates, the metric takes the form
\begin{equation}
    ds^2 = 2dV(d\rho - \rho\beta_A dx^A - \tfrac{1}{2}\rho^2 \alpha dV) + \gamma_{AB}dx^A dx^B,
    \label{affgnc}
\end{equation}
where $x^A$ are coordinates on a cut of constant $V$ and $\rho$ is an affine parameter along the null geodesics transverse to $\mathcal H$ that are orthogonal to these cuts. We may then define $\xi^a$ in a neighborhood of $\mathcal H$ by
\be
\xi^a = \kappa\left[V \left(\frac{\partial}{\partial V}\right)^a - \rho \left(\frac{\partial}{\partial \rho}\right)^a \right],
\label{xidef}
\ee
and we may then apply a suitable diffeomorphism to $g_{ab}, \xi^a$ such that $\xi^a$ is the same in each spacetime\footnote{In other words, we identify different spacetimes by identifying points near $\mathcal H$ with the same coordinates $(V,\rho,x^A)$.}.

For the case of a metric with a horizon Killing field with surface gravity $\kappa$ and bifurcation surface $\mathcal B$, this definition reproduces the horizon Killing field \cite{Hollands_Wald_2013}, but this formula is well defined in a neighborhood of $\mathcal H$ for an arbitrary metric. In addition, the metric in this gauge satisfies conditions that yield \eqref{gaugecon1} and \eqref{gaugecon2} for perturbations about a metric with a horizon Killing field.
However, there is a rescaling freedom $V \to f(x^A)V$ in the choice of $V$. Although this rescaling freedom does not affect the definition \eqref{xidef} of $\xi^a$ in the case where the metric possesses a horizon Killing field with surface gravity $\kappa$ and bifurcation surface $\mathcal B$, it will affect the definition of $\xi^a$ in the general case. Thus, even with this choice of prescription, we have ambiguities in the definition of $\xi^a$ and corresponding ambiguities in the definition of $\textbf{B}_{\mathcal H}$. Of course, other prescriptions for defining $\xi^a$ off of $\mathcal H$ would also be possible. In appendix \ref{app:C} we will give an alternative construction of $\xi^a$.

It is not necessarily a bad thing that $\textbf{B}_{\mathcal H}$---and thereby our definition of entropy---depends on a choice of ``time translation'' $\xi^a$. It is generally recognized that the notion of ``energy'' should be defined as conjugate to a notion of time translations. Since ``energy'' and ``entropy'' appear together in the first law of thermodynamics for near-equilibrium systems, it seems reasonable that the notion of ``entropy'' should similarly depend on a notion of time translations. Nevertheless, the ambiguities in defining a notion of ``time translation'' gives rise to ambiguities in our definition of entropy beginning at second order in perturbation theory about a black hole whose event horizon is a Killing horizon.

The second source of ambiguities in the formula for entropy at second order arises from the fact that $\textbf{B}_{\mathcal H}$ was constructed so as to satisfy \eqref{eq:sym_pot_tot_var}, which is a condition that applies only to first order perturbations. Therefore, we have freedom to add to $\textbf{B}_{\mathcal H}$ terms that vanish identically for first order perturbations about a background spacetime with a Killing horizon but do not vanish at second order. Nevertheless, based upon the construction of $\textbf{B}_{\mathcal H}$ given in the previous section, it seems natural to impose the following conditions on $\textbf{B}_{\mathcal H}$:
\begin{enumerate}
    \item We require $\textbf{B}_{\mathcal H}$ to be locally and covariantly constructed out of the metric, the curvature tensor, derivatives of the curvature tensor, and the vector fields $\xi^a$, $N^a$ and their covariant derivatives\footnote{Note that $N^a$ on ${\mathcal H}$ itself is locally and covariantly constructed from the metric, $\xi^a$, and the first derivative of $\xi^a$, since it is uniquely determined in a local manner by \eqref{frob} and the first equalities of \eqref{Nform} and \eqref{cform}. It would be consistent with our construction of $\textbf{B}_{\mathcal H}$ in section \ref{Bconstruction} at first order to require that no derivatives of $N^a$ appear in the expression for $\textbf{B}_{\mathcal H}$.}.
\item We require that each term in the expression for $\textbf{B}_{\mathcal H}$ contains a factor\footnote{When matter fields are present, there may be terms proportional to the matter field contracted into $\xi^a$ that do not contain a factor involving $\mathscr{L}_\xi$.} of $\mathscr{L}_\xi g_{ab}$ (or covariant derivatives of this quantity). In particular, we have $\textbf{B}_{\mathcal H} = 0$ in the stationary background.\footnote{It would be consistent with our construction of $\textbf{B}_{\mathcal H}$ in section \ref{Bconstruction} at first order to require that there be no more than $m$ covariant derivatives of $\mathscr{L}_\xi g_{ab}$.}
    \item We require that \eqref{eq:sym_pot_tot_var} holds for arbitrary perturbations of a stationary background.
    \item We require that the total number of derivatives appearing in $\textbf{B}_{\mathcal H}$ cannot exceed the total number of derivatives in $\boldsymbol{\theta}$, which is one fewer than the total number of derivatives appearing in the Lagrangian. (Here, a factor of the curvature tensor counts as ``two derivatives'' in the Lagrangian and in $\textbf{B}_{\mathcal H}$. The factor of $\mathscr{L}_\xi g_{ab}$ in $\textbf{B}_{\mathcal H}$ counts as ``one derivative.'' However, the appearance of $N^a$ does not count towards the derivative limitation in $\textbf{B}_{\mathcal H}$---despite its being determined by the derivative of $\xi^a$---since its presence can also arise from the purely algebraic basis expansion.)
    \item We require that $\textbf{B}_{\mathcal H}$ is invariant under a rescaling of $\xi^a$ by a constant, $\xi^a \to c \xi^a$ and the corresponding rescaling $\kappa \to c \kappa$.
\end{enumerate}

These conditions are quite restrictive for theories with a relatively low total number of derivatives in the Lagrangian. In particular, they uniquely determine $\textbf{B}_{\mathcal H}$ in general relativity.
However, by the time one reaches curvature cubed theories (i.e., a total of 6 derivatives in the Lagrangian), it is not difficult to construct terms---such as ${\epsilon}_{ab_1 \cdots b_{n-1}} R_{pq} N^p N^q {R^a}_{c} \xi^c g^{mn} \mathscr{L}_\xi g_{mn}$---that could be added to $\textbf{B}_{\mathcal H}$ so as satisfy all of the above conditions and are nonvanishing at second order. It is possible that one could impose additional conditions on $\textbf{B}_{\mathcal H}$ so as to further restrict its ambiguities and/or obtain other desired properties of $\textbf{S}$. One attractive possibility would be to impose conditions that imply that $S[\mathcal C]$ satisfies a second law (i.e., that it is non-decreasing with time) at second order (see subsection \ref{vacpert}). However, we do not believe that it would be possible to do this within our framework.

It is instructive to compare the nature of our framework and its definition freedom arising at second order with the framework and definition freedom in the approach of Hollands, Kovacs, and Reall (HKR) \cite{Hollands_2022}. As will be discussed further in appendix \ref{hkr}, HKR take an effective field theory approach, wherein the Lagrangian is expanded about the general relativity Lagrangian in powers of a small parameter $\ell$ and one works only to some finite order in $\ell$. The HKR entropy-current $(n-2)$-form $\textbf{S}_{\rm HKR}$ is constructed so as to agree with the Dong--Wall entropy (see section \ref{dwent}) rather than our entropy current to first order in perturbation theory about a black hole with a bifurcate Killing horizon. The HKR construction is aimed at
providing an entropy satisfying the 2nd law to second order in perturbation theory \cite{Hollands_2022}, or even non-perturbatively in a recent extension \cite{Davies:2023qaa}. But it need not be covariant in the metric \cite{Davies:2022xdq}. As discussed further in appendix \ref{hkr}, the lack of a covariant HKR entropy current $(n-2)$-form means that there is no guarantee that the HKR entropy will be ``cross-section continuous,'' \cite{Chen2022} i.e., if we ``wiggle'' the cut $\mathcal C$ into another cut $\mathcal C'$ that is arbitrarily close to $\mathcal C$, there is no guarantee that $S_{\rm HKR}[\mathcal C']$ will be close to $S_{\rm HKR}[\mathcal C]$, as illustrated in Fig.~\ref{fig:1}. Although we will see in section \ref{dwent} that (our and) the Dong--Wall entropy is cross-section continuous, it appears that the HKR entropy will not, in general, satisfy this property.  A detailed comparison of the definition freedom in entropy at second order in our approach and the HKR approach is given in appendix \ref{hkr}.

\begin{figure}[h!]
\begin{center}
  \includegraphics[trim={0 5cm 0 4cm},clip, width = 0.5\textwidth]{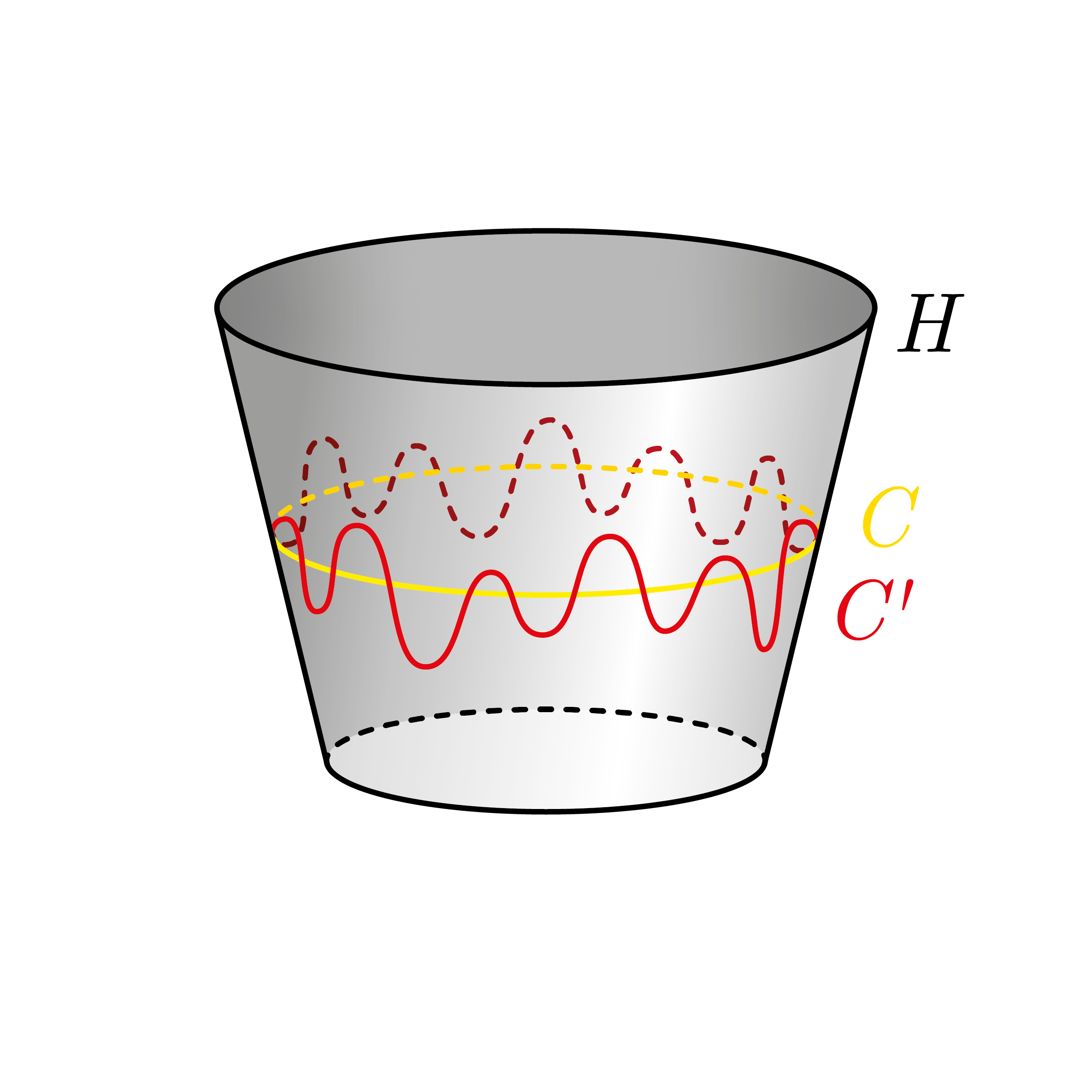}
  \end{center}
  \caption{Cross section continuity \cite{Chen2022} of $S[\mathcal{C}]$ means that $S[{\mathcal C}']$ approaches $S[{\mathcal C}]$ for a cut ${\mathcal C}'$ tending to a cut $\mathcal C$, even if ${\mathcal C}'$ is very wiggly as illustrated. 
  }
  \label{fig:1}
\end{figure}

Finally, we comment on the definition of $\textbf{S}$ and $S[\mathcal C]$ outside of the perturbative context, i.e., for an arbitrary dynamical black hole. We have already noted above that our approach to defining entropy is heavily reliant on the presence of a time translation vector field $\xi^a$. In the case of a stationary black hole whose event horizon is a Killing horizon, the horizon Killing field $\xi^a$ provides a natural notion of ``time translations'' upon which to base notions of energy and entropy. For a black hole that is very nearly stationary---in particular, in first and second order perturbation theory about a stationary black hole---it also appears to be natural to use the definition of $\xi^a$ given above to play the role of ``time translations.'' 
However, for a black hole that is far from ``equilibrium,'' it seems quite artificial to attempt to assign a surface gravity $\kappa$ and bifurcation surface $\mathcal B$ to the event horizon $\mathcal H$ so as to construct a Killing-like vector field $\xi^a$. Furthermore, even in ordinary physics where a notion of time translation is present, the definition of the entropy of a system that is far from equilibrium is not well defined without making a choice of coarse-grained observable \cite{Wald1979}. Thus, we do not advocate applying our formulas for $\textbf{S}$ and $S[\mathcal C]$ outside of the perturbative context around a stationary (i.e., ``equilibrium'') black hole.

We now evaluate our formula for $\textbf{S}$ and $S[\mathcal C]$ for the case of general relativity and, more generally, for theories whose Lagrangians do not depend on derivatives of the curvature tensor.

\subsection{Dynamical Black Hole Entropy in General Relativity}

The Lagrangian for general relativity is 
\begin{align}
   {L}_{a_1 \cdots a_n} = {}& \frac{1}{16\pi} {\epsilon}_{a_1 \cdots a_n} R. \label{eq:gr_Lagrangian}
\end{align}
The symplectic potential $\boldsymbol{\theta}$ is \cite{IyerWald1994} 
\begin{align}
    {\theta}_{a_1 \cdots a_{n-1}} ={}&\frac{1}{16\pi} {{\epsilon}^e}_{a_1 \cdots a_{n-1}} g^{fh}(\nabla_f \delta g_{eh} - \nabla_e \delta g_{fh})
\end{align}
and the Noether charge is \cite{IyerWald1994} 
\begin{align}
    {Q}_{a_1 \cdots a_{n-2}} ={}& -\frac{1}{16\pi} {\epsilon}_{a_1 \cdots a_{n-2}cd} \nabla^c \xi^d .
\end{align} 
Using the methods employed in section \ref{Bconstruction} to obtain $\textbf{B}_{\mathcal H}$ from $\boldsymbol{\theta}$, we obtain 
\begin{align}
{{B}_{\mathcal H}}_{a_1 \cdots a_{n-1}}={}& -\frac{1}{16\pi} N^b {\epsilon}_{ba_1 \cdots a_{n-1}}
 g^{ef}\mathscr{L}_\xi g_{ef} \, .
 \label{eq:b_gr}
\end{align} 
It is easily seen that ${\textbf{B}_{\mathcal H}}$ satisfies all of the requirements enumerated above in subsection \ref{dbedef}. Furthermore, ${\textbf{B}_{\mathcal H}}$ is unique, since it is allowed to contain only one derivative---which is used up by the required factor of $\mathscr{L}_\xi g_{ef}$---and no terms of the required form can be constructed that vanish for arbitrary first order perturbations.
In terms of the affinely parametrized tangent $k^a = \xi^a/(\kappa V)$ (see \eqref{afftang}), we can rewrite ${\textbf{B}_{\mathcal H}}$ as
\begin{align}
{{B}_{\mathcal H}}_{a_1 \cdots a_{n-1}}={}& -\frac{1}{16\pi} n^b {\epsilon}_{ba_1 \cdots a_{n-1}}
 g^{ef}\mathscr{L}_k g_{ef} = - \frac{1}{8\pi} n^b  {\epsilon}_{ba_1 \cdots a_{n-1}}  \vartheta
 \label{eq:b_gr2}
\end{align} 
where $\vartheta$ is the expansion of the null generators of the horizon in the affine parameterization $V$ and $n^a = \kappa V N^a$ (so $n^a k_a = 1$). This agrees with eq.~(6.24) in \cite{ChandrasekaranFP_2018} (see also \cite{shi2021covariant}) for the corresponding term in the symplectic potential on an arbitrary null surface needed to implement the Wald--Zoupas prescription for defining charges.

The entropy $(n-2)$-form \eqref{entform} in general relativity is 
\begin{align}
    {S}_{a_1 \cdots a_{n-2}}={}& -\frac{1}{8\kappa}{\epsilon}_{a_1 \cdots a_{n-2}cd}\nabla^c \xi^d - {\epsilon}^{(n-2)}_{a_1 \cdots a_{n-2}} \frac{1}{4} V \vartheta
\end{align} 
where ${\epsilon}^{(n-2)}_{a_1 \cdots a_{n-2}} = \xi^b N^c {\epsilon}_{bca_1 \cdots a_{n-2}}$ was defined by \eqref{eq:or_conv_pt_1} and \eqref{eq:or_conv_pt_2} and is such that its pullback to any cross-section $\mathcal C$ on the horizon yields the volume element on $\mathcal C$ obtained from the pullback of the metric to $\mathcal C$. 
For a given cross section $\mathcal C$ of the horizon, the entropy is 
\begin{align}
    S[\mathcal C] ={}& \frac{A[\mathcal C]}{4} - \frac{1}{4} \int_{\mathcal C} V \vartheta \, \boldsymbol{\epsilon}^{(n-2)}
    \label{eq:entropy_gr}
\end{align} 
where $A[\mathcal C]$ is the ($(n-2)$-dimensional) area of $\mathcal C$ and the integral over $\mathcal C$ in the second term is taken with respect to the natural volume element ${\epsilon}^{(n-2)}_{a_1 \cdots a_{n-2}}$ on $\mathcal C$. 
The first term in \eqref{eq:entropy_gr} is the usual Bekenstein--Hawking entropy. For a stationary black hole, we have $\vartheta = 0$. However, for a dynamical black hole, we have $\vartheta >0$, so $S[\mathcal C]$ is \textit{smaller} than the Bekenstein--Hawking entropy. We will show in appendix \ref{sec:area_app_horizon} that for first-order perturbations, $S[\mathcal C]$ in general relativity is, in fact, equal to the area of the apparent horizon at the ``time'' corresponding to the cross-section, $\mathcal C$, of the horizon. 

Rignon-Bret \cite{Rignon-Bret_2023} (see also \cite{Ciambelli:2023mir}) has given an analysis of aspects of our entropy expression \eqref{eq:entropy_gr} as well as an alternative entropy expression based on the York symplectic potential that is constructed to vanish on any cross section of a light cone in Minkowski spacetime.

\subsection{Entropy Form for $L(R_{abcd})$ Theories}

Consider a theory of gravity obtained from a Lagrangian that depends on the metric and curvature but does not depend on derivatives of the curvature
\begin{align}
    {L}_{a_1 \cdots a_n} ={}& {\epsilon}_{a_1 \cdots a_n} L(g_{ab}, R_{abcd}).
\end{align} 

The symplectic potential $(n-1)$-form can be calculated from the algorithm given in Lemma 3.1 of \cite{IyerWald1994} or directly from\footnote{The sign difference between our formula and eq.~(4.5) of \cite{Compere_2009} appears to be a result of differences in conventions.} eq.~(4.5) of \cite{Compere_2009}. We obtain,
\begin{align}
    {\theta}_{a_1 \cdots a_{n-1}} ={}& 2{\epsilon}_{ma_1 \cdots a_{n-1}} \left[ \frac{\partial L}{\partial R_{mbcd}} \nabla_d \delta g_{bc} - \nabla_d \left( \frac{\partial L}{\partial R_{mbcd}} \right) \delta g_{bc} \right].
\end{align}
The Noether charge $(n-2)$-form can be obtained either from the Noether current $\textbf{J}$ via the algorithm given in Lemma 1 of \cite{Wald1990} or directly from eq.~(4.6) of \cite{Compere_2009},
\begin{align}
    {Q}_{a_1 \cdots a_{n-2}} ={}&  {\epsilon}_{cda_1 \cdots a_{n-2}}  \Big(-\frac{\partial L}{\partial R_{cdef}} \nabla_{[e} \xi_{f]} - 2 \nabla_f \Big(\frac{\partial L}{\partial R_{cdef}}\Big) \xi_e \Big).
\end{align} From the construction in section \ref{Bconstruction}, we obtain
\begin{align}
     {{B}_{\mathcal H}}_{a_1 \cdots a_{n-1}} ={}&2 N^f {\epsilon}_{fa_1 \cdots a_{n-1}}  \xi_m \frac{\partial L}{\partial R_{mbcd}} N_d \mathscr{L}_\xi g_{bc}.
\end{align} The entropy $(n-2)$-form is then
\begin{align}
    {S}_{a_1 \cdots a_{n-2}} ={}& \frac{2\pi}{\kappa}\Big[ {\epsilon}_{cda_1 \cdots a_{n-2}}  \Big(-\frac{\partial L}{\partial R_{cdef}} \nabla_{[e} \xi_{f]} - 2 \nabla_f \Big(\frac{\partial L}{\partial R_{cdef}}\Big) \xi_e \Big) + 2 {\epsilon}^{(n-2)}_{a_1 \cdots a_{n-2}}  \xi_m \frac{\partial L}{\partial R_{mbcd}} N_d \mathscr{L}_\xi g_{bc} \Big].
\end{align}

We now explicitly evaluate the above quantities for the case of a general curvature squared Lagrangian, i.e.,
\begin{align}
    {L}_{a_1 \cdots a_n} ={}& {\epsilon}_{a_1 \cdots a_n} \left( \alpha_1 R_{abcd}R^{abcd} + \alpha_2 R_{ab}R^{ab} + \alpha_3 R^2\right),
\end{align} 
where $\alpha_1, \alpha_2, \alpha_3$ are arbitrary constants.
The symplectic potential $(n-1)$-form is\footnote{The choice of $\alpha_1 = 1$, $\alpha_2 = -4$, $\alpha_3 = 1$ corresponds to Gauss--Bonnet gravity. There is a typo---previously noted in \cite{Bueno_2017}---in the sign of the second-to-last term in the formula for $\boldsymbol{\theta}$ for Gauss--Bonnet gravity given in eq.~(70) of \cite{IyerWald1994}.} 
\begin{align}
 \begin{split}
	{\theta}_{a_1 \cdots a_{n-1}} ={}& {\epsilon}_{ma_1 \cdots a_{n-1}} \Big\{  4\alpha_1 \Big[R^{mbcd}\nabla_d \delta g_{bc} - \nabla_d R^{mbcd} \delta g_{bc} \Big]\\
 {}&+\alpha_2 \Big[2g^{c[m}R^{b]d}\nabla_d \delta g_{bc} + 2 R^{c[m} \nabla^{b]} \delta g_{bc} + g^{c[b} \nabla^{m]} R \delta g_{bc}  + 2\nabla^{[m} R^{b]c}\delta g_{bc} \Big] \\
 {}&+2 \alpha_3 \Big[R g^{mc}g^{bd}(\nabla_d \delta g_{bc} - \nabla_c \delta g_{bd}) - 2g^{c[m}\nabla^{b]} R \delta g_{bc}  \Big] \Big\}.
	\end{split}
\end{align}
The Noether charge $(n-2)$-form is
\begin{align}
\begin{split}
{Q}_{a_1 \cdots a_{n-2}} ={}& -{\epsilon}_{cda_1 \cdots a_{n-2}} \Big[(2\alpha_1 R^{cdef} + 2\alpha_2 g^{ce}R^{fd}  + 2\alpha_3 R g^{ce}g^{df} ) \nabla_{[e} \xi_{f]} \\
{}&+   (4 \alpha_1 \nabla_f R^{cdmf} + \alpha_2 g^{cm}\nabla^d R  - 2\alpha_2 \nabla^c R^{md} + 4\alpha_3 \nabla^d R g^{cm}) \xi_m \Big] .
\end{split}
\end{align} 
The $\textbf{B}_{\mathcal H}$ $(n-1)$-form is
\begin{align}
{B}_{\mathcal H a_1 \cdots a_{n-1}}={}& N^f{\epsilon}_{fa_1 \cdots a_{n-1}} \left(4 \alpha_1 R^{abcd} \xi_a N_d - \alpha_2 R^{bc} - \alpha_2 R_{ad} \xi^a N^d g^{bc} - 2 \alpha_3 R g^{bc} \right) \mathscr{L}_\xi g_{bc}.
\label{BHR2}
\end{align}
The entropy $(n-2)$-form is 
\begin{align}
\begin{split}
    {S}_{a_1 \cdots a_{n-2}} ={}& -\frac{2\pi}{\kappa} \Big({\epsilon}_{cda_1 \cdots a_{n-2}} \Big[(2\alpha_1 R^{cdef} + 2\alpha_2 g^{ce}R^{fd}  + 2\alpha_3 R g^{ce}g^{df} ) \nabla_{[e} \xi_{f]} \\
    {}&+   (4 \alpha_1 \nabla_f R^{cdmf} + \alpha_2 g^{cm}\nabla^d R  - 2\alpha_2 \nabla^c R^{md} + 4\alpha_3 \nabla^d R g^{cm}) \xi_m \Big] \\
    {}& - {\epsilon}^{(n-2)}_{a_1 \cdots a_{n-2}}(4 \alpha_1 R^{abcd} \xi_a N_d - \alpha_2 R^{bc} - \alpha_2 R_{ad} \xi^a N^d g^{bc} - 2 \alpha_3 R g^{bc} ) \mathscr{L}_\xi g_{bc} \Big).
\end{split} \label{eq:S_form_curv2}
\end{align}
As will be discussed in section \ref{dwent}, this yields a formula for $S[\mathcal C]$ that differs from the Dong--Wall entropy in a manner analogous to how our formula for $S[\mathcal C]$ in general relativity differs from the Bekenstein--Hawking entropy.

\section{The Physical Process First Law and the Second Law}
\label{pp12}

As explained in the Introduction, the main motivation for our new definition of dynamical black hole entropy is to obtain a local in time version of the physical process version of the first law of black hole mechanics. In this section, we derive the physical process version of the first law and discuss the second law of black hole mechanics.
As stated at the end of section \ref{lagkh}, we restrict consideration to Lagrangians for which the only dynamical field is the metric. However, in this context, there are two distinct cases that we consider. 

The first case is where an external stress-energy is present in the first order perturbation. Here, by an ``external stress-energy,'' we mean a source term in the first order gravitational equations arising from matter that is not represented in the Lagrangian. In this case, as we shall see, there can be nontrivial time dependence of the dynamical black hole entropy $S$ at first order when the external matter crosses the horizon. We show that the entropy change is related to the energy flux by the first law of black hole mechanics. As an immediate corollary, we see that if the external matter satisfies the null energy condition, then the second law of black hole mechanics holds, i.e., the entropy is non-decreasing with time in first order perturbation theory. These results hold for an arbitrary theory of gravity.

The second case is that of pure vacuum perturbations. In this case, $S$ does not change with time at first order, so we have to go to second order in perturbation theory to investigate the time dependence of $S$. We find that at second order, results similar to the external matter case hold, with the first order energy flux of external matter replaced by the second order flux of ``modified canonical energy'' \cite{Hollands_Wald_2013}. The flux of modified canonical energy is positive in general relativity---it is proportional to the square of the first order perturbed shear of the horizon---so the second law of black hole mechanics holds at second order in general relativity. However, the flux of modified canonical energy would not be expected to be positive in general theories of gravity, so we would not expect a second law to hold for vacuum perturbations in general theories.

\subsection{Perturbations with External Matter Sources}
\label{extmat}

For an arbitrary Lagrangian theory of gravity where the only dynamical field is the metric, we consider a stationary black hole with horizon Killing field $\xi^a$ that satisfies the vacuum field equations $({E}_G)_{ab} = 0$. We consider perturbations, $\delta g_{ab}$, of this spacetime that are sourced by a matter stress-energy tensor, $\delta T_{ab}$, that is conserved, $\nabla^a \delta T_{ab} = 0$. Thus, $\delta g_{ab}$ satisfies
\be
2 \delta ({ E}_G)_{ab} = \delta T_{ab}.
\label{lineom}
\ee 
The perturbed matter energy flux $(n-1)$-form $\textbf{e}$ on the horizon is given by 
\be
\delta e_{a_1 \cdots a_{n-1}} \equiv -\xi^b \delta {T_b}^c {\epsilon}_{ca_1 \cdots a_{n-1}} = \delta T_{bc} \xi^b \xi^c {\epsilon}^{(n-1)}_{a_1 \cdots a_{n-1}}
\ee
where ${\epsilon}^{(n-1)}_{a_1 \cdots a_{n-1}}$ was defined by \eqref{eq:or_conv_pt_1}.
We have the following theorem:

\begin{theorem}[Physical process version of the first law \cite{Gao_Wald_2001}]
\label{thm:phys_pro_ver_1st_law}
Let $\mathcal C_1$ and $\mathcal C_2$ be arbitrary cross-sections of the horizon $\mathcal H$ with $\mathcal C_2$ lying to the future of $\mathcal C_1$. Let $\Delta \delta S$ denote the perturbed entropy difference between the cross-sections
\be
\Delta \delta S = \delta S[\mathcal C_2] - \delta S[\mathcal C_1]
\ee
and let $\Delta \delta E$ denote the perturbed total energy flux between the cross-sections 
\be
\Delta \delta E = \int_{\mathcal{H}_{12}} \delta \textbf{\textup{e}} = -\int_{\mathcal{H}_{12}} \xi^b \delta {T_b}^c {\epsilon}_{ca_1 \cdots a_{n-1}} = \int_{\mathcal{H}_{12}} \delta T_{bc} \xi^b \xi^c {\epsilon}^{(n-1)}_{a_1 \cdots a_{n-1}}
\label{enflux}
\ee
where ${\mathcal{H}_{12}}$ denotes the portion of $\mathcal{H}$ bounded by $\mathcal C_1$ and $\mathcal C_2$. Then we have
\begin{align}
    \frac{\kappa}{2\pi} \Delta \delta S ={}& \Delta \delta E .
\label{eq:phys_pro_ver_local_eqn}
\end{align} 
\end{theorem}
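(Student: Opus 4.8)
The plan is to integrate the fundamental identity \eqref{eq:fund_id} over the horizon region $\mathcal{H}_{12}$, exactly as in the original derivation of the physical process first law \cite{Gao_Wald_2001}, but now exploiting the total-variation property \eqref{eq:sym_pot_tot_var} established in Section \ref{Bconstruction} to express the boundary contributions directly in terms of $\delta S[\mathcal C]$ on arbitrary cross-sections. First I would specialize \eqref{eq:fund_id} to $\chi^a = \xi^a$. Since $\xi^a$ is a Killing field of the background, $\mathscr{L}_\xi \phi = 0$ and the symplectic current on the left-hand side vanishes; since the background is a vacuum solution, $\textbf{E}(\phi) = 0$ and the first term on the right vanishes as well. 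What remains is precisely the on-horizon identity \eqref{fundidbh2},
\be
d(\delta \textbf{Q}[\xi] - \xi \cdot \boldsymbol{\theta}) \horeq -\xi^a \delta \boldsymbol{\textbf{C}}_a,
\ee
where the first variation of the constraint is evaluated using the sourced linearized equation \eqref{lineom}: specializing \eqref{eq:constraints_eom} to the case where only the metric is present gives $\delta {C}_{a a_1 \cdots a_{n-1}} = 2\,\delta(E_G)^c{}_a\, \epsilon_{c a_1 \cdots a_{n-1}} = \delta {T}_a{}^c \epsilon_{c a_1 \cdots a_{n-1}}$.

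Next I would pull this identity back to $\mathcal{H}$ and integrate over $\mathcal{H}_{12}$. Because $d$ commutes with pullback, Stokes' theorem (with the orientation conventions \eqref{eq:or_conv_pt_1}--\eqref{eq:or_conv_pt_2}) converts the left-hand side into the difference of integrals over the bounding cross-sections,
\be
\int_{\mathcal{H}_{12}} d(\delta \textbf{Q}[\xi] - \xi \cdot \boldsymbol{\theta}) = \int_{\mathcal C_2}(\delta \textbf{Q}[\xi] - \xi \cdot \boldsymbol{\theta}) - \int_{\mathcal C_1}(\delta \textbf{Q}[\xi] - \xi \cdot \boldsymbol{\theta}).
\ee
The crucial step is to recognize each boundary integral as $\tfrac{\kappa}{2\pi}\,\delta S[\mathcal C]$. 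This follows because $\xi^a$ is tangent to $\mathcal{H}$, so contraction into $\boldsymbol{\theta}$ commutes with pullback to $\mathcal{H}$; hence on each cross-section $\xi \cdot \boldsymbol{\theta} \horeq \xi \cdot \underline{\boldsymbol{\theta}} \horeq \xi \cdot \delta \textbf{B}_{\mathcal H} = \delta(\xi \cdot \textbf{B}_{\mathcal H})$, where I used \eqref{eq:sym_pot_tot_var} and $\delta \xi^a = 0$. Comparing with the first variation of \eqref{def:entropy_S}, each boundary term equals exactly $\tfrac{\kappa}{2\pi}\,\delta S[\mathcal C]$, so the left-hand side equals $\tfrac{\kappa}{2\pi}\,\Delta\delta S$.

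Finally, for the right-hand side I would identify $-\xi^a \delta \boldsymbol{\textbf{C}}_a$ with the matter energy-flux form: contracting the expression for $\delta {C}_{a\cdots}$ above with $-\xi^a$ gives $-\xi^a \delta {T}_a{}^c \epsilon_{c a_1 \cdots a_{n-1}}$, which is precisely $\delta \textbf{e}$ as defined just before the statement. Integrating over $\mathcal{H}_{12}$ yields $\Delta\delta E$, and equating the two sides completes the proof. I expect the only delicate point to be the boundary identification in the middle step: it relies essentially on the result \eqref{eq:sym_pot_tot_var} of Section \ref{Bconstruction}---without which $\xi \cdot \boldsymbol{\theta}$ could not be written as a total variation and the cross-section integral would not define $\delta S[\mathcal C]$---together with the tangency of $\xi^a$ to $\mathcal{H}$ that allows the interior product to pass through the pullback. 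Everything else is a direct application of Stokes' theorem and the explicit form of the constraints.
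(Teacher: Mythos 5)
Your proposal is correct and follows essentially the same route as the paper's own proof: specialize the fundamental identity \eqref{eq:fund_id} to $\chi^a=\xi^a$ (killing the symplectic current and background equation-of-motion terms), use \eqref{eq:constraints_eom} with \eqref{lineom} to identify $-\xi^a\delta\boldsymbol{\textbf{C}}_a$ with $\delta\textbf{e}$, use \eqref{eq:sym_pot_tot_var} together with $\delta\xi^a=0$ to recognize $\delta\textbf{Q}[\xi]-\xi\cdot\boldsymbol{\theta}$ as $\tfrac{\kappa}{2\pi}\delta\textbf{S}$, and apply Stokes' theorem over $\mathcal{H}_{12}$. The only cosmetic difference is ordering—the paper rewrites the identity as $\tfrac{\kappa}{2\pi}d\delta\textbf{S}=\delta\textbf{e}$ before integrating, while you integrate first and then identify the boundary terms—and your explicit remark that tangency of $\xi^a$ to $\mathcal{H}$ lets the interior product pass through the pullback is a point the paper uses implicitly.
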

\begin{proof}
The pullback to the horizon of the fundamental identity \eqref{eq:fund_id} with $\chi^a$ taken to be the horizon Killing vector field $\xi^a$ yields
    \begin{align}
       d[\delta \textbf{Q}[\xi] - \xi \cdot \boldsymbol{\theta}] ={}& -\xi^a \delta {\textbf{C}}_a .
       \label{fundid3}
    \end{align} 
Using the definition \eqref{entform} of $\textbf{S}$ together with \eqref{eq:sym_pot_tot_var}, we can rewrite the left side of \eqref{fundid3} as
    \begin{align}
      d[\delta \textbf{Q}[\xi] - \xi \cdot \boldsymbol{\theta}]  ={}& \frac{\kappa}{2\pi} d \delta \textbf{S} .
    \end{align} 
Using the general formula \eqref{eq:constraints_eom} with ${ E}_M = 0$ for $ \boldsymbol{\textbf{C}}_a$ together with \eqref{lineom}, we can rewrite the right side of \eqref{fundid3} as 
\be    
    -\xi^a \delta {\textbf{C}}_a = \delta \textbf{e}
\ee
so we have 
\be
\frac{\kappa}{2\pi} d \delta \textbf{S} = \delta \textbf{e}.
\ee
Integration of this equation over the region $\mathcal{H}_{12}$ of the horizon bounded by $\mathcal C_1$ and $\mathcal C_2$ then immediately yields \eqref{eq:phys_pro_ver_local_eqn}.

\end{proof}

\noindent
{\em Remark:} In the case of general relativity, \eqref{eq:phys_pro_ver_local_eqn} yields
\be
\frac{\kappa}{8 \pi} \Delta \left[  A 
- \int_{\mathcal C} V \vartheta \, \boldsymbol{\epsilon}^{(n-2)}\right] = \Delta E.
\ee
This result can also be derived by multiplying the Raychaudhuri equation by $\kappa V$ and integrating the resulting expression between the two cross-sections \cite{Mishra:2017sqs,Visser:2024pwz}.

\bigskip

There are two important immediate consequences of Theorem \ref{thm:phys_pro_ver_1st_law}. First, if $\delta T_{bc}$ satisfies the null energy condition (i.e., if $\delta T_{bc} n^b n^c \geq 0$ for all null $n^a$), then it follows immediately from \eqref{enflux} that $\Delta \delta E \geq 0$. It then follows immediately from Theorem \ref{thm:phys_pro_ver_1st_law} that $\Delta \delta S \geq 0$, i.e., $ \delta S[\mathcal C_2] \geq \delta S[\mathcal C_1]$ whenever $\mathcal C_2$ lies to the future of $\mathcal C_1$. Thus, in an arbitrary theory of gravity, the second law of black hole mechanics holds to first order for perturbations of a black hole with external matter satisfying the null energy condition. 

Second, Theorem \ref{thm:phys_pro_ver_1st_law} shows that, to first order, the entropy of a black hole changes between cross-sections $\mathcal C_1$ and $\mathcal C_2$ if and only if there is a net energy flux of matter between $\mathcal C_1$ and $\mathcal C_2$. Thus, the increase in entropy of the black hole has an entirely local cause. In the case of general relativity, this contrasts sharply with the properties of the Bekenstein--Hawking entropy, $A/4$. The event horizon of a black hole is ``teleological'' in nature, i.e., its location is determined by properties of the spacetime in the asymptotic future. If one is going to throw matter into a black hole at a later time, then at early times the event horizon area will have already slightly increased in anticipation of the later arrival of the matter. In other words, if one makes a firm decision to throw a baseball into a black hole, the Bekenstein--Hawking entropy of the black hole will have increased while the baseball is still in one's hand. Our definition of the entropy of a dynamical black hole does not have this property. Although the $A/4$ term in \eqref{eq:entropy_gr} will have increased while the baseball is still in one's hand, the horizon will also be expanding at that time, so the second term in \eqref{eq:entropy_gr} will contribute negatively. Theorem \ref{thm:phys_pro_ver_1st_law} shows that these two terms cancel at first order, so there is no change in the black hole entropy while the baseball is still in one's hand. The entropy of the black hole changes only when the baseball crosses the horizon.

\subsection{Vacuum Perturbations}
\label{vacpert}

If we have no external matter sources, $\delta T_{ab} = 0$, then Theorem \ref{thm:phys_pro_ver_1st_law} states that $\Delta \delta S = 0$, i.e., there is no change of black hole entropy with time at first order. Thus, we must go to second order in perturbation theory to obtain the leading order dynamical behavior of black hole entropy. To do so, we use the varied fundamental identity \eqref{eq:fund_id_var} with $\chi^a = \xi^a$. Since we assume that the vacuum equations of motion hold, the first three terms on the right side of \eqref{eq:fund_id_var} vanish\footnote{The first two terms would vanish in any case under pullback to $\mathcal H$.}. We obtain\footnote{Here $\delta^2$, the second variation, refers to the second derivative with respect to a parameter $s$ of a family 
of metrics $g^{(s)}_{ab} = g_{ab} + s\delta g_{ab} + \tfrac{1}{2} s^2 \delta^2 g_{ab} + O(s^3)$.}
\be
    \boldsymbol{\omega}(g, \delta g,\mathscr{L}_\xi \delta g)=  d [\delta^2 \textbf{Q}[g;\xi] - \xi \cdot \delta \boldsymbol{\theta}(g,\delta g)] \, .
\ee
We can rewrite this equation as 
\begin{align}
    \boldsymbol{\omega}(g,\delta g,\mathscr{L}_\xi \delta g) + d [\xi \cdot \delta \boldsymbol{\theta} (g, \delta g) - \xi \cdot \delta^2 \textbf{B}_{\mathcal H}(g))]={}&  d [\delta^2\textbf{Q}[g;\xi] - \xi \cdot \delta^2 \textbf{B}_{\mathcal H}(g)] =  \frac{\kappa}{2\pi} d \delta^2 \textbf{S}(g).
    \label{modce}
\end{align}
The term $\boldsymbol{\omega}(g,\delta g,\mathscr{L}_\xi \delta g)$ is known as the {\em canonical energy} $(n-1)$-form \cite{Hollands_Wald_2013} of the metric perturbation $\delta g$. We refer to entire left side of \eqref{modce} as the {\em modified canonical energy} $(n-1)$-form $\textbf{e}_G$,
\be
\textbf{e}_G(g;\delta g, \delta g) \equiv \boldsymbol{\omega}(g,\delta g,\mathscr{L}_\xi \delta g) + d [\xi \cdot \delta \boldsymbol{\theta} (g, \delta g) - \xi \cdot \delta^2 \textbf{B}_{\mathcal H}(g))].
\label{modce1}
\ee
Note that 
$\textbf{e}_G$ is quadratic in $\delta g_{ab}$ and does not depend on $\delta^2 g_{ab}$.
The modified canonical energy was introduced in \cite{Hollands_Wald_2013} for the case of general relativity as a means of obtaining a quantity with a locally positive flux whose total flux through the black hole horizon agrees with the total canonical energy flux. For the case of general relativity, the modified canonical energy flux is given by 
\be
\label{modceGR}
\textbf{e}_G = \frac{1}{4\pi}  (\kappa V)^2 \delta \sigma_{cd} \delta \sigma^{cd} \, \boldsymbol{\epsilon}^{(n-1)} ,
\ee
(see eq.~(105) of \cite{Hollands_Wald_2013}), where $V$ is an affine parameter, $\sigma_{cd}$ is the shear of the generators of the horizon with respect to $V$, and we have used the fact \cite{Hollands_Wald_2013} that $\delta \vartheta|_{\mathcal H}=0$ for vacuum perturbations. 
Although the modified canonical energy was introduced in an entirely ad hoc way in \cite{Hollands_Wald_2013}, we now see that it naturally enters the formula for the local change in dynamical black hole entropy at second order.

From \eqref{modce} we see that the situation for vacuum perturbations at second order corresponds to the situation for external matter perturbations at first order, with the modified canonical energy flux $\textbf{e}_G$ replacing the matter stress-energy flux $\textbf{e}$. In particular, for vacuum perturbations we have 
\be
\label{entropyflux}
\frac{\kappa}{2\pi} \Delta \delta^2 S = \int_{\mathcal{H}_{12}} \textbf{e}_G(g; \delta g, \delta g).
\ee
It follows immediately that, in an arbitrary theory of gravity, the second law of black hole mechanics will hold at second order for vacuum perturbations if and only if the modified canonical energy flux through the horizon is positive. This is the case in general relativity. However, it seems unlikely that this will be the case in more general theories of gravity even if we allow ourselves to modify the entropy making use of the permissible ambiguities (see appendix \ref{hkr} for further discussion).

\section{Relationship with the Dong--Wall Entropy}
\label{dwent}

In 2013, Dong \cite{Dong2013} proposed a formula for dynamical black hole entropy for theories obtained from a Lagrangian that depends upon the curvature but not derivatives of the curvature. In 2015, Wall \cite{Wall2015} proposed a formula for dynamical black hole entropy for general Lagrangian theories, valid to linear order for perturbations of a stationary black hole. Wall's approach for obtaining dynamical black hole entropy does not have any obvious relationship with Dong's. Nevertheless, for the case of a Lagrangian that depends upon the curvature but not its derivatives, Wall's formula was found to agree \cite{Wall2015} with the linearization of Dong's formula. In Appendix \ref{dongen}, we show that when we apply the ideas of Wall's method as elucidated in this section to calculate entropy in dilaton gravity models, it agrees with the linearization of the entropy obtained by Dong and Lewkowycz \cite{Dong2018} using Dong's approach. Thus, we believe that Wall's entropy agrees with the linearization of Dong's entropy in general. We will therefore refer to this entropy as the Dong--Wall entropy and will denote it by $S_{\rm DW}$. Nevertheless, it should be kept in mind that general agreement has not been proven. In the event that they do not agree, the entropy defined in this section would correspond to that proposed by Wall \cite{Wall2015}.

In this section, we will give a derivation of the Dong--Wall entropy based on the fundamental identity \eqref{eq:fund_id}. We will then 
obtain a simple, general relationship between the Dong--Wall entropy and our entropy, valid up to linear order in perturbations about a stationary black hole. As we shall see, the relationship between the Dong--Wall entropy and our entropy in general theories of gravity is very similar to the relationship \eqref{eq:entropy_gr} between the Bekenstein--Hawking entropy and our entropy in general relativity.

The starting point of our derivation of the Dong--Wall entropy is the formula \eqref{eq:constraints_eom} for the pullback of the linearized constraints to the horizon, which take the form 
\begin{align}
    \xi^a \delta {\textbf{C}}_a ={}& - 2 (\delta {E}_G)_{ab} \xi^a \xi^b \boldsymbol{\epsilon}^{(n-1)},
    \label{consform}
\end{align} 
where $\boldsymbol{\epsilon}^{(n-1)}$ was defined by \eqref{eq:or_conv_pt_1}. The linearized equations of motion $(\delta {E}_G)_{ab}$ take the general form
\begin{align}
    \delta {E}_G^{ab} ={}& \sum_{i=0}^{k} U_{(i)}^{ab c_1 \cdots c_i de}\nabla_{(c_1} \cdots \nabla_{c_i)} \delta g_{de}, 
    \label{EGform}
\end{align}
where the tensors $U_{(i)}^{ab c_1 \cdots c_i de} = U_{(i)}^{(ab) (c_1 \cdots c_i) (de)} $ are smooth and are locally and covariantly constructed from the metric, curvature, and covariant derivatives of the curvature. This is closely analogous to the quantity $\alpha^a$ in \eqref{eq:alpha_gen_form} except that there are two free indices on $\delta {E}_G^{ab}$ instead of the single free index on $\alpha^a$. Both of these free indices on $\delta {E}_G^{ab}$ are contracted into $\xi^a$ in \eqref{consform}, as opposed to the single contraction that occurs in \eqref{eq:alpha_xi_final_form}. Thus, one might expect that a strengthened form of Theorem \ref{prop:sym_pot_vanish_stat_bgd_pert} should hold for 
$(\delta {E}_G)_{ab} \xi^a \xi^b$. We show in Appendix \ref{sec:appendix_a} that this is the case. Specifically, we show that $(\delta {E}_G)_{ab} \xi^a \xi^b$ can be written in the form 
\begin{align}
 ( \delta {E}_G)_{ab} \xi^a \xi^b    \horeq{}& \sum_{i=0}^{k-2} \tilde{U}_{(i)}^{c_1 \cdots c_i de} \nabla_{(c_1} \cdots \nabla_{c_i)} (\mathscr{L}^2_\xi - \kappa \mathscr{L}_\xi) \delta g_{de} 
 \label{EGform2}
\end{align} 
where the tensors $\tilde{U}_{(i)}^{c_1 \cdots c_i de}$ are smooth on $\mathcal H$ and are locally and covariantly constructed from the metric, curvature, and its derivatives as well as from $\xi^a$, and $N^a$ (see \eqref{Nprop1} and \eqref{Nprop2}), with $\xi^a$ and $N^a$ appearing only algebraically. Thus, we see that $( \delta {E}_G)_{ab} \xi^a \xi^b$ has the form
\be
( \delta {E}_G)_{ab} \xi^a \xi^b \horeq \kappa\mathscr{L}_\xi P(g,\xi; \delta g) - \mathscr{L}_\xi^2 P(g,\xi; \delta g) = (\kappa - \mathscr{L}_\xi) \delta P(g,\xi; \mathscr{L}_\xi  g)
\label{delY}
\ee
with
\be
P(g,\xi; \delta g) = -  \sum_{i=0}^{k-2} \tilde{U}_{(i)}^{c_1 \cdots c_i de} \nabla_{(c_1} \cdots \nabla_{c_i)} \delta g_{de}.
\ee

We define the Dong--Wall entropy $(n-2)$-form by\footnote{Re-analyzing Wall's derivation \cite{Wall2015}, the authors of \cite{Bhattacharyya:2021jhr} have emphasized the fact that the entropy integrand has a current part, i.e. it is an $(n-2)$-form. However, we have not been able to find a  relationship between their proposed formula and our formula \eqref{eq:S_DW_form}.}
\begin{align}
    \textbf{S}_{\rm DW} ={}& \textbf{S} + \frac{4\pi}{\kappa} P(g,\xi; \mathscr{L}_\xi  g) \boldsymbol{\epsilon}^{(n-2)} 
    \label{eq:S_DW_form}
\end{align}
where our entropy $(n-2)$-form $\textbf{S}$ was defined by \eqref{entform} and $\boldsymbol{\epsilon}^{(n-2)}$ was defined in \eqref{eq:or_conv_pt_2}.
It is clear that $\textbf{S}_{\rm DW} = \textbf{S}$ in the stationary background, wherein both are given by \eqref{Sstat}.
Since $\delta \textbf{S}$ is uniquely defined for arbitrary first order perturbations and $\delta P$ is uniquely determined by \eqref{delY}, we see that $\delta \textbf{S}_{\rm DW}$ is uniquely defined for arbitrary first order perturbations. However, 
there will be ambiguities in $\textbf{S}_{\rm DW}$ at second and higher orders resulting from the ambiguities in $\textbf{S}$ discussed at the end of subsection \ref{dbedef} (see also appendix \ref{app:C}) as well as from the additional ambiguities in defining $P$ at second and higher orders. The Dong--Wall entropy of a cross-section $\mathcal C$ is defined by
\be
S_{\rm DW} [\mathcal C] = \int_{\mathcal C} \textbf{S}_{\rm DW}.
\ee
Since $\textbf{S}_{\rm DW}$ does not depend on a choice of cross-section and $\textbf{S}_{\rm DW}$ is smooth on ${\mathcal H}$, it follows immediately that $S_{\rm DW}$ is cross-section continuous, as defined in \cite{Chen2022}.

A simple relationship between $\textbf{S}_{\rm DW}$ and $\textbf{S}$ for first order perturbations can now be obtained as follows. As in the proof of Theorem \ref{thm:phys_pro_ver_1st_law}, we have
\begin{align}
\label{SxiC}
\frac{\kappa}{2\pi}d \delta \textbf{S} = -  \xi^a \delta {\textbf{C}}_a.
\end{align}
Using \eqref{consform} and \eqref{delY}, we obtain
\begin{align}
    \frac{\kappa}{2\pi}d\delta \textbf{S} ={}& 2 (\kappa - \mathscr{L}_\xi) \left[ \delta P(g,\xi; \mathscr{L}_\xi  g) \boldsymbol{\epsilon}^{(n-1)} \right] \, .
    \label{dSY}
\end{align}
Applying the general identity \eqref{eq:Lie_d_id} to $\mathscr{L}_\xi  [\delta P(g,\xi; \mathscr{L}_\xi  g) \boldsymbol{\epsilon}^{(n-1)}]$, we obtain
\be
\mathscr{L}_\xi  [\delta P \boldsymbol{\epsilon}^{(n-1)}] = d [\delta P \xi \cdot \boldsymbol{\epsilon}^{(n-1)}] + \xi \cdot d [\delta P  \boldsymbol{\epsilon}^{(n-1)}] = d [\delta P \boldsymbol{\epsilon}^{(n-2)}] 
\label{lieYid} 
\ee
where the second term in the first equality vanishes because $\boldsymbol{\epsilon}^{(n-1)}$ is an $(n-1)$-form on the $(n-1)$-dimensional manifold $\mathcal H$, so its ``$d$'' vanishes. Thus, bringing the $\mathscr{L}_\xi  \delta P$ term on the right side of \eqref{dSY} to the left side and using \eqref{lieYid} and the definition \eqref{eq:S_DW_form} of $\textbf{S}_{\rm DW}$, we obtain
\be
\frac{\kappa}{2\pi}d\delta \textbf{S}_{\rm DW} = 2\kappa  \delta P(g,\xi; \mathscr{L}_\xi  g) \boldsymbol{\epsilon}^{(n-1)} .
\label{sdwy}
\ee
Canceling the factors of $\kappa$ and contracting both sides with $\xi^a$, we obtain
\be
\frac{1}{2\pi} \xi \cdot d\delta \textbf{S}_{\rm DW} = 2 \xi \cdot [\delta P(g,\xi; \mathscr{L}_\xi  g) \boldsymbol{\epsilon}^{(n-1)}] = 2 \delta P(g,\xi; \mathscr{L}_\xi  g) \boldsymbol{\epsilon}^{(n-2)} .
\ee
By the general identity \eqref{eq:Lie_d_id}, we have
\be
\xi \cdot d\delta \textbf{S}_{\rm DW} = \mathscr{L}_\xi \delta \textbf{S}_{\rm DW} - d[\xi \cdot \delta \textbf{S}_{\rm DW}] .
\ee 
Thus, we obtain
\be
 4\pi \delta P(g,\xi; \mathscr{L}_\xi  g) \boldsymbol{\epsilon}^{(n-2)} = \mathscr{L}_\xi \delta \textbf{S}_{\rm DW} - d[\xi \cdot \delta \textbf{S}_{\rm DW}] .
\label{Ydwe}
\ee
By the definition \eqref{eq:S_DW_form} of $\textbf{S}_{\rm DW}$, we have
\be
\delta \textbf{S} = \delta \textbf{S}_{\rm DW}  - \frac{4\pi}{\kappa} \delta P(g,\xi; \mathscr{L}_\xi  g) \boldsymbol{\epsilon}^{(n-2)} 
\ee
and substituting in \eqref{Ydwe} we have\footnote{Note that we can combine 
\eqref{eq:S_DW_S_Lie_xi_rel} and \eqref{SxiC} into the equation 
\be
\label{DW:cond}
2\pi k^a \delta \boldsymbol{\textbf C}_a = d({\mathscr L}_k \delta \textbf{S}_{\rm DW})
\ee
to first order, 
where $\xi^a = \kappa V k^a$ and $V$ is an affine parameter vanishing on $\mathcal B$
such that $k^a \nabla_a V = 1$. This is identical in form to the defining equation (to first order) of the Dong--Wall entropy as given in \cite[eq. 120]{Hollands_2022}
if we replace $l^a$ there with $k^a$, and the entropy-current $(n-2)$-form $\boldsymbol{s}$ defined by \cite[eq. 119]{Hollands_2022} with 
$\boldsymbol{S}_{\rm DW}$, noting that $\boldsymbol{F}_l$ in \cite[eq. 120]{Hollands_2022} vanishes up to and including first order variations. Also note that $\boldsymbol{s}$ is defined 
only to first order by \cite[eq. 120]{Hollands_2022}, and it is shown in \cite[Prop. 1]{Hollands_2022} that it can be modified at second order if necessary so as to be covariant. By contrast, our derivation automatically delivers a covariant form of $\boldsymbol{S}_{\rm DW}$.}
\be
\delta \textbf{S} = \delta \textbf{S}_{\rm DW} - \frac{1}{\kappa}\mathscr{L}_\xi \delta \textbf{S}_{\rm DW} + \frac{1}{\kappa} d [\xi \cdot \delta \textbf{S}_{\rm DW}] .
\label{eq:S_DW_S_Lie_xi_rel}
\ee
The last term is exact, so it does not contribute when one integrates over a cross-section $\mathcal C$ to get the Dong--Wall entropy, $\delta S_{\rm DW} [\mathcal C]$, of the cross-section. Thus, we obtain the desired general relationship between our entropy and the Dong--Wall entropy for an arbitrary cross-section $\mathcal C$
\be
\delta S[\mathcal C] = \delta S_{\rm DW} [\mathcal C] - \frac{1}{\kappa} \int_{\mathcal C}  \mathscr{L}_\xi \delta \textbf{S}_{\rm DW}.
\label{ssdwrel}
\ee

It is worth noting that if we multiply \eqref{delY} by $\boldsymbol{\epsilon}^{(n-2)}$ and use \eqref{Ydwe} we obtain
\be
( \delta {E}_G)_{ab} \xi^a \xi^b \boldsymbol{\epsilon}^{(n-2)} \horeq  \frac{1}{4\pi} \left(\kappa - \mathscr{L}_\xi \right) \left[\mathscr{L}_\xi \delta \textbf{S}_{\rm DW} - d(\xi \cdot \delta \textbf{S}_{\rm DW}) \right].
\ee
Integrating over an arbitrary cross-section $\mathcal C$, we obtain
\be
\int_{\mathcal C} ( \delta {E}_G)_{ab} \xi^a \xi^b \boldsymbol{\epsilon}^{(n-2)} =   \frac{1}{4\pi} \int_{\mathcal C}  \left(\kappa - \mathscr{L}_\xi \right) \mathscr{L}_\xi \delta \textbf{S}_{\rm DW}.
\ee
Let $V$ be an affine parameter for null geodesic generators of $\mathcal H$ with $V=0$ corresponding to the bifurcation surface. Let $k^a$ be the corresponding affinely parametrized tangent to the null generators. Then $\xi^a = \kappa V k^a$ on $\mathcal H$ and on differential forms we have 
$\kappa \mathscr{L}_\xi  - \mathscr{L}_\xi^2  = - \kappa^2 V^2 \mathscr{L}^2_k $ modulo exact terms. We therefore obtain
\be
\int_{\mathcal C} V^2 ( \delta {E}_G)_{ab} k^a k^b \boldsymbol{\epsilon}^{(n-2)}  = - \frac{1}{4\pi}  \int_{\mathcal C}  V^2 \mathscr{L}^2_k \delta \textbf{S}_{\rm DW}.
\ee
If $\mathcal C$ is a cross-section of constant $V$ (which always can be achieved for any given cross-section using the rescaling freedom of $V$), then we obtain
\be
\int_{\mathcal C}  ( \delta {E}_G)_{ab} k^a k^b \boldsymbol{\epsilon}^{(n-2)} =  - \frac{1}{4\pi} \frac{d^2}{dV^2}  \delta S_{\rm DW} (V) .
\label{dwent2}
\ee
This equation is the defining property\footnote{Note that there is a missing factor of $1/2\pi$ on the left side of eq.~(7) (the first law) in \cite{Wall2015}. Putting this factor back in and taking two derivatives in affine parameter of that equation, one obtains eq.~(8) of \cite{Wall2015} with a factor of $2\pi$ on the left side instead of on the right side. To match with \eqref{dwent2}, note that the Wall's equations of motion tensor $H_{ab}$ (defined in eq.~(2) in \cite{Wall2015}) is given by $H_{ab} = 2 ({E}_G)_{ab}$. } of the entropy given by Wall (see eqs.~(7) and (8) of \cite{Wall2015}). 

For the case of general relativity (i.e.~using the Lagrangian \eqref{eq:gr_Lagrangian}),  we have
\be
( \delta {E}_G)_{ab} k^a k^b = \frac{1}{16\pi}\delta R_{ab} k^a k^b = -\frac{1}{16\pi} \mathscr{L}_k \delta \vartheta  
\ee
where the linearized Raychaudhuri equation was used in the second equality. Thus, we have
\be
( \delta {E}_G)_{ab} \xi^a \xi^b = -\frac{1}{16\pi} \kappa^2 V^2 \mathscr{L}_k \delta \vartheta = 
\frac{1}{16\pi} (\kappa - \mathscr{L}_\xi)(\kappa V \delta \vartheta).
\ee
Comparing with \eqref{delY}, we obtain
\be
\delta P(g,\xi; \mathscr{L}_\xi  g) = \frac{1}{16\pi} \kappa V \delta \vartheta.
\ee
Substituting this result in \eqref{eq:S_DW_form}, we find that for general relativity, we have
\be
\delta \textbf{S}_{\rm DW} = \delta \textbf{S} + \frac{1}{4} V \delta \vartheta \boldsymbol{\epsilon}^{(n-2)} .
\ee
Thus, in general relativity the term arising from $P$ in $\textbf{S}_{\rm DW}$ cancels the term arising from $\textbf{B}_{\mathcal H}$ in our formula \eqref{entform} for $\textbf{S}$, and we are left with only the contribution from $\textbf{Q}$, which yields the Bekenstein--Hawking entropy. Thus, the Dong--Wall entropy equals the Bekenstein--Hawking entropy in general relativity.

The type of cancellation between the $\textbf{B}_{\mathcal H}$ and $P$ terms that occurs in the calculation of the Dong--Wall entropy in general relativity does not occur for higher derivative theories. 
In particular, for the Lagrangian
\be
\textbf{L} = R_{ab}R^{ab}\boldsymbol{\epsilon},
\label{ric2}
\ee
the equations of motion are
\begin{align}
    \begin{split}
	({E}_G)^{ab}= {}& \tensor{R}{^{(a}_{c d e}} \Big(g^{b)d}R^{e c}  - R^{ b) e}g^{c d} \Big) \\
	{}&- \nabla_d \nabla_e \Big(g^{d (a}R^{b) e} - g^{de} R^{ab} -g^{ab} R^{de} + g^{e(a}R^{b)d} \Big) - \frac{1}{2} R_{de}R^{de} g^{ab}.
	\end{split}
\end{align} 
This can be put in the form
\be
    (\delta {E}_G)_{ab} \xi^a \xi^b = \kappa \mathscr{L}_\xi P(g,\xi; \delta g)  - \mathscr{L}_\xi^2 P(g,\xi; \delta g) 
\ee
with
\begin{align}
    \begin{split}
    P(g,\xi; \delta g) ={}& 2 \xi^n N^m \delta R_{mn} - \frac{1}{2} g^{bd}q^{mn}\nabla_m \nabla_n \delta g_{bd} + g^{bd} q^{mn}\nabla_m \nabla_b \delta g_{nd} \\
	{}&- R_{fmqe}N^f \xi^m N^q \xi^e  g^{bd} \delta g_{bd} - g^{bd}q^{mn}R_{fbme} \xi^e N^f  \delta g_{nd} - q^{mn}\tensor{R}{^f_{bme}}N^b \xi^e \delta g_{nf}\\
    {}& - 2N^d \xi^f q^{mn} \nabla_m \nabla_d \delta g_{nf} + N^m \xi^f q^{bd} \nabla_b \nabla_d \delta g_{fm} - q^{pn} R_{mnpe}N^m \xi^e g^{ab}\delta g_{ab} 
    \end{split} \label{eq:Y_Rab2_legal}
\end{align}  
where $q^{ab} = g^{ab} - 2\xi^{(a}N^{b)}$ is the orthogonal projector to $\xi^a$ and $N^a$. In this case, it is easily seen that contribution to $\textbf{S}_{\rm DW}$ from $P$ does not cancel the contribution of $\textbf{B}_{\mathcal H}$ to $\textbf{S}$ (see \eqref{BHR2} with $\alpha_1 = \alpha_3 = 0$ and $\alpha_2 = 1$).

We now evaluate the perturbed Dong--Wall entropy $\delta S_{\rm DW}[\mathcal C]$ for the Lagrangian \eqref{ric2}, evaluated on an arbitrary cross section $\mathcal C$.
Taking the Lie derivative of \eqref{eq:Y_Rab2_legal}, we find, after some algebra, that $\mathscr{L}_\xi P(g,\xi; \delta g)$ can be put in the form
\begin{align}
\begin{split}
\mathscr{L}_\xi P(g,\xi; \delta g) \horeq {}& 2\xi^n N^m \mathscr{L}_\xi \delta R_{mn} + 2 q^{mn}\nabla_m (\delta R_{nf}\xi^f) \\
{}& + \frac{1}{2} q^{mn}\nabla_m \nabla_n (g^{ab}\mathscr{L}_\xi \delta g_{ab}) - q^{pn} R_{mnpe} N^m \xi^e g^{ab}\mathscr{L}_\xi \delta g_{ab} + g^{ab}\mathscr{L}_\xi \delta g_{ab} R_{mn}\xi^m N^n.
\end{split}\label{eq:Y_R_ab_Lie_rewritten}
\end{align} 

We now substitute \eqref{eq:Y_R_ab_Lie_rewritten} and \eqref{eq:S_form_curv2} (with $\alpha_1 = \alpha_3 = 0$ and $\alpha_2 = 1$)
into \eqref{eq:S_DW_form} and integrate the resulting $(n-2)$-form over $\mathcal C$. The computations needed for this are most readily done using Gaussian null coordinates based upon the cross-section $\mathcal C$ and the Killing parameterization. The metric in these coordinates takes the form (see e.g. \cite{Hollands:2006rj})\footnote{ The Killing Gaussian null coordinates $v,r$ in \eqref{gnck} are related to the affine Gaussian null coordinates $V, \rho$ of \eqref{affgnc} by $V = e^{\kappa v}$, $\rho= \kappa^{-1}e^{-\kappa v} r$. The quantities $\beta_A$ and $\gamma_{AB}$ of \eqref{gnck} are the same quantities that appear in \eqref{affgnc}, but $\tilde{\alpha}$ is a different quantity from $\alpha$.}
\be
ds^2 = 2dvdr - 2r \tilde{\alpha} dv^2 - 2r\beta_A dvdx^A + \gamma_{AB} dx^A dx^B
\label{gnck}
\ee
with $\xi^a = (\partial/\partial v)^a$ and $\tilde{\alpha} \horeq \kappa$. Here $\mathcal H$ corresponds to $r=0$ and the cross-section $\mathcal C$ of $\mathcal H$ corresponds to $v=0$. 
We can express $N^a$ and $q^{ab}$ on $\mathcal H$ in terms of $\xi^a$, $l^a \equiv (\partial/\partial r)^a$, and $\beta_a = \beta_A (dx^A)_a$ as follows: 
\begin{align}
    N^a \horeq {}& l^a + \frac{1}{2\kappa}\beta^a - \frac{1}{8\kappa^2} \beta^p \beta_p \xi^a\\
    q^{ab} \horeq {}& \gamma^{ab} - \frac{1}{\kappa}\xi^{(a}\beta^{b)} + \frac{1}{4\kappa^2}\beta_p \beta^p \xi^a \xi^b \, .
\end{align} 
Using these relations for $N^a$ and $q^{ab}$ to rewrite \eqref{eq:Y_R_ab_Lie_rewritten}, we find 
\begin{align}
\begin{split}
\frac{4\pi}{\kappa} \mathscr{L}_\xi P(g,\xi; \delta g) \boldsymbol{\epsilon}^{(n-2)} ={}& 2\pi \mathscr{L}_\xi  \delta \Big[ \boldsymbol{\epsilon}^{(n-2)} \Big(4 \xi^m l^c  R_{mc}  - g^{ab}\mathscr{L}_\xi g_{ab} \gamma^{pq}\nabla_p l_q \Big) \Big]\\
{}&+ \frac{2\pi}{\kappa} {D}_q \Big(\boldsymbol{\epsilon}^{(n-2)} \Big[ 2  \delta R_{nf}\xi^f \gamma^{qn} - \frac{1}{\kappa}\beta^q \xi^m \xi^f \delta R_{mf} \\
{}& + \frac{1}{2}  D^q (g^{ab}\mathscr{L}_\xi \delta g_{ab}) -  \frac{1}{2}\beta^q g^{ab}\mathscr{L}_\xi \delta g_{ab}\Big] \Big) \, .
\end{split}  \label{eq:Y_Rab2_full_simp}
\end{align}
The last two lines are a total spatial divergence, which vanish when integrated over $\mathcal C$. 
The entropy $(n-2)$-form for the Lagrangian \eqref{ric2} is given by \eqref{eq:S_form_curv2} with $\alpha_1 = \alpha_3 = 0$ and $\alpha_2 = 1$, i.e.,
\begin{align}
\begin{split}
{S}_{a_1 \cdots a_{n-2}} ={}& -\frac{2\pi}{\kappa} \Big( {\epsilon}_{cda_1 \cdots a_{n-2}} \Big[ 2g^{ce} R^{df} \nabla_{[e}\xi_{f]} + (g^{cm} \nabla^d R - 2 \nabla^c R^{md})\xi_m\Big] \\
{}&+ {\epsilon}^{(n-2)}_{a_1 \cdots a_{n-2}} (R_{ad}\xi^a N^d g^{bc} + R^{bc}) \mathscr{L}_\xi g_{bc} \Big)\, .
\end{split}
\end{align} Taking a variation of this quantity, pulling it back to $\mathcal C$, and evaluating it in Gaussian null coordinates in a manner similar to what was done above for $P$, we find, after some algebra, that 
\begin{align}
\begin{split}
\delta \textbf{S} \horeq{}& 2\pi \Big( \delta \Big[ \boldsymbol{\epsilon}^{(n-2)} \Big( 4 R_{mn} \xi^m l^n - g^{ab} \mathscr{L}_\xi g_{ab} \gamma^{pq} \nabla_p l_q\Big) \Big] \\
{}&- \frac{1}{\kappa}\mathscr{L}_\xi \delta \Big[  \boldsymbol{\epsilon}^{(n-2)} \Big( 4 R_{mn}\xi^m l^n  - g^{ab} \mathscr{L}_\xi g_{ab} \gamma^{pq} \nabla_p l_q\Big) \Big] \\
{}& - \frac{2}{\kappa} {D}_m \Big[\boldsymbol{\epsilon}^{(n-2)}  \delta R_{dn}\xi^d \gamma^{mn} \Big] \Big)\, .
\end{split} \label{eq:S_Rab2_cs_simp}
\end{align}
Note that the form of this equation---where the second term is minus the 
Lie derivative of the first term and the third term is exact---is to be expected in view of  \eqref{eq:S_DW_S_Lie_xi_rel}. The last line of \eqref{eq:S_Rab2_cs_simp} vanishes when integrated over $\mathcal C$. Adding 
\eqref{eq:Y_Rab2_full_simp} and \eqref{eq:S_Rab2_cs_simp} and integrating over $\mathcal C$, we find that the Dong--Wall entropy is
\begin{align}
\label{Ricci2DW}
    \delta S_{\rm DW}[\mathcal C] ={}& 2\pi \delta \int_{\mathcal C} \boldsymbol{\epsilon}^{(n-2)} (4 R_{mc}\xi^m l^c - 2 K \bar K)
\end{align} where 
\begin{align}
    K ={}& \frac{1}{2}g^{ab}\mathscr{L}_\xi g_{ab} \\
    \bar K ={}& \gamma^{pq} \nabla_p l_q
\end{align} are the traces of the extrinsic curvatures in the $\xi$ and $l$ directions, respectively. This agrees with Wall's formula \cite{Wall2015} and with the linearization of Dong's formula \cite{Dong2013}.

Finally, we note that if we integrate \eqref{sdwy} between a cross-section $\mathcal C_1$ and another cross-section $\mathcal C_2$ lying to the future of $\mathcal C_1$, we obtain
\be
\delta S_{\rm DW}[\mathcal C_2] - \delta S_{\rm DW}[\mathcal C_1] = 4 \pi \int_{\mathcal H_{12}} \delta P \boldsymbol{\epsilon}^{(n-1)} 
\label{sdwy2}
\ee where $\mathcal H_{12}$ denotes the portion of $\mathcal H$ bounded by $\mathcal C_1$ and $\mathcal C_2$ and we have written $\delta P$ instead of $\delta P(g,\xi; \mathscr{L}_\xi  g)$ to make the notation less cumbersome. Thus, the Dong--Wall entropy will satisfy the second law of black hole mechanics at first order if and only if $\delta P (x) \geq 0$ for all $x \in {\mathcal H}$. However, from \eqref{delY}, we see that if the matter stress-energy tensor satisfies the null energy condition, then we have
\be
- e^{\kappa v} \mathscr{L}_\xi\left(e^{-\kappa v} \delta P \right) = (\kappa - \mathscr{L}_\xi) \delta P = ( \delta {E}_G)_{ab} \xi^a \xi^b = \frac{1}{2} ( \delta T_{ab}) \xi^a \xi^b\geq 0 \,
\label{delY2}
\ee where $v$ is the Killing parameter along the null geodesics (see \eqref{kilpar}).
Thus, we have
\be
\mathscr{L}_\xi\left(e^{-\kappa v} \delta P \right) \leq 0.
\ee
For any $x \in {\mathcal H}$, let $\gamma$ denote the null geodesic of $\mathcal H$ passing through $x$. Integrating this equation along $\gamma$ with respect to $v$ from $x$ to infinity and making the additional assumption that $\lim_{v \to \infty} e^{-\kappa v} \delta P = 0$, we obtain
\be
\delta P (x) \geq 0 \, .
\ee
Thus the Dong--Wall entropy also satisfies the second law of black hole mechanics. However, as with the Bekenstein--Hawking entropy as discussed at the end of subsection \ref{extmat}, the Dong--Wall entropy may increase in anticipation of matter that will be thrown into the black hole at a later time.

\section{The Generalized second law and QNEC}
\label{gslq}

The \textit{generalized second law} (GSL) of thermodynamics states that the sum of the entropy of a black hole and the entropy of matter outside the black hole never decreases with time. The GSL was originally proposed by Bekenstein \cite{Bekenstein_1973} in order to rescue the second law of thermodynamics in view of the fact that the total entropy of matter in the universe will decrease if matter falls into a black hole; by assigning an entropy to a black hole, Bekenstein argued that the total (``generalized'') entropy would be non-decreasing even if the entropy of matter decreases. Bekenstein's proposal was originally made within the context of classical black hole physics, where it is not consistent because, classically, black holes have vanishing thermodynamic temperature, but they would have to be assigned a non-zero temperature in order to satisfy the first law of thermodynamics with a finite entropy. However, shortly thereafter, Hawking \cite{Hawking_1975} discovered that within the context of semiclassical physics, black holes radiate thermally at a temperature $T = \kappa/2 \pi$. This not only removed a major inconsistency of Bekenstein's proposal but it made the GSL the only viable candidate for a thermodynamic second law, since quantum fields can violate all local energy conditions and the second law of black hole mechanics will be violated if there is a flux of negative energy into the black hole---as occurs during the Hawking radiation process. Thus, matter entropy and black hole entropy can individually decrease in various circumstances; only their sum has a possibility of being non-decreasing in all circumstances.

There is considerable circumstantial evidence in support of the GSL. In particular, gedankenexperiments attempting to violate the GSL fail \cite{Unruh_Wald_1982} in a manner reminiscent of the failure of gedankenexperiments designed to violate the ordinary second law.
Of course, one would like to obtain a precise formulation and proof of the GSL rather than just having circumstantial evidence in support of it. However, even in the context of non-gravitational physics, there is no precise formulation and proof of the ordinary second law of thermodynamics. Rather, there is a basic argument for the ordinary second law, which traces back to Boltzmann. This argument involves the introduction of a coarse-grained observable to define the entropy of matter. In the context of classical physics, the entropy of a state is defined to be the logarithm of the volume of phase space corresponding to the value of the coarse-grained observable in that state. If the entropy of a state is not at its maximum value, dynamical evolution is very likely to evolve that state to a state of higher entropy. If the entropy of a state is at its maximum, it is very likely to remain in a maximum entropy state for much longer than any realistic observation time. Thus, this argument suggests that if one starts a system in a ``special'' (i.e., a low entropy) state, one will observe an entropy increase until it reaches an ``equilibrium'' (i.e., maximum entropy) state, at which point one will observe no further changes. A similar argument applies in the quantum context, with the phase space volume of the value of the coarse-grained observable replaced by the dimension of the eigensubspace of the eigenvalue of the coarse-grained observable \cite{Wald1979}. It should be emphasized that the von Neumann entropy of the quantum state does not play any role in this argument for justifying the validity of the ordinary second law in the quantum context \cite{Wald1979}. Indeed, for any isolated system in any quantum state, the von Neumann entropy is constant, but one would still expect to observe an entropy increase in such a system if one started it in a ``special'' state. 

It is therefore quite remarkable that a precise formulation and argument/proof of the GSL has been given for semiclassical general relativity, using von Neumann entropy as the notion of entropy of matter outside the black hole. In particular, Wall \cite{Wall_2012} has shown that
\begin{align}
    \frac{d}{dV} \left( \frac{A}{4} + S_{\rm vN}  \right)  \geq {}&0.
\end{align}
Here $A(V)/4$ is the Bekenstein--Hawking entropy of the black hole on the cross-section $\mathcal C$ corresponding to affine time $V$ and $S_{\rm vN}(V)$ is the von Neumann entropy of the quantum field matter outside the black hole at a ``time'' corresponding to a spacelike hypersurface extending from $\mathcal C$ to spatial infinity. In fact, $S_{\rm vN}(V)$ is ill defined on account of the infinite entanglement of the quantum field across the horizon. However, von Neumann entropy differences between states should be well defined, and that is what should be needed to make sense of $d S_{\rm vN}/dV$ in the GSL (see \cite{Kudler-Flam:2023hkl} for a recent proposal on how to rigorously define entropy differences in quantum field theory via a cross product construction).

In this section, we will investigate the validity of the GSL for first order perturbations of a stationary black hole in a general theory of gravity, using both our definition of entropy, $S[\mathcal C]$, and the Dong--Wall definition, $S_{\rm DW} [\mathcal C]$. Our analysis will be based upon the {\em quantum null energy condition} (QNEC) for quantum fields in fixed classical background, first introduced in \cite{Bousso2015}. QNEC asserts that if $\mathcal C$ is a cross-section of a null hypersurface $\mathcal N$ such that the expansion and shear of $\mathcal N$ vanish on $\mathcal C$, then for any point $p$ on $\mathcal C$ we have \cite{Bousso2015}
\begin{align}
   \sqrt{h} \langle T_{ab} \rangle  k^a k^b   \geq{}& \frac{1}{2\pi}\frac{\delta^2}
    {\delta V(p)^2} S_{\rm vN} ,
    \label{eq:qnec}
\end{align} 
where $h$ is the determinant of the induced metric on $\mathcal C$, $V(y)$ is the affine parameter (a function of the coordinates in the transverse direction $y$), and $S_{\rm vN}$ is the von Neumann entropy of the quantum field ``outside'' the null hypersurface (where ``outside'' means on a spacelike hypersurface whose future Cauchy horizon coincides with $\mathcal N$). For simplicity, we will restrict consderation here to the case where the deformation of $\mathcal C$ is a uniform translation in $V$. Integration of \eqref{eq:qnec} over $\mathcal C$ yields
\be
\int_{\mathcal C} \langle T_{ab}\rangle k^a k^b  \geq \frac{1}{2 \pi} \frac{d^2}{d V^2} S_{\rm vN}.
\label{eq:qnec2}
\ee
QNEC has been rigorously proven/formulated in \cite{CeyhanFaulkner_2020} within the general framework of half-sided modular inclusions and relative entropy, and this setting is expected to cover the case of a bifurcate Killing horizon in curved spacetime as considered here. We omit a more precise technical discussion of the setting and, for purposes of readability, we stick to informal notations such as \eqref{eq:qnec2}. In the remainder of this section, we will assume that \eqref{eq:qnec2} holds for an arbitrary cross-section $\mathcal C$ of a bifurcate Killing horizon. 

Equation \eqref{eq:qnec2} holds in a fixed background spacetime. We assume that for an unperturbed stationary black hole with a quantum field in a Hartle--Hawking-like vacuum state, both sides of \eqref{eq:qnec2} vanish. We now consider perturbing the quantum state, taking 
back-reaction into account by treating $\delta \langle T_{ab}\rangle$ as an ``external matter perturbation'' as in subsection \ref{extmat}. Then we have
\be
2\delta ({E}_G)_{ab} k^a k^b = \delta \langle  T_{ab} \rangle k^a k^b.
\label{lineom2}
\ee 
Consequently, when back-reaction is taken into account, QNEC in the form \eqref{eq:qnec2} yields
\be
\int_{\mathcal C} 2\delta ({E}_G)_{ab} k^a k^b  \geq \frac{1}{2 \pi} \frac{d^2}{d V^2} \delta S_{\rm vN}.
\label{eq:qnec3}
\ee

The GSL for the Dong--Wall definition of black hole entropy can now be obtained as follows. Using \eqref{dwent2}, we obtain
\be
- \frac{1}{2 \pi} \frac{d^2}{d V^2} \delta S_{\rm DW} \geq \frac{1}{2 \pi} \frac{d^2}{d V^2} \delta S_{\rm vN} \, ,
\label{gsldw1}
\ee
i.e.,
\be
\frac{d^2}{d V^2} \left[\delta S_{\rm DW} + \delta S_{\rm vN} \right] \leq 0 \, .
\label{gsldw2}
\ee
Integrating this equation from an initial cross-section to $V = \infty$ and making the additional assumption that $\lim_{V \to \infty} d\delta S_{\rm DW}/dV = \lim_{V \to \infty} d\delta S_{\rm vN}/dV = 0$, we obtain the desired GSL
\be
\frac{d}{d V} \left[\delta S_{\rm DW} + \delta S_{\rm vN} \right] \geq 0 \, .
\label{gsldw3}
\ee
This derivation corresponds closely to Wall's derivation of the GSL for general relativity \cite{Wall_2012}, although he based his argument on the monotonicity of the relative entropy rather than QNEC. The above argument shows that this form of the GSL corresponds to an integrated form of QNEC.

The GSL for our definition of black hole entropy can be obtained by evaluating the physical process version of the first law \eqref{eq:phys_pro_ver_local_eqn} on cross-sections $\mathcal C$ of constant $V$ and differentiating with respect to $V$.
We obtain 
\be
\frac{\kappa}{2 \pi} \frac{d}{dV} \delta S = \int_{\mathcal C} \delta \langle T_{ab} \rangle k^a \xi^b = \kappa V \int_{\mathcal C} \delta \langle T_{ab} \rangle k^a k^b \geq \frac{\kappa}{2 \pi} V \frac{d^2}{d V^2} \delta S_{\rm vN} .
\label{gsl1}
\ee
This relation can be put in a GSL form by defining a {\em modified von Neumann entropy} of matter, $\tilde{S}_{\rm vN}$, by
\be
\tilde{S}_{\rm vN} \equiv {S}_{\rm vN} - V \frac{d}{dV} {S}_{\rm vN}.
\label{modvn}
\ee
Thus, $\tilde{S}_{\rm vN}$ agrees with ${S}_{\rm vN}$ in stationary eras but has a dynamical correction when ${S}_{\rm vN}$ changes with time. Since 
\be
\frac{d}{dV} \tilde{S}_{\rm vN} = - V \frac{d^2}{dV^2} {S}_{\rm vN},
\ee
we see that \eqref{gsl1} takes the GSL form\footnote{The idea of defining a modified von Neumann entropy by \eqref{modvn} so as to be able to write \eqref{gsl1} in the form \eqref{gsl2} was suggested to us by Jon Sorce. It would be interesting to see if this definition of modified von Neumann entropy can be motivated by additional, independent arguments. In this regard, Igor Klebanov has pointed out to us that the expression $V \frac{d}{dV} {S}_{\rm vN}$ is formally equal to the 
monotonic $c$-function for $2$-dimensional CFTs defined by \cite{Casini:2006es}. 
The lightray algebras of a Killing horizon have connections with a chiral CFT, so this might be more than a coincidence.}
\be
\frac{d}{dV} \left[ \delta S +  \delta \tilde{S}_{\rm vN} \right] \geq 0 \, .
\label{gsl2} 
\ee
This version of the GSL directly corresponds to QNEC, rather than an integrated form of QNEC.

\newpage

\begin{acknowledgments}
We thank Tom Faulkner, Igor Klebanov, Harvey Reall, and Jon Sorce for discussions.
SH and RMW thank the Erwin Schrödinger Institut in Vienna, where part of this work has been completed, for its hospitality and support. 
We thank Thomas Endler from MPI-MiS for creating Fig.~\ref{fig:1} based on our suggestions and for giving us permission to include it in this publication.
SH is grateful to the Max-Planck Society for supporting the collaboration between MPI-MiS and Leipzig U., grant Proj.~Bez.~M.FE.A.MATN0003. The research of RMW and VGZ was supported in part by  NSF grant PHY-2105878 and Templeton Foundation grant 62845 to the University of Chicago.
\end{acknowledgments}

\appendix

\section{$S$ is the Area of the Apparent Horizon to First Order in General Relativity}
\label{sec:area_app_horizon}

The apparent horizon on an arbitrary Cauchy surface $\Sigma$ is the boundary of the region containing outer trapped surfaces that lie within $\Sigma$. The apparent horizon is a marginally outer trapped surface, i.e., its outgoing null expansion vanishes. If the null energy condition holds and if weak cosmic censorship is valid, then an apparent horizon must lie within a black hole, so apparent horizons are very useful for determining the presence of a black hole given only data on $\Sigma$. For a stationary black hole, the apparent horizon on any Cauchy surface $\Sigma$ is given by the intersection of $\Sigma$ with the event horizon $\mathcal H$. 

For a nonstationary black hole, the location of an apparent horizon will depend on the choice of Cauchy surface, and the area of an apparent horizon at a ``time'' (i.e., choice of Cauchy surface) corresponding to a cross-section $\mathcal C$ of the horizon will not, in general, be a meaningful concept. However, for first order perturbations of a stationary black hole with bifurcate Killing horizon, the notion of the area of an apparent horizon at a time corresponding to $\mathcal C$ is a well defined concept. To see this, we note that,
as proven in section 2.1 of \cite{Hollands_Wald_2013}, for a stationary black hole with bifurcate Killing horizon, any horizon cross-section $\mathcal C$ is a strictly stably outermost marginally trapped surface in the terminology of \cite{Andersson_2005}. This implies that given any function, $f$, on $\mathcal C$, there exists a unique function, $\psi$ on $\mathcal C$ such that if we deform $\mathcal C$ infinitesimally by $\psi l^a$ (with $l^a$ a fixed vector field tranverse to the horizon), then to first order, the outgoing expansion of $\delta \mathcal C$ will be $f$. Now suppose we perturb the black hole so that its perturbed expansion at $\mathcal C$ is positive, $\delta \vartheta > 0$. To find the apparent horizon at first order, we infinitesimally perturb $\mathcal C$ inwards by $\psi l^a$ and denote the first order change of the outgoing null-expansion $\vartheta$ by $\delta_{\psi l} \vartheta$. If we now choose $\psi$ so that $\delta_{\psi l} \vartheta=-\delta \vartheta$, then to first order, this uniquely locates an apparent horizon lying along the flow of $\psi l^a$ from $\mathcal C$. To first order, the area of the apparent horizon will not depend on the choice of $l^a$.

The computation of the location of the apparent horizon is most conveniently done using Gaussian null coordinates based upon the cross-section $\mathcal C$ and the Killing parametrization, as previously introduced in \eqref{gnck}.
We take $l^a = (\partial/\partial r)^a$, so $l^a$ is the (past directed) ingoing null normal to $\mathcal C$, normalized so that $l^a \xi_a = 1$.  The perturbed expansion of $\mathcal C$ along the horizon direction with respect to $\xi^a$ is then $\delta \tilde \vartheta = \kappa V \delta \vartheta$. The equation for $\psi$ that we must solve is
\be
\delta_{\psi l} \tilde \vartheta = - \delta  \tilde{\vartheta} = - \kappa V \delta \vartheta.
\label{thetavar}
\ee
The variation $\delta_{\psi l} \tilde \vartheta =: W\psi$ is given by a linear operator $W$  intrinsic to $\mathcal C$ acting on $\psi$ which has been given in \cite{Andersson_2005} (see also eqs.~(5.12) and (5.34) of \cite{Engelhardt_Wall_2019}) as\footnote{The differences between formula \eqref{Wop} for our operator defined by $\delta_{\psi l} \tilde \vartheta =: W\psi$ and the corresponding eq.~(1) of \cite{Andersson_2005} result from the facts that (i) we are considering a null displacement rather than a spacelike displacement and (ii) they set $\tilde{\vartheta} = 0$ since they are considering a marginally outer trapped surface. Equation \eqref{Wop} is valid for the change in the expansion of the orthogonal null geodesics of any $n-2$ dimensional surface $\mathcal C$ under an infinitesimal displacement of the surface by $\psi l^a$, where $l^a$ is the other null normal field to $\mathcal C$, normalized to have unit inner product with the null normal defining the expansion.} 
\be
W\psi = -D^a D_a \psi + \beta^a D_a \psi + \Big(G_{ab}\xi^a l^b + \frac{1}{2}( R[\gamma]  - \frac{1}{2}\beta^a \beta_a + D_a \beta^a) - \tilde \vartheta \overline{\tilde \vartheta} \Big) \psi
\label{Wop}
\ee
where $R[\gamma]$ denotes the scalar curvature of the induced 
derivative $D_a$ on $\mathcal C$ associated with the induced metric $\gamma_{ab}$, 
$\tilde \vartheta =\frac{1}{2} \gamma^{cd} {\mathscr L}_\xi \gamma_{cd}$ is the outgoing expansion of $\mathcal C$, and $\overline{\tilde \vartheta}=\frac{1}{2} \gamma^{cd} {\mathscr L}_l \gamma_{cd}$ the ingoing expansion of $\mathcal C$. 

We now use Einstein's equation $G_{ab}=0$ and the fact that $\mathcal H$ has vanishing outgoing shear and expansion in the stationary background. Using the fact that $\tilde{\alpha} \horeq \kappa$ and using Einstein's equation again, the scalar curvature of $\mathcal C$ is found to be (see also eq.~(82) of \cite{Hollands:2006rj})
\begin{align}
R[\gamma] ={}& D_a \beta^a + \frac{1}{2}\beta^a \beta_a + 2\kappa 
\overline{\tilde \vartheta}.
\end{align}
Thus, $\psi$ is determined by
\be
-D^a D_a \psi + \beta^a D_a \psi + (D_a  \beta^a) \psi +  \kappa \overline{\tilde \vartheta} \psi = - \kappa V \delta \vartheta.
\ee
The second and third terms on the left side combine to a total divergence $D_a (\psi \beta^a)$ intrinsic to $\mathcal C$. Integrating over $\mathcal C$, we see that $\psi$ satisfies the relation 
\be
\kappa \int_{\mathcal C} \overline{\tilde \vartheta}  \psi \, \boldsymbol{\epsilon}^{(n-2)} = -\kappa \int_{\mathcal C}   V\delta \vartheta \, \boldsymbol{\epsilon}^{(n-2)}.
\ee
On the other hand, the change in area resulting from displacing $\mathcal C$ infinitesimally along $\psi l^a$ is
\be
\delta A = \int_{\mathcal C} \psi \overline{\tilde \vartheta} \, \boldsymbol{\epsilon}^{(n-2)}. 
\ee
Thus, we see that the area of the apparent horizon differs from the area of $\mathcal C$ by
\be
\delta A =  \int_{\mathcal C}    \psi \overline{\tilde \vartheta} \, \boldsymbol{\epsilon}^{(n-2)} = - \int_{\mathcal C}   V \delta \, \vartheta \boldsymbol{\epsilon}^{(n-2)}.
\ee
Comparing with \eqref{eq:entropy_gr}, we see that up to first order, we have
\be
S[\mathcal C] = \frac{A_{\rm app}[\mathcal C]}{4}
\ee
where $A_{\rm app}[\mathcal C]$ is the area of the apparent horizon at the time corresponding to the cross-section $\mathcal C$ of the horizon.

\section{Generalization of Theorem \ref{prop:sym_pot_vanish_stat_bgd_pert} to the Case of a Tensor Dotted with Two $\xi$'s} 
\label{sec:appendix_a}

Consider the hypotheses of Theorem \ref{prop:sym_pot_vanish_stat_bgd_pert}, but with $\alpha^a$ in \eqref{eq:alpha_gen_form} replaced by
\begin{align}
    \beta^{ab} ={}& \sum_{i=0}^{k} T_{(i)}^{ab c_1 \cdots c_i de}\nabla_{(c_1} \cdots \nabla_{c_i)} \delta g_{de}. \label{eq:alpha_gen_form2}
\end{align}
We wish to obtain a strengthening of the conclusion \eqref{eq:alpha_xi_final_form} of Theorem \ref{prop:sym_pot_vanish_stat_bgd_pert} that is valid for the form of $\beta_{ab} \xi^a \xi^b$. Specifically, we will show that $\beta_{ab} \xi^a \xi^b$ can be written in the form
\begin{align}
    \beta_{ab}  \xi^a \xi^b  \horeq{}& \sum_{i=0}^{k-2} \tilde{T}_{(i)}^{c_1 \cdots c_i de} \nabla_{(c_1} \cdots \nabla_{c_i)} (\mathscr{L}_\xi - \kappa)\mathscr{L}_\xi \delta g_{de} \label{eq:alpha_xi_final_form2}
\end{align}
where the tensors $\tilde{T}_{(i)}^{b_1 \cdots b_i cd}$ are smooth on $\mathcal H$ and are locally and covariantly constructed from the metric, curvature, and its derivatives as well as from $\xi^a$, and $N^a$, with $\xi^a$ and $N^a$ appearing only algebraically.

To show this, we start with the highest derivative term in $\beta_{ab} \xi^a \xi^b$ and proceed as in the proof of Theorem \ref{prop:sym_pot_vanish_stat_bgd_pert}. Each term in the basis expansion of $\xi_a \xi_b T_{(k)}^{ab c_1 \cdots c_i de}$ corresponding to \eqref{eq:T_j_basis_exp} will now have at least two ``extra'' $\xi$'s, i.e., there will be at least 2 more $\xi$'s than $N$'s in each term. We can perform the exactly the same manipulations with one of these ``extra'' $\xi$'s as was done to derive \eqref{kderform} in the proof of Theorem \ref{prop:sym_pot_vanish_stat_bgd_pert}. We thereby obtain
\be
\xi_a \xi_b T_{(k)}^{ab c_1 \cdots c_k de}\nabla_{(c_1} \cdots \nabla_{c_k)}\delta g_{de}  \horeq  \tilde{T}_{(k-1)}^{c_1 \cdots c_{k-1} de}\nabla_{(c_1} \cdots \nabla_{c_{k-1})} {\mathscr L}_ \xi \delta g_{de} + \sum_{i=0}^{k-1} {T'}_{(i)}^{c_1 \cdots c_i de}\nabla_{(c_1} \cdots \nabla_{c_i)} \delta g_{de}
\label{kderform2}
\ee
with $\tilde{T}_{(k-1)}^{c_1 \cdots c_{k-1} de}$ and ${T'}_{(i)}^{c_1 \cdots c_i de}$ satisfying the same properties as enumerated in Theorem \ref{prop:sym_pot_vanish_stat_bgd_pert} but with $\tilde{T}_{(k-1)}^{c_1 \cdots c_{k-1} de}$ now satisfying the additional property that there is at least one ``extra'' $\xi^a$ appearing in each term in its basis expansion. We may now perform the same manipulations with this additional $\xi^a$ to obtain
\be
\xi_a \xi_b T_{(k)}^{ab c_1 \cdots c_k de}\nabla_{(c_1} \cdots \nabla_{c_k)}\delta g_{de}  \horeq  \tilde{T}_{(k-2)}^{c_1 \cdots c_{k-2} de}\nabla_{(c_1} \cdots \nabla_{c_{k-2})} {\mathscr L}^2_ \xi \delta g_{de} + \sum_{i=0}^{k-1} {T''}_{(i)}^{c_1 \cdots c_i de}\nabla_{(c_1} \cdots \nabla_{c_i)} \delta g_{de}
\label{kderform3}
\ee
where ${T''}_{(i)}^{c_1 \cdots c_i de}$ incorporates all of the additional lower derivative terms arising from this second set of manipulations. Now, the left side of \eqref{kderform3} is $O(V^2)$ as $V \to 0$, where $V$ denotes an affine parameter on the horizon (with $V=0$ corresponding to the bifurcation surface). However, the first term on the right side is only $O(V)$ since 
\begin{align}
\tilde{T}_{(k-2)}^{c_1 \cdots c_{k-2} de}\nabla_{(c_1} \cdots \nabla_{c_{k-2})} {\mathscr L}^2_ \xi \delta g_{de} ={}& {\mathscr L}^2_ \xi \left[\tilde{T}_{(k-2)}^{c_1 \cdots c_{k-2} de}\nabla_{(c_1} \cdots \nabla_{c_{k-2})} \delta g_{de} \right] \nonumber \\
={}& \kappa V {\mathscr L}_ k \left(\kappa V {\mathscr L}_ k  \left[\tilde{T}_{(k-2)}^{c_1 \cdots c_{k-2} de}\nabla_{(c_1} \cdots \nabla_{c_{k-2})} \delta g_{de} \right] \right) \nonumber \\
={}& \kappa^2( V^2 {\mathscr L}^2_ k + V {\mathscr L}_ k)  \left[\tilde{T}_{(k-2)}^{c_1 \cdots c_{k-2} de}\nabla_{(c_1} \cdots \nabla_{c_{k-2})} \delta g_{de} \right] 
\end{align}
and the factor with $V {\mathscr L}_ k $ is only $O(V)$. We can rectify this by replacing ${\mathscr L}^2_ \xi \delta g_{de}$ in \eqref{kderform3} by $[{\mathscr L}^2_ \xi - \kappa {\mathscr L}_ \xi] \delta g_{de}$ and compensating for this by adding the quantity $\tilde{T}_{(k-2)}^{c_1 \cdots c_{k-2} de}\nabla_{(c_1} \cdots \nabla_{c_{k-2})} \kappa {\mathscr L}_ \xi \delta g_{de}$ to the second term on the right side of \eqref{kderform3}. Since this quantity involves at most $(k-1)$ derivatives of $\delta g_{de}$, it does not change the character of this term. Thus, have shown that
\be
\xi_a \xi_b T_{(k)}^{ab c_1 \cdots c_k de}\nabla_{(c_1} \cdots \nabla_{c_k)}\delta g_{de}  \horeq  \tilde{T}_{(k-2)}^{c_1 \cdots c_{k-2} de}\nabla_{(c_1} \cdots \nabla_{c_{k-2})} ({\mathscr L}_ \xi - \kappa) {\mathscr L}_ \xi \delta g_{de} + \sum_{i=0}^{k-1} {T'''}_{(i)}^{c_1 \cdots c_i de}\nabla_{(c_1} \cdots \nabla_{c_i)} \delta g_{de}
\label{kderform4}
\ee
where $\tilde{T}_{(k-2)}$ and ${T'''}_{(i)}$ are smooth and are locally and covariantly constructed from the metric, curvature, $\xi^a$, and $N^a$, with only algebraic dependence on $\xi^a$ and $N^a$. But now the left side and the first term on the right side are both $O(V^2)$ as $V \to 0$. Therefore, the last term on the right side is also $O(V^2)$ and therefore ${T'''}_{(k-1)}^{c_1 \cdots c_i de}$ must also have two ``extra'' $\xi$'s in each term of its basis expansion. This allows us to make an inductive argument to prove \eqref{eq:alpha_xi_final_form2} in parallel with the proof of Theorem \ref{prop:sym_pot_vanish_stat_bgd_pert}.

\medskip

 \begin{remark}
For the case of a tensor
\begin{align}
    \gamma^{a_1 \cdots a_p} ={}& \sum_{i=0}^{k} T_{(i)}^{a_1 \cdots a_p c_1 \cdots c_i de}\nabla_{(c_1} \cdots \nabla_{c_i)} \delta g_{de}, 
    \label{eq:alpha_gen_form3}
\end{align}
contracted into $p$ factors of $\xi^a$, the corresponding result is
\begin{align}
    \gamma_{a_1 \cdots a_p}  \xi^{a_1} \cdots \xi^{a_p}  \horeq{}& \sum_{i=0}^{k-p} \tilde{T}_{(i)}^{c_1 \cdots c_i de} \nabla_{(c_1} \cdots \nabla_{c_i)} (\mathscr{L}_\xi - \kappa(p-1)) \cdots (\mathscr{L}_\xi - 2\kappa)(\mathscr{L}_\xi - \kappa)\mathscr{L}_\xi \delta g_{de} 
    \label{eq:alpha_xi_final_form3} 
\end{align} where the tensors $\tilde{T}_{(i)}^{c_1 \cdots c_i de} $ are smooth on $\mathcal{H}$ and are locally and covariantly constructed from the metric, curvature, and its derivatives as well as from $\xi^a$, and $N^a$, with $\xi^a$ and $N^a$ appearing only algebraically.

 \end{remark}

\section{Ambiguities of $S$ Beyond First Order}
\label{app:C}
\subsection{Non-dynamical Background Structures}
\label{C1}

As discussed in section \ref{dbedef}, the definition of $S$ suffers from ambiguities at second order and beyond. 
One source of ambiguities arises from the presence of $\xi^a$ in our formula for $\textbf{B}_{\mathcal H}$ and the fact that,
going beyond linear order in perturbation theory, we can no longer treat $\xi^a$ as the exact Killing field on the stationary background. 
The approach taken in section \ref{dbedef} was to tie the metric $g_{ab}$ to $\xi^a$
by postulating in effect that $g_{ab}$ is in a Gaussian null gauge \eqref{affgnc} and is
related to $\xi^a$ by \eqref{xidef}, in such a way that $\mathcal B$ is at $\rho=0=V$.


In this section, we will present another possible approach wherein the class of metrics $g_{ab}$ considered (i.e. effectively the gauge near $\mathcal H$), as well as the form of $\xi^a$ off of $\mathcal H$ is determined by a non-dynamical background structure given only \emph{on $\mathcal H$,} as opposed to \emph{off of $\mathcal H$} as in \eqref{xidef}. In particular, off of $\mathcal H$, though not on $\mathcal H$, the $\xi^a$ considered in this appendix will depend on $g_{ab}$, unlike in section \ref{dbedef}. 

Our main reason for considering a different approach here is that with the approach given below, it will be possible to make a precise comparison between the ambiguities and structure of $S$ in our approach and that of recent proposal by \cite{Hollands_2022} (see section \ref{hkr}). In fact, while \cite{Hollands_2022}
also works in a Gaussian null gauge, just as we did to tie the gauge of $g_{ab}$ to $\xi^a$ in section \ref{dbedef}, the gauge considered by \cite{Hollands_2022} is tied to the cross section $\mathcal C$ on which one wants to define $S[{\mathcal C}]$. By contrast, in the present paper we have $\xi^a$ which is not tied from the outset to a particular $\mathcal C$ on which we anticipate evaluating $S$.

In order to characterize the ambiguities in $S$ precisely, it is necessary to specify the geometric structures underlying our construction.
The rigidly fixed, non-dynamical background structure considered in this appendix consists of (a) an $(n-1)$-dimensional hypersurface $\mathcal H$ of $\mathcal M$ having the topological structure of a trivial fibre bundle with fibres $\mathbb R$ and compact base; (b) a preferred cross section, $\mathcal B$,  on $\mathcal H$; (c) a vector field $\xi^a$ tangential to the fibres which changes its direction at $\mathcal B$ (and vanishes only on $\mathcal B$); (d) a positive real number $\kappa$; (e) a vector field $N^a$ transverse to $\mathcal H$ 
that is defined for each point in ${\mathcal H} \setminus {\mathcal B}$, and is non-zero where defined. 

\begin{definition}
\label{def:P}
Given our rigidly fixed, non-dynamical background structure $({\mathcal H}, {\mathcal B}, \xi^a, N^a, \kappa)$, we consider compatible spacetime metrics $g_{ab}$ on $\mathcal M$ in the following sense. We require that $\mathcal H$ is null with normal $\xi^a$ and, furthermore, is such that $\xi^a \nabla_a \xi^b = \kappa \xi^b,$  
$N^a N_a = 0$, and $g_{ab}\xi^b N^a = 1$ on $\mathcal H$. 
We call this space of metrics $\mathcal P$. $\textbf{B}_{\mathcal H}$, as well as $\textbf{S}$
are functionals on $\mathcal  P$.
\end{definition}

The space $\mathcal  P$ does not imply any physical restrictions onto the metric.
Indeed, let $\tilde{\mathcal H}$ be 
any null surface with compact cross sections ruled by affinely parameterized null geodesics 
with tangent $k^a$. Let $V$ be any affine parameter along the null geodesics vanishing on some cut, $\mathcal B$ and set $\tilde \xi^a = \kappa V k^a$. Then we have $\tilde \xi^a\nabla_a \tilde \xi^b=\kappa \tilde \xi^b$ 
and $\tilde \xi^a$ is obviously null and normal to $\tilde{\mathcal H}$. Furthermore, let $\tilde N^a$ be a second null field defined on $\tilde{\mathcal H}$ such that $\tilde \xi^a \tilde N_a = 1$. By applying a diffeomorphism 
$\phi$, we can clearly achieve that $\phi^* \tilde N^a = N^a, \phi^* \tilde \xi^a = \xi^a$ and $\phi[\tilde {\mathcal H}] = \mathcal H$. Thus, $\phi^* g_{ab}$ is in $\mathcal P$.

Due to the derivatives acting on $\xi^a$ in \eqref{BHform}, 
we first define $N^a$ off of $\mathcal H$ by demanding that, besides 
relation \eqref{naxi} and $N^a \xi_a = 1, N^a N_a = 0$ on $\mathcal H$, we 
have $N^a \nabla_a N^b = 0$.
Since $\xi^a$ is hypersurface orthogonal, we have $\nabla_{[a} \xi_{b]} \horeq 2\kappa w_{[a} \xi_{b]}$ for some $w_a$. In fact, we can and will adjust the extension 
such that $w_a = N_a$. To see this, note that we are allowed to change $\xi^a \to \xi^a + fp^a$ where $f$ is any function on $\mathcal M$ vanishing on $\mathcal H$ and $p^a$ is any vector field.
Since $\xi^a$ is normal to $\mathcal H$, we must have $\nabla_a f = c\xi_a$ on $\mathcal H$, where $c$ is a function that we can choose as we please. This means that $w_a \to w_a - cp_a/\kappa$ which we use to change $w_a$ to $N_a$.

Our requirement $\nabla_{[a} \xi_{b]} \horeq 2\kappa N_{[a} \xi_{b]}$  fixes the first derivative of $\xi^a$ on $\mathcal H$ only partially. Note that we are not free to fix the symmetrized derivative of $\xi^a$ e.g. by \eqref{Nprop1} since that relation used the Killing property of $\xi^a$, and we are no longer requiring at this stage that $\xi^a$ is a Killing vector field with respect to $g_{ab}.$ 
However, we may -- and will -- demand without loss of generality that $\xi^a$ has been extended off of $\mathcal H$  such that 
\begin{equation}
\label{naxi}
  \nabla_{[a} \xi_{b]} \horeq 2\kappa N_{[a} \xi_{b]}, \quad   (\nabla_a \xi_b) N^a N^b \horeq 0, \quad (\nabla_{(a} \xi_{b)}) \xi^b \horeq 0. 
\end{equation}
We are allowed to change $\xi^a \to \xi^a + f\xi^a$, where $f$ is any function on $\mathcal M$ vanishing on $\mathcal H$, without affecting the first relation in \eqref{naxi}.
Since $\nabla_a f = c\xi_a$ on $\mathcal H$, we can thereby change $\nabla_{(a} \xi_{b)} \to \nabla_{(a} \xi_{b)} + c\xi_a \xi_b$. Since 
$c$ is a function that we can choose as we please and since $N^a \xi_a = 1$, we can clearly impose $(\nabla_a \xi_b) N^a N^b \horeq 0$. 
The condition $(\nabla_{(a} \xi_{b)}) \xi^b \horeq 0$ easily follows from $\nabla_{[a} \xi_{b]} \horeq 2\kappa N_{[a} \xi_{b]}$ and 
$\xi^a \nabla_a \xi^b \horeq \kappa \xi^b$. Finally, using $N^a \nabla_a N^b = 0$ and $(\nabla_{(a} \xi_{b)}) \xi^b \horeq 0 \horeq (\nabla_a \xi_b) N^a N^b$, we have 
\begin{equation}
\label{firstorder}
N^a \nabla_a(N^b \xi_b) \horeq 0, \quad 
\nabla_a(\xi^b\xi_b) \horeq -2\kappa \xi_a. 
\end{equation}
Transvecting 
$\nabla_{[a} \xi_{b]} \horeq 2\kappa N_{[a} \xi_{b]}$ with $N^a$, we get ${\mathscr L}_N \xi_a \horeq -2\kappa N_a$, 
which clearly uniquely determines the first order change of $\xi^a$ off of $\mathcal H$.

As we have argued, \eqref{naxi} defines $\xi^a$ to first order off of $\mathcal H$, but we would like to define it to arbitrary orders off of $\mathcal H$. In order to have a definition that gives us the Killing vector field $\xi^a$ in the special case that $\mathcal H$ 
is a Killing horizon for $g_{ab}$, we require $\xi^a$ to satisfy the geodesic deviation equation, 
\begin{equation}
\label{eq:deviation}
    N^a \nabla_a (N^b \nabla_b \xi_c) = R_{cbad} N^b N^a \xi^d,
\end{equation}
which is a viewed as a second order ordinary differential equation for $\xi^a$ whose initial conditions are posed on $\mathcal H$. 
Since the geodesic deviation equation implies that $N^a \nabla_a (N^b \nabla_b(\xi^c N_c)) = 0$, the initial conditions $\xi^a N_a \horeq 1, N^c \nabla_c (N^a \xi_a) \horeq 0$, imply that 
$N^a \xi_a = 1$
in a neighborhood of $\mathcal H$.


Consider now a metric $\tilde g_{ab}$ having a stationary bifurcate Killing horizon $\mathcal H$ with Killing field $\tilde \xi^a$ that vanishes on $\mathcal B$. We have constructed in equations
\eqref{Nprop1} to \eqref{cform} a vector field $\tilde N^a$ at points of $\mathcal H$ from the Killing field $\tilde \xi^a$ and the metric $\tilde g_{ab}$. We can align these with $\xi^a, N^a$ as in definition \ref{def:P} by a diffeomorphism $\phi$, so that $g_{ab}=\phi^*\tilde g_{ab} \in \mathcal{P}$.
Since any Killing field satisfies \eqref{KVF}, we automatically have $N^a \nabla_a (N^b \nabla_b \xi_c) = R_{cbad} N^b N^a \xi^d$, where the derivatives and Riemann tensor refer to $g_{ab}$. Thus, our extension procedure produces precisely the horizon Killing field if there is one.


We likewise require that any variation $\delta g_{ab}$ be compatible 
with our the fixed (not varying) background structure $({\mathcal H}, {\mathcal B}, \xi^a, N^a, \kappa)$ in this appendix, in the sense that it is given as the derivative of a 1-parameter family of metrics in the space $\mathcal  P$ (definition \ref{def:P}), i.e. as an element in the tangent space of $\mathcal  P$. 
Then we have $\delta g_{ab} \xi^a = 0$ on $\mathcal H$. Furthermore, since $\delta \kappa = 0$ as $\kappa$ is part of the rigidly fixed background structure, the last equation in \eqref{firstorder} implies that $0=\delta[\nabla_a (\xi^b \xi_b)]=\nabla_a(\delta g_{bc}\xi^c\xi^b)$. Thus, the previously considered gauge conditions  \eqref{gaugecon1}, \eqref{gaugecon2} hold generally, not just off a stationary (bifurcate Killing-) horizon.

Our main first order variation formula for $S$ was \eqref{mainSformula} for any perturbation $\delta g_{ab}$
satisfying gauge conditions  \eqref{gaugecon1}, \eqref{gaugecon2}. In the construction leading to \eqref{mainSformula}, we previously considered $\xi^a$ as fixed not only on $\mathcal H$, but also off of $\mathcal H$. By contrast, the algorithm we have just given above for extending $\xi^a, N^a$ off of $\mathcal H$ depended on the chosen metric, and therefore while $\delta \xi^a \horeq 0$, the derivatives of $\delta \xi^a$ off of $\mathcal H$ will not in general vanish. Nevertheless, by a lengthy argument, we have shown that \eqref{mainSformula} continues to hold.  As a consequence we still have the formulas expressing the second law obtained within the previous framework of non-dynamical background structure in sections \ref{extmat} and \ref{vacpert}. Similarly, we still have the second order variation formulas \eqref{modce} and \eqref{modce1} in the present setting with a varying $\xi^a$, with only minor modifications to the definition of the boundary terms in the modified canonical energy \eqref{modce1}. However, we shall not present the details of these modifications here.

\subsection{Structure and Ambiguities of $S$}
\label{app:C3}
As in the approach taken in section \ref{dbedef},
there is substantial potential ambiguity in $S$ beyond first order.
The first ambiguity arises because it may happen that two metrics $g_{ab}$ and $\phi^* g_{ab}$ may both be in $\mathcal P$, where $\phi$ is a non-trivial diffeomorphism 
fixing $\mathcal H$ and hence $\xi^a$ on $\mathcal H$. We therefore have 
$(\phi^* g_{ab})\xi^a N^b = 1$ by definition \ref{def:P}, so $g_{ab}$ satisfies the definition \ref{def:P} also for $\tilde N^a = \phi^{-1*} N^a$. Since $g_{ab}$ and $\phi^* g_{ab}$ must be regarded as physically equivalent, this is telling us that 
we have, in effect, changed $N^a \to \tilde N^a$ while keeping $g_{ab}$ fixed, analogous to the freedom of choosing a different affine parameterization in of $\mathcal H$ in section \ref{dbedef}.

The second ambiguity arises from the fact that $\textbf{B}_{\mathcal H}$, which enters $S$, was constructed to satisfy \eqref{eq:sym_pot_tot_var}, which is a condition that applies only to first order perturbations. We may add many terms to $\textbf{B}_{\mathcal H}$ (and hence to $\textbf{S}=\textbf{Q} - \xi \cdot \textbf{B}_{\mathcal H}$) that vanish identically at first order but not second. The following lemma will enable us to give a precise classification of such terms.

\begin{lemma}
\label{lem:2}
    Let $\textbf{B}_{\mathcal H}$ be an $(n-1)$-form valued functional satisfying 1.--5. in section \ref{dbedef}. Then it can be written as a sum of monomials of the following factors i)--iii), contracted with $N^a, \xi^a, s^{a}_i$'s
    and multiplied by powers of $\kappa$. 
    \begin{itemize}
\item[i)] $s_{i_1}^{a_1} \cdots s_{i_j}^{a_j} \nabla_{(a_1} \cdots \nabla_{a_j)} 
{\mathscr L}_\xi g_{ab}$;
\item[ii)] $s_{i_1}^{a_1} \cdots s_{i_j}^{a_j} \nabla_{(a_1} \cdots \nabla_{a_{j-1})} ({\mathscr L}_\xi)^s N_{a_j}$, $s\ge 0$;
\item[iii)] $\nabla_{(a_1} \cdots \nabla_{a_j}R^b{}_{a_{j+1}a_{j+2})}{}^c, j>0$, $R_{abcd}$.
    \end{itemize}
Each contracted monomial in $\textbf{B}_{\mathcal H}$ is invariant under a rescaling $\xi^a \to c \xi^a$, $N^a \to N^a/c$, $\kappa \to c \kappa$. There must be at least one factor of type i), ii) with $s>0$ or a fully contracted factor of type iii) with more $\xi^a$'s than $N^a$'s. The total number of derivatives in each monomial must be less than the number of derivatives in the Lagrangian, with each occurrence of $\kappa$ counting as one derivative. 
\end{lemma}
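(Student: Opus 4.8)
The plan is to establish the classification by a reduction-to-normal-form argument: I would strip an arbitrary $\textbf{B}_{\mathcal H}$ obeying conditions 1.--5.\ down to contractions of the building blocks i)--iii), and then read off from conditions 2.--5.\ which monomials survive. First, condition 1.\ together with the standard replacement theorem for covariant functionals guarantees that $\textbf{B}_{\mathcal H}$ is a finite sum of monomials, each obtained by contracting, against the metric and a single transverse $N^a$ inside the natural volume form $\boldsymbol{\epsilon}^{(n-1)}$ (as in \eqref{eq:b_gr}), factors drawn from the curvature $R_{abcd}$ and its covariant derivatives, $\xi^a$ and its covariant derivatives, and $N^a$ and its covariant derivatives, with all remaining free indices saturated by the tetrad $\{\xi^a, N^a, s_i^a\}$ and by powers of $\kappa$. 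The curvature factors are brought into the independent form iii) by repeatedly invoking the Bianchi identities and commuting covariant derivatives, the commutators producing only curvature terms of strictly lower differential order, exactly as in the independent-set statement of \cite{muller1997closed}.

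The heart of the argument is the systematic elimination of all covariant derivatives of $\xi^a$, carried out by induction on their number. The first derivative splits as $\nabla_a\xi_b = \tfrac12\mathscr{L}_\xi g_{ab} + \nabla_{[a}\xi_{b]}$, the symmetric part being a type i) factor with $j=0$ and, on $\mathcal H$, the antisymmetric part being replaced algebraically by $\kappa(N_a\xi_b - N_b\xi_a)$ via the first relation in \eqref{naxi}. For higher derivatives I would commute covariant derivatives to expose a symmetrized block acting on $\mathscr{L}_\xi g_{ab}$, the commutators generating curvature contracted with $\xi^a$ (type iii)) together with strictly fewer $\xi$-derivatives; the transverse second derivative $N^aN^b\nabla_a\nabla_b\xi_c$ is eliminated in favour of $R_{cbad}N^aN^b\xi^d$ by the geodesic-deviation equation \eqref{eq:deviation}, while a contraction $\xi^a\nabla_a$ is converted through $\mathscr{L}_\xi$ into a further Lie derivative---which, acting on $g$, reduces again through $\mathscr{L}_\xi g_{ab} = 2\nabla_{(a}\xi_{b)}$ and the standard Christoffel-variation identity, and acting on $N^a$ produces the $(\mathscr{L}_\xi)^s N$ factors of type ii). At each stage the differential order in $\xi^a$ strictly decreases, so the induction terminates with every $\xi^a$-derivative resolved into factors of types i)--iii).

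Once $\xi^a$ carries no derivatives, the derivatives of $N^a$ are reduced in the same spirit: the transverse component vanishes by $N^a\nabla_a N^b = 0$, the $\xi$-direction derivative $\xi^a\nabla_a N_b$ reduces to $\mathscr{L}_\xi N_b$ plus algebraic terms, and the purely tangential symmetrized derivatives of $(\mathscr{L}_\xi)^s N$ are precisely the type ii) factors. With every monomial now expressed through i)--iii), I would impose the remaining conditions. Condition 5.\ ---together with the normalization $N^a\xi_a = 1$, which forces the companion rescaling $N^a\to N^a/c$---is equivalent to homogeneity of degree zero under $\xi\to c\xi$, $N\to N/c$, $\kappa\to c\kappa$, where each $\mathscr{L}_\xi$ and each explicit $\kappa$ carries weight $+1$; this selects exactly the rescaling-invariant contractions. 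Condition 4.\ bounds the total derivative count, counting $\kappa$ as one derivative since it enters only through the antisymmetric $\nabla\xi$ reduction and the geodesic-deviation step. Finally, condition 2.\ requires $\textbf{B}_{\mathcal H}$ to vanish in the stationary background, equivalently that each surviving monomial contain at least one factor vanishing there: a type i) factor (as $\mathscr{L}_\xi g_{ab}=0$ for Killing $\xi$), a type ii) factor with $s>0$ (as $\mathscr{L}_\xi N^a=0$ since $N^a$ is then Lie-transported), or a fully contracted type iii) factor with a strict excess of $\xi^a$'s over $N^a$'s; the last vanishes because such a factor is a $\xi$-invariant scalar in the stationary background---hence constant along the generators---while simultaneously being $O(V^{p-q})$ with $p-q>0$ at $\mathcal B$, so it must vanish identically on $\mathcal H$.

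I expect the main obstacle to lie in the second paragraph. Because $\xi^a$ is, in this appendix, no longer a Killing field of $g_{ab}$ but only an extension satisfying \eqref{naxi}, \eqref{firstorder}, and \eqref{eq:deviation}, the clean identity $\nabla_a\nabla_b\xi_c = R_{cbad}\xi^d$ exploited in Theorem \ref{prop:sym_pot_vanish_stat_bgd_pert} no longer holds exactly, and the induction must carefully track the correction terms proportional to $\mathscr{L}_\xi g_{ab}$ and its derivatives and verify that they always reorganize into the three admissible factor types, with the differential order strictly decreasing at each step so that the reduction terminates.
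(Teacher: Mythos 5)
Your proposal is correct and follows essentially the same route as the paper's proof: an induction on derivative order that reduces all derivatives of $\xi^a$ and $N^a$ to the normal forms i)--iii) using the on-horizon relation $\nabla_{[a}\xi_{b]} \horeq 2\kappa N_{[a}\xi_{b]}$, the decomposition of $N^c\nabla_c\xi_b$ into $\tfrac{1}{2}N^c\mathscr{L}_\xi g_{cb}$ plus algebraic terms, the geodesic-deviation equation \eqref{eq:deviation} to lower the number of transverse derivatives, $N^a\nabla_a N^b=0$ to kill transverse derivatives of $N^a$, and the result of \cite{muller1997closed} for the curvature factors, after which conditions 2., 4., 5. yield the stated restrictions on the monomials. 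The obstacle you flag in your final paragraph is resolved exactly as you anticipate: the paper never invokes the Killing identity \eqref{KVF} in this setting, but instead uses \eqref{eq:deviation}, which is built into the definition of the extension of $\xi^a$ off $\mathcal H$, so the correction terms automatically reorganize into the admissible factor types with strictly decreasing derivative order.
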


\begin{proof}
By assumptions 1.--5. $\textbf{B}_{\mathcal H}$ is a sum of monomials 
consisting of the following factors: (a) $\nabla_{a_1} \dots \nabla_{a_j} N_b$, (b)
$\nabla_{a_1} \dots \nabla_{a_j} \xi_b$, (c) $\nabla_{a_1} \dots \nabla_{a_j} R_{abcd}$, which are fully contracted into the legs of the tetrad $N^a, \xi^a, s_i^a$. We will now show by induction in $j \ge 0$ that each such term can be converted into one of the terms i)--iii) listed in lemma 1. For term c)
it follows from the results of \cite{muller1997closed}.
Thus, we only have to deal with the terms (a) and (b). 
For $j=0$, there is nothing to show.

Assuming that the statement is true up to $j-1$ derivatives, let us prove it for $j$
derivatives. We begin by considering all possible ways of dotting the legs of our null tetrad into a term of the form (b), $\nabla_{a_1} \dots \nabla_{a_j} \xi_b$. By commuting derivative operators at the 
expense of Riemann tensors (giving terms of the form (b) with fewer derivatives and terms of the form (c)), and by moving any $N^a$ factors into the derivative operators at the expense of new terms of the form (a) with fewer than $j$ derivatives, we may bring the expression into the form $\nabla_{a_1} \dots \nabla_{a_{j-k}} (N^c \nabla_c)^k \xi_b$, dotted into $\xi^a$'s and $s_i^a$'s. For $k \ge 2$, 
we may use \eqref{eq:deviation} to lower $k$ at the expense of 
terms of the form (c), so only the case $k=0,1$ needs to be considered. As an example of such a term for $k=1$, consider
$s_{i_1}^{a_1} \dots s_{i_s}^{a_j} \xi^{a_{s+1}} \dots \xi^{a_{j-1}} \nabla_{a_1} \dots \nabla_{a_{j-1}} (N^c \nabla_c \xi_b)$. At the expense of terms of the form (a) when a derivative hits $N^c$, we can move all derivatives tangent to $\xi^d$ onto $\xi_b$ and then use $\xi^d \nabla_d \xi_b \horeq \kappa \xi_b$, since all derivatives are at this stage tangent to $\mathcal H$. Thus, we can effectively assume that 
the term we are dealing with is of the form
\begin{equation}
\begin{split}
\text{term (b)} &=s_{i_1}^{a_1} \dots s_{i_{j-1}}^{a_{j-1}} \nabla_{a_1} \dots \nabla_{a_{j-1}} (N^c \nabla_c \xi_b) \\
&=s_{i_1}^{a_1} \dots s_{i_{j-1}}^{a_{j-1}} \nabla_{a_1} \dots \nabla_{a_{j-1}} (N^c( 
\tfrac{1}{2} {\mathscr L}_\xi g_{cb} + 2\kappa N_{[c} \xi_{b]})).
\end{split}
\end{equation}
The first term on the right side is of the form (i) in the lemma, whereas the second has fewer than $j$ derivatives acting onto $\xi_b$ and at most $j$ derivatives acting onto $N^c$, i.e. it is a term of the form (a), after symmetrizing derivatives at the expense of terms of the form (c) with Riemann tensors. The case $k=0$ is treated similarly.

Next, we consider all possible ways of dotting the legs of our null tetrad into a term of the form (a), $\nabla_{a_1} \dots \nabla_{a_j} N_b$. By commuting derivative operators at the 
expense of Riemann tensors (giving terms of the form (a) with fewer derivatives and terms of the form (c)), and by moving any $N^a$ factors into the derivative operators at the expense of new terms of the form (a) with fewer than $j$ derivatives, we may bring the expression into the form $\nabla_{a_1} \dots \nabla_{a_{j-k}} (N^c \nabla_c)^k N_b$, dotted into $\xi^a$'s and $s_i^a$'s. Such a term vanishes for $k>0$ because of 
$N^c \nabla_c N_b=0$. We can move all derivatives tangent to $\xi^d$ onto $N_b$ at the expense of terms of the form (b) with fewer than $j$ derivatives. Thus, we can effectively assume that 
the term we are dealing with is of the form
\begin{equation}
\label{adef}
\text{term (a)} = s_{i_1}^{a_1} \dots s_{i_{j-s}}^{a_{j-s}} \xi^{a_{j-s+1}} 
\dots \xi^{a_j} \nabla_{a_1} \dots \nabla_{a_{j}} N_b .
\end{equation}
We can now gradually express the number of derivatives into the 
$\xi^a$-direction in terms of Lie derivatives ${\mathcal L}_\xi$
and terms such that each factor has fewer than $j$ derivatives. So, we 
have reduced the (a) terms to terms of the form 
\begin{equation}
\label{aadef}
\text{term (a)} = s_{i_1}^{a_1} \dots s_{i_{j-s}}^{a_{j-s}}  \nabla_{a_1} \dots \nabla_{a_{j-s}} ({\mathcal L}_\xi)^s N_b , 
\end{equation}
and we can symmtrize the $a_i$-indices at the expense of terms of the type (c). If we dot $\xi^b$ into this expression, we get terms of type (b), since $\xi^b N_b = 1$. If we dot $N^b$ into this expression, we get terms of the type (a) with fewer derivatives since $N^b N_b = 0$ in a neighborhood of $\mathcal H$. So we have effectively reduced attention to terms of the form claimed in item ii).
\end{proof}

The main consequence of this lemma is as follows: If $\mathcal H$ is a stationary bifurcate Killing horizon with Killing vector field $\xi^a$, then any factor of type i), ii) with $s>0$ or any fully contracted factor of type iii) with more $\xi^a$'s than $N^a$'s vanishes. By contrast, 
a factor of type ii) with $s=0$, or a fully contracted factor of type iii) with no more $\xi^a$'s than $N^a$'s factors will not, in general, vanish. It is thus clear that the ambiguity in $\textbf{B}_{\mathcal H}$ 
consists in adding monomials with at least \emph{two} factors of either type i), ii) with $s>0$, or a fully contracted factor of type iii) with more $\xi^a$'s than $N^a$'s.

\subsection{Comparison with HKR Proposal}
\label{hkr}

In \cite{Hollands_2022} Hollands, Kovacs, and Reall (HKR) constructed an entropy-current $(n-2)$-form $\textbf{S}_{\rm HKR}$
in an ``effective field theory'' (EFT) framework
\begin{equation}
    L=\frac{1}{8\pi}(2\ell^{-2}\Lambda + R + \ell^2 L_4 + \ell^4 L_6 + \dots),
\end{equation}
where each $L_{N}$ is a Lagrangian that is locally and covariantly constructed out of the metric and $N$ derivatives. EFT is understood to mean that one restricts attention to only those solutions in the theory truncated at some given order $N$ such that, roughly speaking, the higher order terms\footnote{The cosmological constant term is also considered ``higher order'', in the sense that $\Lambda \ll 1$, cf. the cosmological constant problem.} in the equations of motion are locally small near $\mathcal H$. $\textbf{S}_{\rm HKR}$ is by construction equal to the Dong--Wall entropy-current $(n-2)$-form \cite{Dong2013, Wall2015} $\textbf{S}_{\rm DW}$ to first order off a solution with a stationary bifurcate Killing horizon. \cite{Hollands_2022,Davies:2023qaa} showed that the ambiguities can also be exploited, for each $N$, so that 
$\textbf{S}_{\rm HKR}$ satisfies the 2nd law in the EFT sense, i.e. up to terms that are of the same order in $\ell$ as the terms neglected when truncating the EFT order $N$.

A comparison between the ambiguities in the HKR approach and our approach may now be given as follows: In
\cite[lem.~2.3]{Hollands_2022} it was shown that the admissible terms in $\textbf{S}_{\rm HKR}$ must be monomials in Gaussian null coordinate (GNC) components (see \eqref{affgnc}) of the following factors of total boost weight zero (see \cite[def. 2.2]{Hollands_2022}):
    \begin{itemize}
\item[i')] $D_{(a_1} \cdots D_{a_j)} K_{ab}$;
\item[ii')] $D_{(a_1} \cdots D_{a_{j-1})} \beta_{a_j}$, $D_{(a_1} \cdots D_{a_j)} \bar K_{ab}$;
\item[iii')] $\nabla_{(a_1} \cdots \nabla_{a_j}R^b{}_{a_{j+1}a_{j+2})}{}^c$, $R_{abcd}$;
    \end{itemize}
where $n^a = (\partial_\rho)^a$.
To compare with lemma \ref{lem:2}, we have
\begin{equation}
\label{NxisK}
\begin{split}
    N^a \horeq{}& \kappa^{-1}[\nabla^a \log V + \tfrac{1}{2} \beta^a - \tfrac{1}{8} V \beta^b \beta_b k^a],\\
    \xi^a \horeq & \kappa Vk^a,\\
    s_i^a \horeq {}& e_i^a - \tfrac{1}{2}Ve_i^b \beta_b k^a,\\
    \tfrac{1}{2} {\mathscr L}_\xi g_{ab} \horeq {}& \kappa VK_{ab} - \tfrac{1}{4}V^3 \kappa K_{cd}\beta^c\beta^d \, k_{a}k_b,
\end{split}      
\end{equation}
where $\sum_i e_i^a e_i^b = \gamma^{ab}$.
The scaling requirement in lemma \ref{lem:2} 
is equivalent to 
the zero total boost weight requirement\footnote{Here it must be understood that the boost weights of $\kappa, V$ be $+1,-1$ respectively. Note that in the HKR scheme, no explicit factors of $\kappa$ nor $V$ appear.} in \cite{Hollands_2022}. Furthermore, 
comparing i')--iii') in the light of \eqref{NxisK}
with the terms i)--iii) in lemma \ref{lem:2}, one can see some broad similarities between iii) and iii') and i) and i'), if we consider $x^A$  GNC components as analogous to $s^a_i$ tetrad components.  Items ii) and ii') are also similar, in the following sense. If in ii), we have $s=0$ (no Lie-derivative in the $\xi^a$-direction), then one can see from equations \eqref{NxisK} that we basically get a specific combination of the {\em two} types of terms in ii'). When $s \ge 1$ in the term ii), we also get certain combinations of the {\em two} types of terms in ii'). To see this, we first consider one Lie-derivative in the $\xi^a$-direction on $N_b$ \eqref{NxisK}, to get
\begin{equation}
    {\mathscr L}_\xi N_a \horeq \tfrac{1}{2} V \mathscr{L}_k \beta_a
    + \tfrac{1}{4} V^2(  \beta^c \mathscr{L}_k \beta_c +  K_{bc} \beta^b \beta^c ) k_a.  
\end{equation}
This already is of the type of terms of the form i'), ii') 
except for the term involving $\mathscr{L}_k \beta_a$. However, 
by equations given in \cite[sec.~A.1]{Hollands_2022}, we 
have 
\begin{equation}
    \mathscr{L}_k \beta_a +  K_{ab}\beta^b \horeq 2R_{bcda}n^b k^c k^d, 
\end{equation}
so $\mathscr{L}_\xi N_a$ is in fact a combination of the terms in i'), ii'), iii') not involving $\bar K_{ab}$. This remains true
by an inductive argument for $(\mathscr{L}_\xi)^s N_a$, and more generally for all our terms of type ii) in lemma \ref{lem:2}. 
Thus, our terms i), ii), iii) in lemma \ref{lem:2} correspond to specific combinations of the terms i'),ii'),iii'). 

Thus, we see that the ambiguities present in our approach are more restrictive than in the HKR approach. Correspondingly, our $S[\mathcal C]$ is more rigidly fixed than $S_{\rm HKR}[\mathcal C]$. We do not believe that we would have sufficient re-definition freedom in our approach to ensure that our entropy flux \eqref{entropyflux} satisfies the 2nd law beyond general relativity. On the other hand, it may still be possible to ensure that the second law holds for specific EFTs, which may thus provide an interesting selection criterion to discriminate unphysical EFTs.

It appears very unlikely that the more general freedom allowed by HKR will, in general, yield an entropy $S_{\rm HKR}[\mathcal C]$ that is cross section continuous \cite{Chen2022}.
To appreciate why, recall that, as shown in \cite[sec. 2.1]{Hollands_2022}, under a change of affine parameter $V=V'\psi(x^{A})$, where $\psi>0$, the quantities associated with a corresponding GNC system (see \eqref{affgnc}) change as
\begin{equation}
\label{GNChange}
\begin{split}
\beta'_a \horeq & \, \beta_a + 2D_a \log \psi -2V K_a{}^b D_b \log \psi, \\
\bar K_{ab}' \horeq & \, \psi^{-1}({\bar K}_{ab}-VD_a D_b \log \psi - V(D_a \log \psi)D_b \log \psi
- V\beta_{(a} D_{b)} \log \psi), \\
K_{ab}' \horeq & \, \psi K_{ab},
\end{split}
\end{equation} where $K_{ab}$ and $\bar K_{ab}$ are the extrinsic curvatures along $V$
respectively $\rho$.
Let $\mathcal C$ be a cut of constant $V$ and $\mathcal C'$ a cut of constant $V'$
and consider a $\psi$ that is extremely close to 1 but with extremely large angular derivatives (see fig. \ref{fig:1}). As an example, consider now the Ricci-squared theory, with Dong--Wall entropy given by \eqref{Ricci2DW}. It is easy to see that the first term in \eqref{Ricci2DW} is cross section covariant because the definition of the Ricci tensor does not involve any notion of foliation. Let us next focus on the $K\bar K$-term in \eqref{Ricci2DW}. Using \eqref{GNChange}, it is seen to change as 
\begin{equation}
    K'\bar K' \horeq K (\bar K  
-V' D_a D^a
 \psi - V' \beta^{a} D_{a} \psi 
 )
\end{equation}
under a change of affine parameter. Therefore, since $\gamma_{ab}' \horeq \gamma_{ab}$, we obtain
\begin{equation}
    \int_{{\mathcal C}'} K'\bar K'  \boldsymbol{\epsilon}'{}^{(n-2)}
= \int_{{\mathcal C}'}  K\left( \bar K  
-V'  D_a D^a \psi - V' \beta^{a} D_{a} \psi 
    \right)  \boldsymbol{\epsilon}^{(n-2)}.
\end{equation}
The terms in this last expression involving derivatives of $\psi$ threaten to blow up. However, these terms crucially are 
{\it linear} in $\psi$, and since $V'$ is constant on $\mathcal C'$, 
we may pull it out of the integral and then integrate the angular derivatives $D_a$ by parts. Hence, it is clear that also the $K\bar K$ term in the Dong--Wall entropy given by \eqref{Ricci2DW} is cross section continuous. On the other hand, it seems highly unlikely that 
$\textbf{S}_{\rm HKR}$ would only produce terms that are linear in $\psi$, since, in particular, it contains non-covariant terms starting at sufficiently high EFT order \cite{Davies:2022xdq}. By contrast, since our entropy is obtained from an entropy $(n-2)$-form that is covariant in the metric and $\xi^a$, it automatically yields a cross-section continuous entropy.

Of course, in an EFT framework, it is not sensible to consider surfaces, like $\mathcal{C}'$, that are wiggly on a scale that is comparable to or smaller than the cutoff scale $\ell$ of the EFT. More precisely, in the EFT setting, we fix a GNC system with respect to which the EFT assumptions \cite{Hollands_2022} are formulated, and we would not be allowed to perform a change of GNCs with a $\psi$ such that e.g., 
$\ell \bar K'$ as in \eqref{GNChange} failed to remain small. By the results of \cite{Hollands_2022} and their generalization \cite{Davies:2023qaa} on the second law for $S_{\rm HKR}$, if we do chose to impose such a restriction, this would preclude the possibility that 
reparameterization non-invariant terms could spoil the second law. However, one can see from \eqref{GNChange} that this would imply a restriction on how large $V$ can be, which is somewhat against the EFT spirit, since that is not a restriction on the UV behavior. If we merely insist that the wiggles in $\mathcal{C'}$ were on a scale considerably larger than $\ell$, i.e., if we merely require $|D_{a_1} \cdots D_{a_k} \psi| \ll O(\ell^{-k})$ but we allow $V$ to be arbitrarily large, then a violation of the second law of $S_{\rm HKR}$ 
\cite{Hollands_2022,Davies:2023qaa} appears to be possible due to reparameterization non-invariance.

\section{Application to Dilaton Gravity Theories with Scalar Fields}
\label{dongen}

Dong and Lewkowycz~\cite{Dong2018} have analyzed a dilaton gravity theory coupled to two scalar fields and have obtained an entropy formula for this theory using Dong's approach. In this appendix we show that the linearization of their entropy expression agrees with what would be calculated from \eqref{eq:S_DW_form}. This lends further support to the belief that Dong's entropy agrees in general with the entropy that would be obtained from Wall's approach.

The Lagrangian of dilaton gravity in two dimensions coupled with two scalar fields, $\sigma$ and $\omega$, considered in \cite{Dong2018} is
\begin{align}
    {L}_{ab} ={}& -\frac{1}{2}{\epsilon}_{ab} \Big[\phi R - (\nabla \sigma)^2 - (\nabla \omega)^2 + \lambda \omega \nabla_c \nabla_d \sigma \nabla^c \nabla^d \sigma   \Big],
\end{align}where $\lambda$ is a constant.
The equations of motion for $g_{ab}$ are
\begin{align}
    \begin{split}
		({E}_G)_{ab}={}& -\frac{1}{2}\Big\{ \nabla_a \sigma \nabla_b \sigma + \nabla_a \omega \nabla_b \omega + \nabla_a \nabla_b \phi - g_{ab}(g^{ed} \nabla_e \nabla_d \phi   +\frac{1}{2}[ (\nabla \sigma)^2  + (\nabla \omega)^2- \lambda \omega  (\nabla \nabla \sigma)^2])\\
		{}& + \lambda [ \nabla^m (\omega \nabla_m \nabla_b \sigma) \nabla_a \sigma + (a \leftrightarrow b) ]  -  \lambda \nabla^c (\omega \nabla_{a} \nabla_{b} \sigma \nabla_c \sigma ) \Big\} 
	\end{split}
\end{align} 
The pullback of the constraints to the horizon can be put in the form
\begin{align}
    \xi^a{{C}}_{ab_2} ={}& 2{\epsilon}^{(1)}_{b_2} [\kappa \mathscr{L}_\xi P - \mathscr{L}_\xi^2 P]
\end{align} where 
\begin{align}
    P(\psi, \xi; \delta \psi) ={}& \frac{1}{2}\kappa(\delta \phi - \lambda \omega \mathscr{L}_\xi \delta \sigma N^a \nabla_a \sigma)
\end{align} where $\psi$ collectively refers to the dynamical fields $g,\phi,\sigma,\omega$.

The Noether charge 0-form is
\begin{align}
    Q ={}& -\frac{1}{2}{\epsilon}_{ae}A^{aebd} \nabla_{[b} \xi_{d]} + \Big[{\epsilon}_{ae} \nabla_f A^{fbae} + {\epsilon}_{af} {B}^{a(fb)} \Big] \xi_b 
\end{align} where
\begin{align}
    A^{abcd}={}& - \frac{1}{2}\phi(g^{ac}g^{bd} - g^{ad}g^{bc}) \\
    \tilde{B}^{abc} ={}& \frac{1}{2} \nabla_d \phi (g^{ac}g^{bd}-g^{ad}g^{bc})  + \frac{1}{2}  \lambda \omega \nabla_{(m} \nabla_{n)} \sigma \nabla_t \sigma    (2 g^{ma}g^{nc} g^{bt}  -  g^{mb}g^{nc} g^{at} ) .
\end{align} 

The symplectic potential $\boldsymbol{\theta}$ is given by
\begin{align}
	\begin{split}
	{\theta}_{b_2} ={}& {\epsilon}_{ab_2}\Big(A^{abcd} \nabla_d \delta g_{bc} + \tilde{B}^{abc} \delta g_{bc} + D^a \delta \sigma + F^{af}\nabla_f \delta \sigma + M^a \delta \omega  \Big)
	\end{split}
\end{align} where
    \begin{align}
    D^a ={}& \nabla^a \sigma   +  \lambda  g^{me}g^{na} \nabla_e (\omega \nabla_{(m} \nabla_{n)} \sigma )   \\
    F^{af} ={}& -\lambda \omega  g^{ma}g^{nf} \nabla_{(m} \nabla_{n)} \sigma \\
    M^a ={}& \nabla^a \omega 
    \end{align}
The pullback of $\boldsymbol{\theta}$ to the horizon takes the form
\begin{align}
    \underline{\boldsymbol{\theta}} \horeq{}& \delta \textbf{B}_{\mathcal{H}}
\end{align}    
where
\be    
  {B}_{\mathcal{H}b} = \frac{1}{2}{\epsilon}^{(1)}_{b} \Big( 2 \lambda \omega N^{a}  \nabla_{a} \sigma \mathscr{L}_\xi  \sigma \Big) 
\ee

Our entropy 0-form ${S}$ (see \eqref{entform}) is given by
\begin{align}
    S ={}& \frac{2\pi}{\kappa}(Q - \xi \cdot \textbf{B}_\mathcal{H} ) \\
    ={}& \frac{2\pi}{\kappa}[ -\kappa \phi  + \kappa\lambda \omega N^a \nabla_a \sigma \mathscr{L}_\xi \sigma -\mathscr{L}_\xi (- \phi + \lambda \omega N^a \nabla_a \sigma \mathscr{L}_\xi^2 \sigma ) ]
\end{align} 
The Dong--Wall entropy (see \eqref{eq:S_DW_form}) is
\begin{align}
    S_{\rm DW} ={}& \frac{2\pi}{\kappa}(Q - \xi \cdot \textbf{B}_\mathcal{H} ) - \frac{4 \pi}{\kappa} P(\psi, \xi; \mathscr{L}_\xi \psi)  \\
    ={}& 2\pi[- \phi + \lambda \omega N^a \nabla_a \sigma \mathscr{L}_\xi \sigma ], 
\end{align} 
This agrees\footnote{To compare, one must take into account---as one also must do in comparing eq.~(14) of \cite{Wall2015} with eq.~(1.3) of \cite{Dong2013}---that Dong works in Euclidean signature and uses a complex $z$ coordinate.} with the linearization of (A.44) of \cite{Dong2018}.


\bibliography{dyn_bh_ent}

\end{document}